\def\adl@drawiv#1#2#3{%
	\hskip0
	\tabcolsep
	\xleaders#3{#2 0\@tempdimb #1{1}#2 0.5\@tempdimb}%
	#2\z@ plus1fil minus1fil\relax
	\hskip0\tabcolsep}
\newcommand{\cdashlinelr}[1]{%
	\noalign{\vskip\aboverulesep
		\global\let\@dashdrawstore\adl@draw
		\global\let\adl@draw\adl@drawiv}
	\cdashline{#1}
	\noalign{\global\let\adl@draw\@dashdrawstore
		\vskip\belowrulesep}}
\pgfplotsset{compat=1.14}
\let\mathbb=\mathds
\def\d{{\text {\rm d}}}
\DeclareMathOperator*{\argmax}{arg\,max}
\DeclareFontFamily{U}{mathx}{}
\DeclareFontShape{U}{mathx}{m}{n}{<-> mathx10}{}
\DeclareSymbolFont{mathx}{U}{mathx}{m}{n}
\DeclareMathAccent{\widehat}{0}{mathx}{"70}
\DeclareMathAccent{\widecheck}{0}{mathx}{"71}
\DeclareMathOperator{\Tr}{Tr}
\DeclareMathOperator{\spec}{spec}
\DeclareMathOperator{\e}{\mathrm{e}}
\def\I{\mathbf{1}}
\newcommand{\proj}[1]{\left\{#1\right\}} 
\newcommand{\be}{{\mathbf e}}
\newcommand{\tr}{\operatorname{Tr}}
\newcommand{\Renyi}{{}R\'{e}nyi{ }}
\def\cF{{\mathcal{F}}}
\def\0{{\mathbf{0}}}
\def\1{{\mathbf{1}}}
\def\2{{\mathbf{2}}}
\def\3{{\mathbf{3}}}
\def\4{{\mathbf{4}}}
\def\5{{\mathbf{5}}}
\def\6{{\mathbf{6}}}
\def\7{{\mathbf{7}}}
\def\8{{\mathbf{8}}}
\def\9{{\mathbf{9}}}
\def\bbR{\mathbb{R}}
\def\be{\begin{equation}}
	\def\ee{\end{equation}}
\def\bea{\begin{eqnarray}}
	\def\eea{\end{eqnarray}}
\theoremstyle{plain}
\theoremstyle{definition}
\newtheorem{defn}{Definition} 
\theoremstyle{remark}
\newtheorem{remark}{Remark}
\newcommand{\opnorm}{\@ifstar\@opnorms\@opnorm}
\newcommand{\@opnorms}[1]{%
	$\left|\mkern-1.5mu\left|\mkern-1.5mu\left|
	#1
	\right|\mkern-1.5mu\right|\mkern-1.5mu\right|$
}
\newcommand{\@opnorm}[2][]{%
	\mathopen{#1|\mkern-1.5mu#1|\mkern-1.5mu#1|}
	#2
	\mathclose{#1|\mkern-1.5mu#1|\mkern-1.5mu#1|}
}
\tikzset{>={Latex[length=4,width=4]}} 
\colorlet{mylightblue}{blue!5!white}
\colorlet{mydarkblue}{blue!30!black}
\colorlet{myblue}{blue!50!black}
\colorlet{myred}{red!50!black}
\colorlet{mydarkred}{red!30!black}
\colorlet{mydarkgreen}{green!30!black}
\newcommand{\sh}{\kern-0.08em$^\textbf{\#}$\hspace{-3pt}}
\renewcommand{\b}{\kern-0.06em$\flat$}
\begin{document}
	
	\let\origmaketitle\maketitle
	\def\maketitle{
		\begingroup
		\def\uppercasenonmath##1{} 
		\let\MakeUppercase\relax 
		\origmaketitle
		\endgroup
	}
	
	\title{\bfseries \Huge{ 
			Layer Cake Representations for Quantum Divergences
	}}
	
	\author{ \normalsize 
		{Po-Chieh Liu$^\ast$},
		{Christoph Hirche$^\dagger$}, and
		{Hao-Chung Cheng$^\ast$}
	}
	\address{\small  	
		$^\ast$ Department of Electrical Engineering and Graduate Institute of Communication Engineering,\\ National Taiwan University, Taipei 106, Taiwan (R.O.C.) \\
		$^\dagger$ Institute for Information Processing (tnt/L3S), Leibniz Universit\"at Hannover, Germany
	}
	
	
	\date{\today}
	\begin{abstract}
		Defining suitable quantum extensions of classical divergences often poses a challenge due to the non-commutative nature of quantum information. In this work, we propose a new approach via what we call the layer cake representation. The resulting quantum \Renyi and $f$-divergences are then proven to be equivalent to those recently defined via integral representations. Nevertheless, the approach can provide several insights. We give an alternative proof of the integral representation of the relative entropy by Frenkel and prove a conjecture regarding a trace expression for the \Renyi divergence. Additionally, we give applications to error exponents in hypothesis testing, a new Riemann-Stieltjes type integral representation and a variational representation. 
	\end{abstract}

	\maketitle
	\tableofcontents
	
	\section{Introduction} \label{sec:introduction}
	
	
	Quantum divergences are a kind of distance measure that quantifies how distinguishable two quantum systems are.
	When the states describing the quantum systems are mutually commutative, the quantum divergences reduce to some statistical measure of the associated eigenvalues of the states. 
	Even though there have been a wealth of statistical measures in information theory, how to define operationally relevant quantum divergences that inherit desirable properties (such as the data-processing inequality) is nonetheless highly nontrivial.
	
	Some of the known quantum divergences are defined in an axiomatic way~\cite{AD15}, \cite[\S 4]{Tom16}, following the analog approach for the well known classical \Renyi divergence~\cite{Ren61}.
	Some are introduced in an operational way.
	For example, the measured \Renyi divergence $\widecheck{D}_{\alpha}(\rho\Vert\sigma)$ is the maximal \Renyi divergence between the output distributions among all quantum measurements (i.e., a quantum-classical channel) performed on the states $(\rho,\sigma)$
	\cite{Don86, HP91}.
	On the other hand, the maximal \Renyi divergence $\widehat{D}_{\alpha}(\rho\Vert\sigma)$ is the minimal \Renyi divergence between all input distributions that can prepare the states $(\rho,\sigma)$ via a classical-quantum channel \cite{Matsumoto} (see also \cite{Hia19}).
	
	To see how one can generalize a statistical metric to its quantum version in an algebraic way, let us first recall the definitions of the two most fundamental classical divergences for probability distributions $P$ and $Q$---the \Renyi divergence $D_{\alpha} (P\Vert Q)$, $\alpha \in (0,1)\cup(1,\infty)$ \cite{Ren61} and the $f$-divergence $D_{f} (P\Vert Q)$ for a convex function $f$ with $f(1) = 0$:
	\begin{align*}
		D_{\alpha} (P\Vert Q) &:= \frac{1}{\alpha-1} \log Q_{\alpha} (P\Vert Q),
		\quad Q_{\alpha} (P\Vert Q) := \mathds{E}_{Q}\left[ \left( \frac{\d P}{\d Q} \right)^{\alpha} \right];
		\\
		D_{f} (P\Vert Q) &:= \mathds{E}_{Q}\left[ f \left( \frac{\d P}{\d Q} \right) \right],
	\end{align*}
	where $P$ is required to be absolutely continuous with respect to $Q$ (denoted as $P\ll Q$) unless $\alpha \in (0,1)$, and $Q_{\alpha}(P\Vert Q)$ is called the quasi \Renyi divergence.
	The term $\nicefrac{\d P}{\d Q}$ is the Radon--Nikodym derivative for $P\ll Q$, and it is also known as \emph{likelihood ratio} in statistics and information theory.
	Thereby, the essential question is:
	\[
	\textit{How to properly generalize a function of a likelihood ratio to the quantum scenario?}
	\]
	
	However, there is no unique way to define the Radon--Nikodym derivative for noncommutative quantum states $\rho$ and  $\sigma$.
	One of the attempts is Araki's \emph{relative modular operator} 
	\begin{align}
		\Delta_{\rho,\sigma}: X \mapsto \sum_{\lambda \in \texttt{spec}(\rho)}  \sum_{\mu \in \texttt{spec}(\mu)}  
		\frac{\lambda}{\mu}P_{\lambda} X Q_{\mu} 
	\end{align} given the spectral decompositions $\rho = \sum_{\lambda \in \texttt{spec}(\rho)} \lambda P_{\lambda}$ and $\sigma = \sum_{\mu \in \texttt{spec}(\sigma)} \mu Q_{\mu} $ \cite{Ara76, Ara77}.
	Petz then introduced the so-called (quasi) Petz--\Renyi divergence and the corresponding $f$-divergence: 
	\cite{Pet85, Pet86, Petz_book_1993, HMP+11, Hia18}
	\begin{align*}
		\bar{Q}_{\alpha}(\rho\Vert\sigma)
		&:= \Tr\left[\sigma^{1/2} \cdot (\Delta_{\rho,\sigma})^{\alpha} \sigma^{1/2} \right] = \Tr[\rho^{\alpha} \sigma^{1-\alpha}],
		\\
		\bar{D}_f(\rho\Vert\sigma)
		&:= \Tr\left[\sigma^{1/2} \cdot f(\Delta_{\rho,\sigma}) \sigma^{1/2} \right],
		\\
		f(\Delta_{\rho,\sigma}) &\colon X \mapsto  \sum_{\lambda \in \texttt{spec}(\rho)} \sum_{\mu \in \texttt{spec}(\mu)} f\left( \nicefrac{\lambda}{\mu}\right) P_{\lambda} X Q_{\mu}.
	\end{align*}
	
	Another way is to consider the tracial version of Pusz and Woronowicz's matrix geometric mean \cite{PW75} or, more generally, Kubo and Ando's matrix mean \cite{KA80} as follows,
	\begin{align*}
		\widehat{Q}_{\alpha}(\rho\Vert\sigma) 
		&= \Tr\left[ \sigma^{1/2} \left(\sigma^{-1/2} \rho \sigma^{-1/2}\right)^\alpha \sigma^{1/2} \right],
		\\
		\widehat{D}_{f}(\rho\Vert\sigma) 
		&= \Tr\left[ \sigma^{1/2} f\left(\sigma^{-1/2} \rho \sigma^{-1/2}\right) \sigma^{1/2} \right].
	\end{align*}
	Here, $\sigma^{-1/2} \rho \sigma^{-1/2}$ may be considered as a kind of noncommutative quotient.
	Interestingly, for $\alpha \in (0,2]$, the induced \Renyi divergence coincides with the maximal \Renyi divergence $\widehat{D}_{\alpha}(\rho\Vert\sigma) $ \cite{Matsumoto, BS82}.
	
	The sandwiched \Renyi divergence \cite{MDS+13, WWY14} is another example with operational meaning, which can also be expressed in terms of the noncommutative quotient \cite{CG22, CG22b}:\footnote{We note that there are alternative ways of writing the sandwiched \Renyi divergence in terms of other different noncommutative $L_p$ norms \cite{Jen18, BTS18} (see also \cite[\S 3.3]{Hia21}).}
	\begin{align*}
		\widetilde{Q}_{\alpha} (\rho\Vert\sigma)
		&:= \left\| \sigma^{-1/2} \rho \sigma^{-1/2} \right\|_{\alpha, \sigma}^{\alpha},
	\end{align*}
	where $\left\| X \right\|_{\alpha, \sigma} := (\Tr[ ( \sigma^{\nicefrac{1}{2\alpha}} X \sigma^{\nicefrac{1}{2\alpha}} )^{\alpha} ]  )^{1/\alpha}$ is the $\sigma$-weighted norm for self-adjoint operators (See e.g., \cite[\S 3]{Kos84}).
	
	A completely different approach was recently taken in~\cite{hirche2023quantum}, where instead an integral representation for $f$-divergences from~\cite{sason2016f} was generalized to the quantum setting. This entirely circumvented the need of giving a quantum generalization of the Radon-Nikodym derivative. 
	The resulting quasi \Renyi divergence is defined as follows, 
	\begin{align}\label{Eq:int-quasi-petz}
		Q_{\alpha}(\rho\Vert\sigma)
		= 1+\alpha (\alpha-1) \int_1^\infty \gamma^{\alpha-2} E_{\gamma}(\rho \Vert \sigma) + \gamma^{-\alpha -1} E_{\gamma}(\sigma \Vert \rho) \, \d \gamma,
	\end{align}
	where $E_{\gamma}(A\Vert B) := \Tr\left[ (A-\gamma B)_+\right]$ is the quantum Hockey-Stick divergence.
	More generally, the corresponding $f$-divergence was defined by, 
	\begin{align}\label{Eq:int-f-div}
		D_{f} (\rho\Vert \sigma)
		&= \int_1^\infty  f''(\gamma) E_{\gamma}(\rho\Vert\sigma) 
		+ \gamma^{-3} f''(\gamma^{-1}) E_{\gamma}(\sigma\Vert\rho) \, \d \gamma
		= \int_0^\infty  f''(\gamma) E_{\gamma}(\rho\Vert\sigma)\, \d \gamma- \int_0^1  f''(\gamma)(1-\gamma) \d\gamma.
	\end{align}

	\bigskip
	
	In this work, we introduce a new approach to generalizing functions of the likelihood ratio to the quantum setting.
	This will result in seemingly new \Renyi and $f$-divergences. 
	We subsequently establish their equivalence to the definitions given in Equations~\eqref{Eq:int-quasi-petz} and~\eqref{Eq:int-f-div}.

	Our starting point is the so-called \emph{layer cake representation} of any
	nonnegative measurable function $g$
	\cite[Theorem 1.13]{LM01}:
	\begin{align*}
		g(x) &= \int_0^\infty \proj{ g(x) > \gamma}  \d \gamma,
		\\
		\{ \text{Event} \} &:= \begin{cases} 1, & \text{{Event} is true,} \\
			0, & \text{otherwise}.
		\end{cases}
	\end{align*}
	The layer cake representation of the powered $g$ also holds,
	\begin{align*}
		g(x)^{\alpha} &= \alpha \int_0^\infty \gamma^{\alpha-1} \proj{g(x) > \gamma} \, \d \gamma, \quad \alpha > 0.
	\end{align*}
	Viewing the measurable function $g$ as the Radon--Nikodym derivative (i.e., $g = \frac{\d P}{\d Q}$), the classical quasi \Renyi divergence admits the form:
	\begin{align*}
		Q_{\alpha}(P\Vert Q) = \mathds{E}_{Q}\left[ \left( \frac{\d P}{\d Q} \right)^{\alpha} \right]
		= \alpha \int_0^\infty \gamma^{\alpha-1} \Pr\nolimits_{Q}\proj{\frac{\d P}{\d Q} > \gamma}  \d \gamma.
	\end{align*}
	This then leads us to proposing the \emph{quantum layer cake \Renyi divergence}:
	\begin{tcolorbox}[size = small, colback=orange!1.5!white, colframe=orange, halign=center, center, boxrule=0.5pt]
		\begin{align} \label{eq:defn:Renyi}
			D_{\alpha} (\rho\Vert\sigma)
			:= \frac{1}{\alpha-1} \log Q_{\alpha}(\rho\Vert\sigma),
			\quad
			Q_{\alpha}(\rho\Vert\sigma)
			:= \alpha \int_0^\infty \gamma^{\alpha-1} \Tr\left[ \sigma  \proj{ \rho > \gamma \sigma }\right] \d \gamma, \quad \alpha > 0,
		\end{align}
	\end{tcolorbox}
	\noindent where $\proj{ \rho > \gamma \sigma }$ means the orthogonal projection onto the positive part of $\rho - \gamma \sigma$.
	
	Similarly, via the layer cake representation with a convex function $f$, i.e.,
	\begin{align*}
		f(g(x)) &= f(0) + \int_0^\infty \proj{ g(x) > \gamma} \, \d f(\gamma)
	\end{align*}
	(provided it exists),
	we define the corresponding quantum $f$-divergence for states $\rho\ll\sigma$ (i.e., the support of $\rho$ is contained in that of $\sigma$):
	\begin{tcolorbox}[size = small, colback=orange!1.5!white, colframe=orange, halign=center, center, boxrule=0.5pt]
		\begin{align} \label{eq:defn:f}
			D_f(\rho\Vert\sigma)
			:= f(0) + \int_0^\infty \Tr\left[ \sigma \proj{ \rho > \gamma \sigma}\right] \, \d f(\gamma)
			= f(0) + \int_0^\infty f'(\gamma) \Tr\left[ \sigma \proj{ \rho > \gamma \sigma}\right] \, \d \gamma,
		\end{align}
	\end{tcolorbox}
	\noindent Here, the definition is in terms of the Riemann–Stieltjes integral;
	the last equality gives a Riemann integral if $f$ is differentiable and the integrand is Riemann integrable.
	A special case for $f(x)=x\log x$ then yields a quantum relative entropy:
	\begin{align}
		D_{x\log x}(\rho\Vert\sigma)
		= 1 + \int_0^\infty \log \gamma \Tr\left[ \sigma \proj{\rho > \gamma \sigma} \right] \, \d \gamma.
	\end{align}
	
	It is not immediately clear yet if the proposals in \eqref{eq:defn:Renyi} and \eqref{eq:defn:f} satisfy the data-processing inequality.
	Our first result is to show that the (quasi) \Renyi divergence coincides with the integral representation given in \cite{hirche2023quantum}, as given in Equations~\eqref{Eq:int-quasi-petz}, and the quantum $f$-divergence in~\eqref{Eq:int-f-div} when $f$ is twice differentiable. This justifies using the same notation and will be formally proven in Section~\ref{sec:relation}.
	Hence, the definitions given in \eqref{eq:defn:Renyi} and \eqref{eq:defn:f} satisfy the data-processing inequality and several other desirable properties proven in~\cite{hirche2023quantum,beigi2025some}.
	
	For $\alpha \in (0,1)$, Ref.~\cite{beigi2025some} also provides an alternative formula for the \Renyi divergence,
	\begin{align}
		Q_{\alpha}(\rho\Vert\sigma) = \alpha \int_0^\infty \Tr\left[ \left( \frac{\I}{\sqrt{\rho+t\I}} \sigma \frac{\I}{\sqrt{\rho+t\I}} \right)^{1-\alpha} \right] - (1+t)^{\alpha-1} \, \d t + 1, 
		\quad \alpha \in (0,1),
	\end{align}
	and employs the Araki--Lieb--Thirring inequality to show that
	\begin{align} \label{eq:relation_Petz}
		Q_{\alpha}(\rho\Vert\sigma)
		\leq \bar{Q}_{\alpha} (\rho\Vert\sigma), \quad \forall\,\alpha \in (0,1).
	\end{align}
	Furthermore, the authors conjectured a trace expression for $\alpha > 1$ and provided a proof for all integers $\alpha \geq 2$ and the full range when $\rho$ is a pure state.
	In this paper, we resolve the conjecture and show that, 
	\begin{align}
		Q_{\alpha}(\rho\Vert\sigma) = (\alpha-1) \int_0^\infty \Tr\left[ \left( \frac{\I}{\sqrt{\sigma+t\I}} \rho \frac{\I}{\sqrt{\sigma+t\I}} \right)^{\alpha} \right]  \, \d t, 
		\quad \forall, \alpha >1,
	\end{align}
	which then complements the relation to the sandwiched quasi divergence:
	\begin{align}
		Q_{\alpha}(\rho\Vert\sigma)
		\leq \widetilde{Q}_{\alpha} (\rho\Vert\sigma), \quad \forall\, \alpha >1 .
	\end{align}
	Additionally, for $\alpha \to 1$, we employ the recently established operator layer cake theorem~\cite{preparation} to show that
	\begin{align}
		\lim_{\alpha \to 1} D_{\alpha}(\rho\Vert\sigma)
		= \int_1^\infty \frac{1}{\gamma} \Tr\left[ \rho \left( \proj{\rho\geq \gamma \sigma} - \proj{\sigma > \gamma \rho} \right) \right] \,\d\gamma
		= D(\rho\Vert\sigma),
	\end{align}
	where the last equality gives the Umegaki relative entropy $ D(\rho\Vert\sigma)
	\coloneq \Tr\left[  \rho \left( \log \rho - \log \sigma \right) \right]$~\cite{Ume62} and hence provides an alternative proof to Frenkel's integral formula~\cite{frenkel2022integral}.
	
	\bigskip
	
	We then use the new representations in Equations~\eqref{eq:defn:Renyi} and~\eqref{eq:defn:f} to establish some interesting properties. Notably, the proposed layer cake representation in \eqref{eq:defn:f} naturally leads to a \emph{quantum Markov inequality}: for nondecreasing $f$ satysfying $f(0)\geq 0$, 
	\begin{align}
		\Tr\left[ \sigma \proj{ \rho > c \sigma} \right] \leq 
		\frac{D_f(\rho\Vert\sigma)}{f(c)}, \quad c > 0.
	\end{align}
	In terms of the \Renyi divergence in \eqref{eq:defn:Renyi}, we obtain the following error estimates for quantum hypothesis testing using the quantum Neyman--Pearson tests: for $a \in \mathds{R}$,
	\begin{align}
		\begin{cases}
			\Tr\left[ \sigma \proj{\rho > \e^a \sigma} \right] 
			\leq \e^{- a - (\alpha-1)\left[ a - D_\alpha(\rho\|\sigma)\right]}, & \alpha\in(0,1)\cup(1,\infty) 
			\\
			\Tr[\rho\proj{\rho > \e^a \sigma} ] 
			\leq  \e^{-(\alpha-1)\left[a-D_\alpha(\rho\|\sigma)\right]}, & \alpha \in (0, 1)
			\\
			\tr[\rho\proj{\rho > \e^a \sigma }] 
			\geq 1 - \e^{-(\alpha-1)\left[a - D_\alpha(\rho\|\sigma)\right]}, & \alpha > 1
		\end{cases}.
	\end{align}
	By appropriately choosing the threshold $a$, the above bounds imply (and sharpen) the quantum Chernoff bound \cite{ACM+07, ANS+08} in view of the relation \eqref{eq:relation_Petz}.
	
	Lastly, we show that the proposed divergences in \eqref{eq:defn:Renyi} and \eqref{eq:defn:f} admit a  \textit{Riemann--Stieltjes integral representation}:
	\begin{align}
		D(\rho\Vert\sigma)
		&= \int_0^\infty \log \gamma \, \d P(\gamma)
		= \int_0^\infty \gamma \log \gamma \, \d Q(\gamma),
		\\
		Q_{\alpha}(\rho\Vert\sigma)
		&= \int_0^\infty \gamma^{\alpha} \, \d Q(\gamma),
		\\
		D_{f}(\rho\Vert\sigma)
		&= \int_0^\infty f(\gamma) \, \d Q(\gamma),
	\end{align}
	where we call $P$ and $Q$ the \emph{Riemann--Stieltjes cumulative distributions} induced from the states $(\rho,\sigma)$ satisfying $\rho\ll\sigma$:
	\begin{align}
		P(\gamma) 
		= \Tr\left[ \rho \proj{\rho \leq \gamma \sigma}\right],
		\quad
		Q(\gamma) 
		= \Tr\left[ \sigma \proj{\rho \leq \gamma \sigma}\right].
	\end{align}
	We consider the above formulas the most intuitive version in terms of a quantum generalization; we refer the readers to Figures~\ref{figure:RS-integral_quantum_increasing} and \ref{figure:increasing}
	for interpretations.
	With this representation, we immediately obtain the \emph{quantum $f$-divergence duality}:
	\begin{tcolorbox}[size = small, colback=orange!1.5!white, colframe=orange, halign=center, center, boxrule=0.5pt, width = 0.6\linewidth]
		\begin{align*}
			D_f(\rho \Vert \sigma) = \sup_{g: [0,\infty)\to \mathds{R}}  \big\{
			\mathds{E}_{P} [g] +  \mathds{E}_{Q}[f^\star \circ g] \big\},
		\end{align*}
	\end{tcolorbox}
	\noindent where $f^\star$ is the convex conjugate of $f$.
	This then leads to new variational formulas for important quantities such as the trace distance and the chi-square divergence.
	
	This paper is organized as follows.
	Section~\ref{sec:preliminaries} presents some preliminaries and important properties regarding projections.
	In Section~\ref{sec:relation}, we formally define the quantum $f$-divergence and \Renyi divergence, and prove the relations to the known quantities.
	In Section~\ref{sec:exponents}, we show how the proposed quantum \Renyi divergence applies to the error exponents using quantum Neyman--Pearson tests.
	In Section~\ref{sec:RS-integral}, we show that our proposals admit certain Riemann–Stieltjes integral representations.
	In Section~\ref{sec:f-duality}, we prove the quantum $f$-divergence duality.
	We conclude the paper in Section~\ref{sec:conclusions}.

	\newpage
	\begin{table*}[ht]
		\setlength{\abovedisplayskip}{1.5ex}
		\setlength{\belowdisplayskip}{\abovedisplayskip}
		\fbox{%
			\begin{minipage}[t][49.3\baselineskip][t]{\textwidth} 
				\centerline{{\large\textbf{Summary}}}
				\vspace{1.5\baselineskip}
				
				{ \scriptsize 
					\begin{multicols}{2}

						\begin{samepage}
							\textbf{Hockey-Stick divergence}:
							$E_\gamma(A \Vert B):= \Tr\left[ \left( A - \gamma B \right)_+\right]$.
							\begin{enumerate}[1., leftmargin=2em]
								\item 
								Data-processing inequality:
								for any positive and trace-preserving map $\mathscr{N}$,
								$E_\gamma(\mathscr{N}(A) \Vert \mathscr{N}(B)) \leq E_\gamma(A \Vert B)$ .
								
								\item Positivity: $E_\gamma(A\| B)\geq 0$. 
								
								\item 
								Convexity: 
								$\gamma \mapsto E_\gamma(A \Vert B)$ is convex nonincreasing.
								
								\item 
								$E_\gamma(A \Vert B) = \Tr[A - A\wedge \gamma B] = \frac12\left[ \left\|A-B\right\|_1 +\Tr[A-B] \right]$.
								
								\item 
								Semi-derivatives:
								$E_{\gamma^+}'(A \Vert B) =  -\Tr\left[ B \proj{\gamma B < A }\right]$, 
								\\
								(Lem.~\ref{Lem:HS-derivative})
								{\color{white}space}\,
								$E_{\gamma^-}'(A \Vert B) =  -\Tr\left[ B \proj{\gamma B \leq A }\right]$,
								\begin{align*}
									-\int_0^\infty E_{\gamma}'(A\Vert B) \, \d \gamma = \Tr[A], \quad B>0.
								\end{align*}
							\end{enumerate}
						\end{samepage}
						
						\begin{samepage}
							\textbf{Quantum relative entropy}:
							For non-zero $A,B\geq0$, $A\ll B$,
							\begin{align*}
								&D(A \Vert B) 
								\coloneq \Tr\left[ A (\log A - \log B) + B-A\right] 
								\\
								&\,\,\,\overset{\text{\cite{frenkel2022integral}}}{=}\int_{-\infty}^\infty \frac{\d t}{|t|(t-1)^2} \tr\left[ ((1-t) A+t B)_- + A-B\right]
								\\
								&\!\!\!\!\overset{\text{Prop.~\ref{prop:algernative_proof_Frenkel2}}}{=} \int_1^\infty \frac{1}{\gamma}  E_\gamma (\rho \Vert \sigma) + \frac{1}{\gamma^2} E_\gamma (\sigma \Vert \rho) \, \d \gamma
								\\
								&\!\!\!\!\overset{\text{Prop.~\ref{prop:algernative_proof_Frenkel}}}{=}  \int_1^\infty \frac{1}{\gamma}  \Tr\left[ A \left( \left\{ A > \gamma B\right\} - \left\{ B > \gamma A \right\} \right) \right] \d \gamma + \Tr[B-A]
								\\
								&\,\,\,\,= \Tr[A] + \int_0^\infty \log \gamma \Tr\left[ B  \left\{  A > \gamma B \right\} \right]\d \gamma
								\\
								&\!\!\!\!\overset{\text{Prop.~\ref{prop:relative_entropy_formula2}}}{=} 
								\!\!\!\!\!- \!\int_0^\infty \!\!\! \log \gamma \, \d \tr\left[A\{A>\gamma B\}\right]
								= \int_0^\infty \! \log \gamma \, \d \tr\left[A\{A\leq\gamma B\}\right].
							\end{align*}
						\end{samepage}
						
							\textbf{\Renyi divergence}:
							Let $A,B\geq 0$.
							Define \begin{align*}
								D_{\alpha}(A\Vert B) &\coloneq \frac{1}{\alpha-1}  \log Q_{\alpha}(A \Vert B), \\
								Q_{\alpha}(A \Vert B) &\coloneq 1 + (\alpha-1)H_{\alpha}(A \Vert B).
							\end{align*}

							\underline{For $\alpha \in (0,1)$, or for $\alpha \in (1,\infty)$ and $A\ll B$},
							\begin{align*}
								&Q_{\alpha}(A \Vert B) 
								=
								\alpha \int_{0}^{\infty} \gamma^{\alpha-1}  \Tr\left[ B  \left\{  A > \gamma B  \right\} \right] \d\gamma
								\\  &\!\!\!\overset{\text{Prop.~\ref{prop:f-divergence_RS}}}{=} \!\!\!\!-  \int_{0}^{\infty} \gamma^{\alpha} \, \d \Tr\left[ B \proj{ A > \gamma B} \right]
								= \int_{0}^{\infty} \gamma^{\alpha} \, \d \Tr\left[ B \proj{ A \leq \gamma B} \right]
								\\
								&\!\!\!\overset{\text{Lem.~\ref{lemm:change-of-measure}}}{=} \!\!\!\!-  \int_{0}^{\infty} \!\! \gamma^{\alpha-1} \d \Tr\left[ A \proj{ A > \gamma B} \right]
								=\! \int_{0}^{\infty} \!\!\gamma^{\alpha-1}  \d \Tr\left[ A \proj{ A \leq \gamma B} \right]
								\\
								&\!\!\!\overset{\text{Thm.~\ref{theorem:f_divergence_layer-cake}}}{=} \alpha (\alpha-1) \int_1^\infty \gamma^{\alpha-2} E_{\gamma}(A\Vert B) + \gamma^{-\alpha -1} E_{\gamma}(B\Vert A) \, \d \gamma + 1.
							\end{align*}
							
							\underline{For $0<\alpha<1$},
							\begin{align*}
								&Q_\alpha(A\Vert B) 
								= Q_{1-\alpha}(B\Vert A)
								\\
								&\overset{\text{Lem.~\ref{lemm:alpha-representation}}}{=} (1-\alpha) \int_0^\infty \gamma^{\alpha-2} \tr[A\{A <\gamma B \}] \d\gamma 
								\\
								&\overset{\text{\cite{beigi2025some}}}{=} -\alpha \int_0^\infty \Tr\left[ \left( \frac{\I}{\sqrt{A+t\I}} B \frac{\I}{\sqrt{A+t\I}} \right)^{1-\alpha} \right] \!- (1+t)^{\alpha-1} \, \d t +\!1.
							\end{align*}
							
							\underline{For $\alpha>1$ and $A\ll B$},
							\begin{align*}
								Q_\alpha(A\Vert B) &\overset{\text{Lem.~\ref{lemm:alpha-representation}}}{=} (\alpha-1) \int_0^\infty \gamma^{\alpha-2} \tr[A\{A>\gamma B\}] \d\gamma 
								\\ &\overset{\text{Lem.~\ref{lemm:alpha-representation}}}{=} (\alpha-1) \int_0^\infty \gamma^{-\alpha} \tr[A\{\gamma A>B\}] \d\gamma
								\\
								&\overset{\text{Thm.~\ref{Theo:trace-renyi-PO}}}{=} (\alpha-1) \int_0^\infty \Tr\left[ \left( \frac{\I}{\sqrt{B+t\I}} A \frac{\I}{\sqrt{B+t\I}} \right)^{\alpha} \right]\d t.
							\end{align*}
						
						\begin{samepage}
							\textbf{Relations to other (quasi) divergences}:
							\begin{align*}
								\begin{cases}
									Q_{\alpha}(A \Vert B ) \leq \Tr\left[ A^{\alpha} B^{1-\alpha} \right] & \alpha \in (0,1) \quad \text{\cite{beigi2025some}}
									\\
									Q_{\alpha}(A \Vert B ) \leq \Tr\left[ \left( B^{\frac{1-\alpha}{2\alpha}}A  B^{\frac{1-\alpha}{2\alpha}} \right)^{\alpha} \right]  
									& \alpha >1 \quad \text{ \scriptsize (Prop.~\ref{prop:relation_sandwiched})}
								\end{cases}.
							\end{align*}
						\end{samepage}
						
						
						\begin{samepage}
							\textbf{Quantum $f$-divergence}: For $B>0$ and convex $f$ with $f(1)=0$,
							\begin{align*}
								D_f(A\Vert B)
								&\coloneq f(0)\Tr[B] + \int_0^\infty \Tr\left[ B \proj{A>\gamma B} \right] \, \d f(\gamma)
								\\
								&\!\!\!\!\!\!\!\!\!\!\!\!\!\!\!\!\!\!\!\!\!\!\!\!\overset{\text{Prop.~\ref{prop:genearl_f}}}{=} f(0) \Tr[B]+ \int_0^\infty \Tr\left[ B \frac{1}{B+t\I} \rho \frac{1}{B+t\I} f'\left( A \frac{1}{Bt\I} \right) \right] \, \d t
								\\
								&\!\!\!\!\!\!\!\!\!\!\!\!\!\!\!\!\!\!\!\!\!\!\!\!\overset{\text{Prop.~\ref{prop:f-divergence_RS}}}{=} -\int_0^\infty \!\!\!\! f(\gamma) \, \d \Tr\left[ B \proj{A>\gamma B} \right] = \int_0^\infty \!\!\!\!f(\gamma) \, \d \Tr\left[ B \proj{A\leq\gamma B} \right]
								\\
								&= f(0)\Tr[B] + \int_0^\infty f'(\gamma) \Tr\left[ B \proj{A>\gamma B} \right] \, \d \gamma
								\\
								&\!\!\!\!\!\!\!\!\overset{\text{Thm.~\ref{theorem:f_divergence_layer-cake}}}{=} \int_1^\infty  f''(\gamma) E_{\gamma}(\rho\Vert\sigma) 
								+ \gamma^{-3} f''(\gamma^{-1}) E_{\gamma}(\sigma\Vert\rho) \, \d \gamma.
							\end{align*}
						\end{samepage}
						
						\begin{samepage}
							\textbf{Quantum $f$-divergence duality}:
							Let $f$ be a convex differentiable function and $f(0)=0$, and $f^\star(y) \coloneq \sup_{x \in \texttt{dom}(f)} \left\{ y x - f(x) \right\}$ be its convex conjugate.
							\textit{Riemann--Stieltjes cumulative distribution functions} for $\rho\!\ll\!\sigma$:
							\begin{align*}
								P(\gamma) \coloneq \Tr[\rho\proj{\rho\leq \gamma \sigma}],
								\;\;
								Q(\gamma) \coloneq \Tr[\sigma\proj{\rho\leq \gamma \sigma}],
								\;\;\gamma \in [0,\infty).
							\end{align*}
							\begin{equation*}
								D_f(\rho \Vert \sigma) = \sup_{g: [0,\infty)\to \mathds{R}}  \big\{
								\mathds{E}_{P} [g] +  \mathds{E}_{Q}[f^\star \circ g] \big\}
								\quad \text{(Thm.~\ref{theorem:f-duality})}
							\end{equation*}
						\end{samepage}

						\textbf{Noncommutative minimum}:
						For $A,B\geq 0$.
						\begin{align*}
							A \wedge B
							&\coloneq \argmax\limits_{H=H^\dagger} \left\{ \Tr{H} : H\leq A, H\leq B \right\}
							\\
							&=A \{A\leq B \} + B \{B < A  \}
							= \frac12\left[ A + B - |A-B| \right],
							\\
							\!\!\!\Tr\left[ A \wedge B \right]
							&= \inf_{0\leq T\leq \I} \!\Tr[A(\I\!-\!T)] + \Tr[B T]
							=\frac{\Tr[A \!+\! B] - \left\| A \!-\! B\right\|_1}{2}
							\\
							&\overset{\text{\cite{preparation}}}{=} \int_{0}^1 \Tr\left[ A \left\{ u A < B\right\} \right] \d u
							\geq \Tr\left[ A \cdot \mathrm{D} \log [A+B](B)] \right]
							\\
							&\geq \Tr\left[ A (A+B)^{-\frac12} B (A+B)^{-\frac12}\right].
						\end{align*}
						
							

						\begin{samepage}
							\textbf{Logarithm}: For positive definite operators $A,B>0$,
							\begin{align*}
								\log A &- \log B
								= \int_0^\infty (B+t\I)^{-1}(A-B)(A+t\I)^{-1} \d t
								\\
								&\overset{\text{Prop.~\ref{prop:algernative_proof_Frenkel}}}{=} \int_1^\infty \left( \left\{ A > \gamma B\right\} - \left\{ B > \gamma A \right\} \right)\frac{1}{\gamma} \, \d\gamma
								\\
								&\overset{\text{Prop.~\ref{prop:relative_entropy_formula2}}}{=} - \int_0^\infty \log \gamma \, \d \left\{ A > \gamma B \right\} 
								= \int_0^\infty \log \gamma \, \d \left\{ A \leq \gamma B \right\}.
							\end{align*}
						\end{samepage}
						
						\begin{samepage}
							\textbf{Logarithmic derivative}: For $A>0$ and Hermitian $B$,
							\begin{align*}
								\mathrm{D} \log [A](B)
								&\overset{\text{\cite{Lie73}}}{=}  \int_0^1 (A+t\I)^{-1} B (A+t\I)^{-1}
								\\
								&\overset{\text{\cite{SBT16}}}{=} \int_0^\infty A^{-\frac12 - \frac{\mathrm{i}t}{2}} B A^{-\frac12 + \frac{\mathrm{i}t}{2}} \, \d \beta_0(t)
								\\
								&\overset{\text{\cite{preparation}}}{=} \int_0^\infty \{ u A < B \} \, \d u - \int_{-\infty}^0 \{ uA > B \} \, \d u.
							\end{align*}
						\end{samepage}
						
						For $A\geq0$, $B>0$, and Lebesgue-integrable function $h$,
						\begin{align*} 
							\int_0^\infty \proj{ A > \gamma B } h(\gamma) \, \d \gamma
							&\overset{\text{\cite{preparation}}}{=} \int_0^\infty \frac{1}{B+t\I} A \frac{1}{B+t\I} h\left( A \frac{1}{B+t\I} \right) \d t.
						\end{align*} 
						
					\end{multicols}
				} 
			\end{minipage}
		}
	\end{table*}
	\newpage

	\begin{figure}[ht]
		\centering
		\begin{overpic}[width=\linewidth]{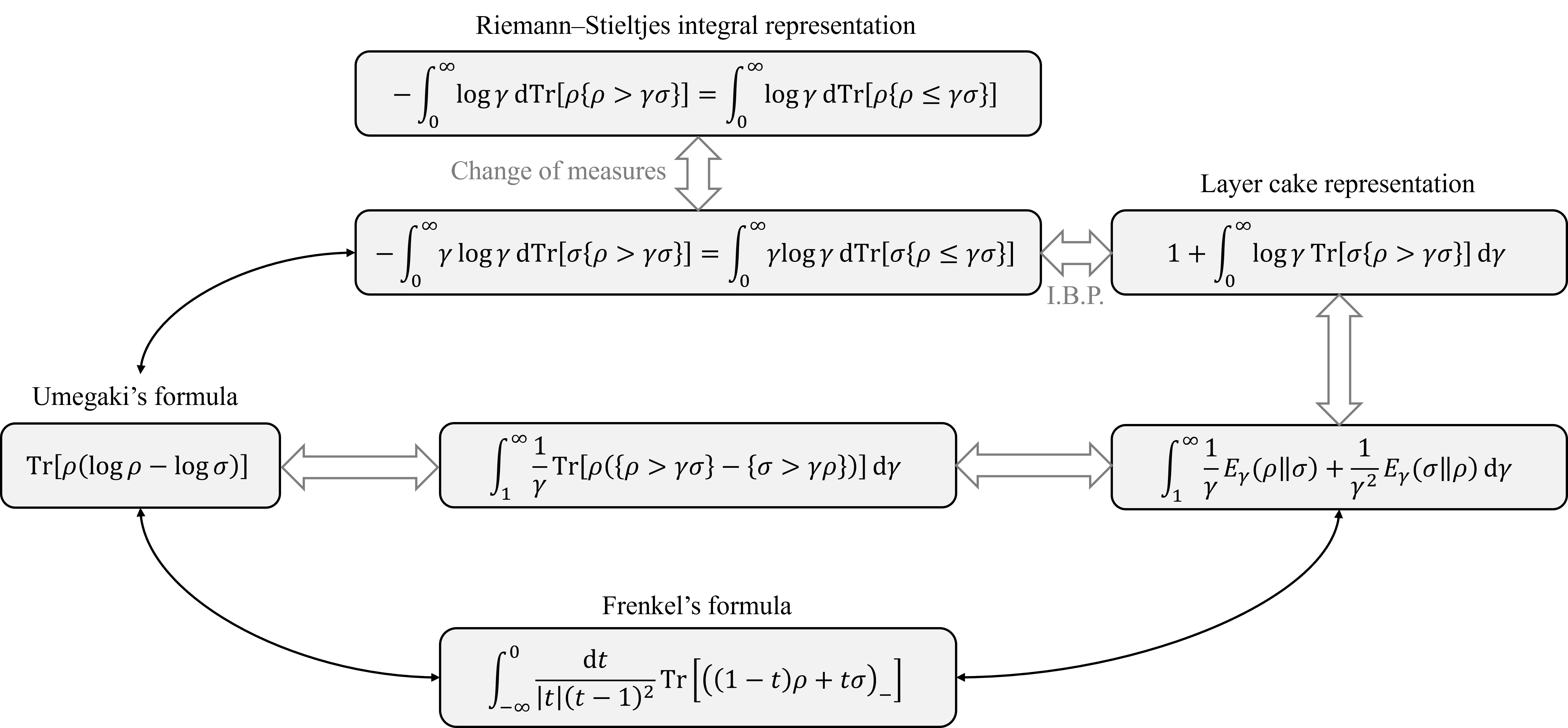}
			\put(79,104){\footnotesize \cite{Ume62}}
			\put(79,38){\footnotesize \cite{frenkel2022integral}}
			\put(255,38){\footnotesize \cite{frenkel2022integral}}
			\put(375,38){\footnotesize \cite{hirche2023quantum}}
			
			\put(101,91){\footnotesize Prop.~\ref{prop:algernative_proof_Frenkel}}
			
			\put(70,155){\footnotesize Prop.~\ref{prop:relative_entropy_formula2}}
			
			\put(434,115){\footnotesize Thm.~\ref{theorem:f_divergence_layer-cake}}
			
			\put(235,175){\footnotesize Lemma~\ref{lemm:change-of-measure}}
		\end{overpic}
		\caption{Schematic relations between the known formulas of the quantum relative entropy $D(\rho\Vert\sigma)$
		} \label{figure:relative_entropy}
	\end{figure} 
	
	\begin{figure}[ht]
		\centering
		\begin{overpic}[width=\linewidth]{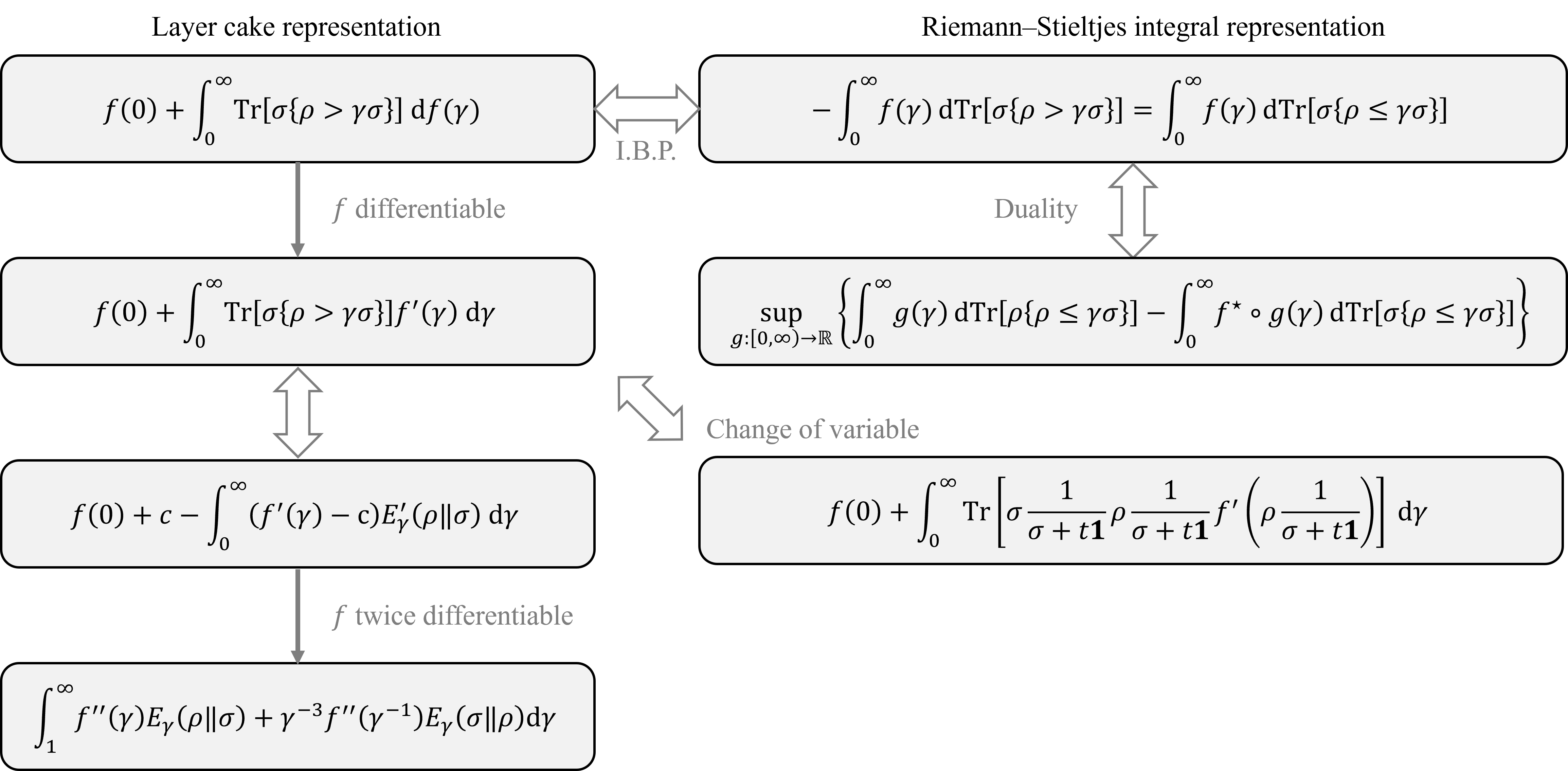}
			
			\put(370,177){\footnotesize Thm.~\ref{theorem:f-duality}}
			
		\end{overpic}
		\caption{Schematic relations between the known formulas of the quantum $f$-divergence $D_{f}(\rho\Vert\sigma)$
		} \label{figure:f-divergence}
	\end{figure} 
	
	
	\section{Preliminaries and basic properties} \label{sec:preliminaries}
	
	In this section, we give some of the necessary definitions and their properties. A central quantity will be the quantum Hockey Stick divergence~\cite{sharma2012strong, hirche2023QDP}, 
	\begin{align}
		E_\gamma(A\|B) = \tr(A-\gamma B)_+.
	\end{align}
	For an extensive list of its properties we refer to~\cite[Section 2.2]{hirche2023quantum}\footnote{In~\cite{hirche2023quantum} and occasionally elsewhere in the literature, the Hockey Stick divergence is defined as $E_\gamma(A\|B) \coloneq \tr(A-\gamma B)_+ - (\tr(A-\gamma B))_+$. For normalized states and $\gamma\geq1$ this coincides with our definition which originated in~\cite{sharma2012strong, hirche2023QDP}.}. In particular, the Hockey Stick divergence satisfies data processing, is monotonically decreasing in~$\gamma$, convex in the operators and recovers the trace distance for $\gamma=1$. The divergence has several applications in quantum information theory, i.e. as a strong converse bound~\cite{sharma2012strong}, for differential privacy~\cite{hirche2023QDP, cheng2024sample} and as the underlying basis of the aforementioned $f$-divergences~\cite{hirche2023quantum}. 
	
	In the following, we give additional properties, in particular with regard to its derivative. We begin however with some preliminary observations regarding the projector that achieves the Hockey Stick divergence. 
	\begin{lemm} \label{lemm:differentiability_projection}
		Let $A$ and $B$ be Hermitian.
		Then, the map $\gamma \mapsto \proj{A > \gamma B}$ is piecewise analytic except in which with singular $A-\gamma B$, right continuous, and right differentiable with a bounded right derivative.
		
		On the other hand, the map $\gamma \mapsto \proj{A \geq \gamma B}$ is piecewise analytic except when singular $A-\gamma B$, left continuous, and left differentiable with a bounded left derivative.
	\end{lemm}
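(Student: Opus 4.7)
The plan is to apply Rellich's real-analytic perturbation theorem for self-adjoint matrices to the pencil $H(\gamma) := A - \gamma B$, which is polynomial in $\gamma \in \mathbb{R}$ and therefore real-analytic. Rellich provides globally real-analytic eigenvalue functions $\lambda_1(\gamma),\ldots,\lambda_n(\gamma) : \mathbb{R} \to \mathbb{R}$ together with a real-analytic orthonormal frame of eigenvectors $v_1(\gamma),\ldots,v_n(\gamma)$, so the rank-one spectral projections $P_i(\gamma) := v_i(\gamma)v_i(\gamma)^{\dagger}$ are themselves real-analytic and $H(\gamma) = \sum_{i=1}^{n}\lambda_i(\gamma)P_i(\gamma)$. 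No positivity hypothesis on $A$ or $B$ is used here; the only input is that $H(\gamma)$ is a one-parameter real-analytic Hermitian family, so the proof begins under exactly the stated assumptions.

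Next I would identify the singular set $\mathcal{S} := \{\gamma \in \mathbb{R} : \det(A - \gamma B) = 0\}$. Since $\det(A - \gamma B)$ is a polynomial in $\gamma$, either $\mathcal{S}$ is finite or $\mathcal{S} = \mathbb{R}$; in the degenerate latter case the common invariant kernel $N := \bigcap_{\gamma}\ker H(\gamma)$ is nontrivial, and since each $P_i$ restricted to $N$ contributes identically zero to $\proj{A > \gamma B}$, we may work on $N^{\perp}$, on which $\mathcal{S}$ becomes finite. Each non-identically-zero analytic $\lambda_i$ has only isolated zeros, so the sign pattern of $\{\lambda_i(\gamma)\}_i$ is constant on each connected component $I \subset \mathbb{R}\setminus\mathcal{S}$; consequently the index set $J_I := \{i : \lambda_i > 0 \text{ on } I\}$ is well-defined and $\proj{A > \gamma B} = \sum_{i \in J_I}P_i(\gamma)$ is a finite sum of real-analytic projections, establishing piecewise analyticity off $\mathcal{S}$.

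At a singular $\gamma_0 \in \mathcal{S}$, the isolated-zero property lets me pick $\varepsilon > 0$ with no nontrivial $\lambda_i$ vanishing on $(\gamma_0, \gamma_0+\varepsilon)$; let $J^+$ be the constant index set of branches positive on this right neighborhood. Then $\proj{A > \gamma B} = \sum_{i \in J^+}P_i(\gamma)$ for $\gamma \in (\gamma_0, \gamma_0+\varepsilon)$, and this expression extends analytically to $\gamma_0$, yielding the right limit $\sum_{i \in J^+}P_i(\gamma_0)$ and the right derivative $\sum_{i \in J^+}P_i'(\gamma_0)$, a bounded self-adjoint operator. For the identification of this right limit with $\proj{A > \gamma_0 B}$ I would invoke the Hellmann-Feynman formula $\lambda_i'(\gamma_0) = -\langle v_i(\gamma_0), B v_i(\gamma_0)\rangle$ (and successive derivatives when $\langle v_i,Bv_i\rangle=0$) to classify the vanishing branches; any branch with $\lambda_i(\gamma_0)=0$ must have $B v_i(\gamma_0)\neq 0$ on $N^{\perp}$, otherwise $v_i(\gamma_0)\in N$, and the leading nonzero derivative then determines whether $i \in J^+$. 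The companion statement for $\proj{A \geq \gamma B}$ follows from the mirror argument on $(\gamma_0 - \varepsilon, \gamma_0)$, with the convention that the projection at $\gamma_0$ includes the zero eigenvalues, giving left continuity, left differentiability, and boundedness of the left derivative by the same mechanism. The main obstacle is the careful bookkeeping of vanishing branches at singular points, which Rellich's global analytic frame combined with Hellmann-Feynman handles transparently without requiring $B \succeq 0$.
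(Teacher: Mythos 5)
Your overall strategy is the same as the paper's: invoke Rellich/Kato analytic perturbation theory to get globally analytic eigenvalue branches $\lambda_i(\gamma)$ and eigenprojections $P_i(\gamma)$ for the pencil $A-\gamma B$, observe that sign patterns are locally constant off the singular set, and analyze what happens at a singular point $\gamma_0$. Your extra care with the degenerate case $\det(A-\gamma B)\equiv 0$ (restricting to the complement of the common kernel) is a nice touch the paper omits. The piecewise-analyticity and the existence of one-sided limits and one-sided derivatives are fine in your write-up.

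The gap is in the identification of the right limit with the value at $\gamma_0$, which is the actual content of right continuity. You correctly reduce this to deciding, for each branch with $\lambda_i(\gamma_0)=0$, whether $i\in J^+$ (i.e., whether $\lambda_i>0$ immediately to the right of $\gamma_0$), but you only say that the leading nonzero derivative \emph{determines} the answer --- you never show the answer is ``no''. Right continuity of $\proj{A>\gamma B}$ at $\gamma_0$ holds if and only if \emph{every} vanishing branch is $\leq 0$ just to the right of $\gamma_0$, i.e., its leading nonzero derivative is negative. This is precisely the monotonicity statement that drives the paper's proof: $\lambda_i'(\gamma)=-\langle v_i(\gamma)|B|v_i(\gamma)\rangle\leq 0$, which requires $B\geq 0$. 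You explicitly advertise that your argument works ``without requiring $B\succeq 0$'', but then the conclusion is simply false: take $A=0$ and $B=-1$ as $1\times 1$ matrices, so $\proj{A>\gamma B}=\mathbf{1}_{\{\gamma>0\}}$, which is left continuous but not right continuous at $0$. So for general Hermitian $B$ a vanishing branch can increase through zero, land in $J^+$, and break right continuity (and symmetrically break left continuity of $\proj{A\geq\gamma B}$). To close the argument you must assume $B\geq 0$ (as every application in the paper does, with $B=\sigma$) and use it to conclude that all branches are nonincreasing, so that no vanishing branch belongs to $J^+$; note also that on $N^{\perp}$ with $B\geq 0$ one gets $\langle v_i,Bv_i\rangle>0$ for vanishing branches, so the ``successive derivatives'' case you worry about does not even arise.
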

	\begin{proof}
		First, since the map $\gamma \mapsto A-\gamma B$ is analytic on the whole real axis, there exist analytic eigen-pairs $\{\lambda_i(\gamma), |v_i(\gamma)\rangle\}_i$ such that, the spectral decomposition 
		\begin{align} \label{eq:analytic-eigen}
			A - \gamma B
			&= \sum_i \lambda_i (\gamma) |v_i(\gamma)\rangle \langle v_i(\gamma)|,
		\end{align}
		holds for each $\gamma \in \mathds{R}$ \cite[Theorem 1.8, p.~70]{Kat95}.
		Note here that the eigenvalues $\{\lambda_i(\gamma)\}_i$ for a fixed $\gamma$ may not be in increasing or decreasing order.
		
		Second, we claim that each eigenvalue $\lambda_i(\gamma)$ is nonincreasing in $\gamma$.
		Indeed,
		\begin{align}
			\frac{\d}{\d \gamma} \lambda_i(\gamma)
			&= \frac{\d}{\d \gamma} \langle v_i(\gamma) | (A - \gamma B) | v_i(\gamma) \rangle
			\\
			&=\left( \frac{\d}{\d \gamma} \langle v_i(\gamma) | \right) \cdot (A - \gamma B) | v_i(\gamma) \rangle
			+  \langle v_i(\gamma) | (- B) | v_i(\gamma) \rangle
			+ \langle v_i(\gamma) | (A - \gamma B) \cdot  \left( \frac{\d}{\d \gamma} | v_i(\gamma) \rangle \right)
			\\
			&= \lambda_i(\gamma) \left( \frac{\d}{\d \gamma} \langle v_i(\gamma) | \right) \cdot  | v_i(\gamma) \rangle
			+ \langle v_i(\gamma) | (- B) | v_i(\gamma) \rangle
			+ \lambda_i(\gamma)  \langle v_i(\gamma) | \cdot  \left( \frac{\d}{\d \gamma} | v_i(\gamma) \rangle \right)
		\end{align}
		Note that the first and the third term add up to 
		\begin{align}
			\lambda_i(\gamma) \cdot \frac{\d}{\d \gamma} \left( \langle v_i(\gamma) | v_i(\gamma) \rangle \right)
			&= \lambda_i(\gamma) \cdot \frac{\d}{\d \gamma}(1) = 0.
		\end{align}
		Hence, each eigenvalue is nonincreasing, i.e., $\frac{\d}{\d \gamma} \lambda_i(\gamma) = -\langle v_i(\gamma) | B | v_i(\gamma) \rangle \leq 0$.
		
		Then, we have
		\begin{align}
			\proj{A>\gamma B} &= \sum_i \mathbf{1}_{\{ \lambda_i(\gamma) > 0 \}} |v_i(\gamma)\rangle \langle v_i(\gamma)|, \label{eq:analytic_projection}
			\\
			\proj{A \geq \gamma B} &= \sum_i \mathbf{1}_{\{ \lambda_i(\gamma) \geq 0 \}} |v_i(\gamma)\rangle \langle v_i(\gamma)|.
		\end{align}
		Since each $\lambda_i(\gamma)$ is nonincreasing, the indicator function $\mathbf{1}_{\{ \lambda_i(\gamma) > 0 \}}$ is right continuous in $\gamma$, which in turn implies that the map $\gamma \mapsto \proj{A>\gamma B}$ is right continuous.
		The left continuity of $\gamma \mapsto \proj{A\geq \gamma B}$ holds by the similar argument.
		
		Finally, note that each eigenprojection $|v_i(\gamma)\rangle \langle v_i(\gamma)|$ is analytic on the whole real axis, and thus it has a bounded derivative on any finite closed interval.
		From \eqref{eq:analytic_projection}, we see that $\proj{A>\gamma B}$ is a finite sum of analytic eigenprojections multiplied by an indicator function $\mathbf{1}_{\{ \lambda_i(\gamma) > 0 \}}$, and hence $\proj{A>\gamma B}$ is piecewise analytic; see Figure~\ref{figure:RS-integral_quantum} for an example.
		Together with the right continuity, the right differentiability with a bounded right derivative is then evident.
		The same argument applies to the piecewise analyticity and the left differentiability of $\proj{A\geq \gamma B}$.
	\end{proof}

	\begin{figure}[ht!]
		\centering
		\includegraphics[width=0.5\linewidth]{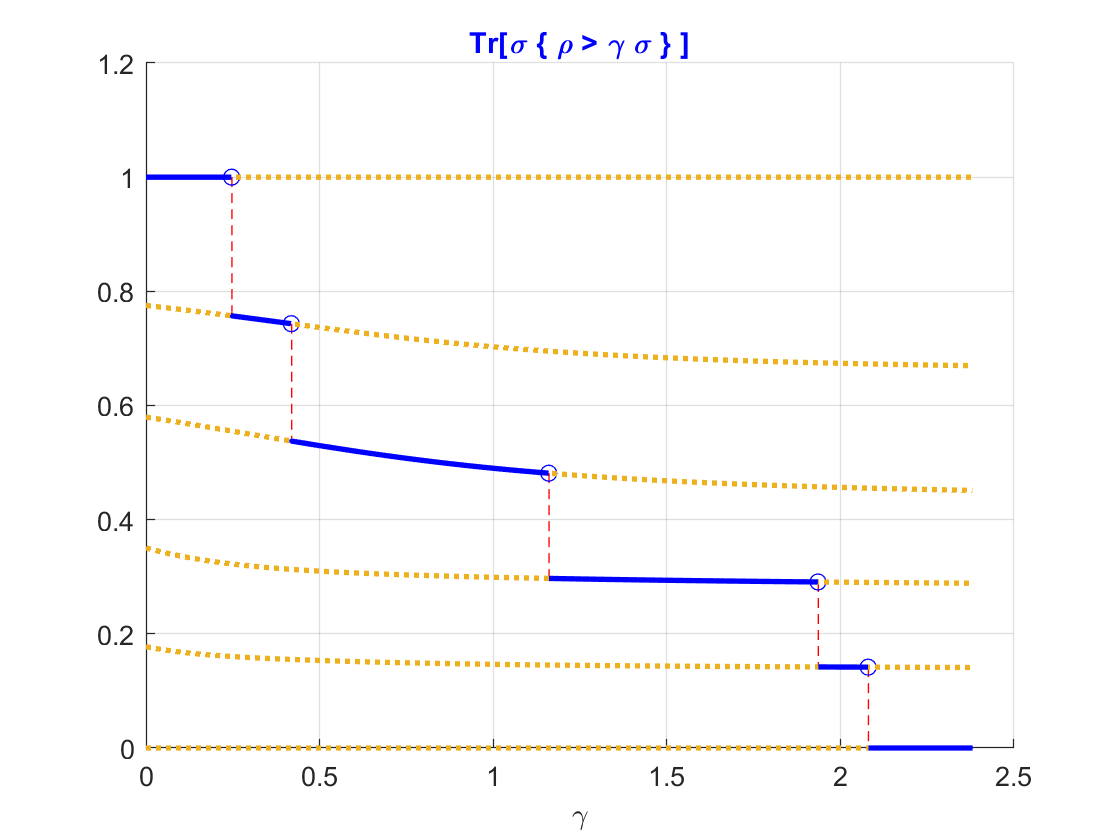}
		\caption{A numerical demonstration of the piecewise analytic map $\gamma \mapsto \Tr\left[ \sigma \{ \rho > \gamma \sigma\} \right]$ in {\color{blue}blue} color.
			Each dashed vertical line in {\color{red}red} corresponds to a jump at some $\gamma$ such that $\rho - \gamma \sigma$ is singular.
			Let $\{(\lambda_i(\gamma), |v_i(\gamma)\rangle)\}_i$ be analytic eigen-pairs for $\rho-\gamma \sigma$; see \eqref{eq:analytic-eigen}.
			Each dotted line in {\color{Orange}orange} color is an analytic map $\gamma \mapsto \Tr\left[ \sigma \sum_{i\in\mathcal{I}} |v_i(\gamma)\rangle \langle v_i(\gamma) | \right]$ for some $\gamma$-independent index set $\mathcal{I}$ (including the empty set), where the caldinality $|\mathcal{I}|$ plays the role of the rank of the projection.
			Each dotted {\color{Orange}orange} line covers one portion of the {\color{blue}blue} piecewise curve $\Tr[\sigma \proj{\rho>\gamma \sigma}] = \Tr[ \sigma \sum_i \mathbf{1}_{\{ \lambda_i(\gamma) > 0 \}} |v_i(\gamma)\rangle \langle v_i(\gamma)| ] $ for a certain range of $\gamma$.
		} \label{figure:RS-integral_quantum}
	\end{figure} 
	
	\begin{figure}[ht!]
		\centering
		\includegraphics[width=0.5\linewidth]{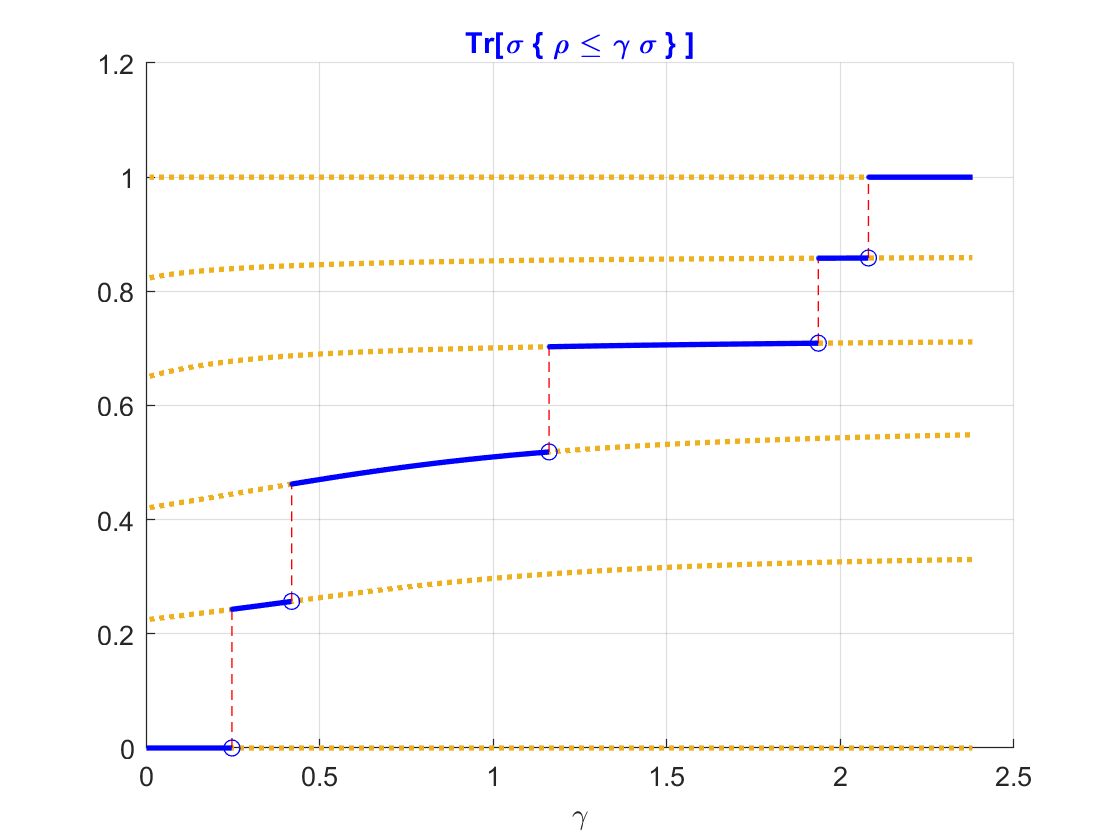}
		\caption{A numerical demonstration of the piecewise analytic map $\gamma \mapsto \Tr\left[ \sigma \{ \rho \leq \gamma \sigma\} \right]$ in {\color{blue}blue} color.
			The interpretation of the plot is similar to that of Figure~\ref{figure:RS-integral_quantum}.
			The map $\gamma \mapsto \Tr\left[ \sigma \{ \rho \leq \gamma \sigma\} \right]$ is a right-continuous monotone increasing map from $0$ to $1$.
		} \label{figure:RS-integral_quantum_increasing}
	\end{figure} 
	
	With the above we can discuss properties of the derivative of the Hockey-Stick divergence. 
	\begin{lemm}\label{Lem:HS-derivative}
		Let $A$ and $B$ be Hermitian. Then, the map $\gamma \mapsto E_{\gamma}(A\Vert B)$ is piecewise analytic except for $\gamma$ such that $A-\gamma B$ is singular.
		The right and left derivatives are
		\begin{align}
			\partial_+ E_{\gamma}'(A \Vert B) =  -\Tr\left[ B \proj{\gamma B < A }\right] \\
			\partial_- E_{\gamma}'(A \Vert B) =  -\Tr\left[ B \proj{\gamma B \leq A }\right].
		\end{align}
		
		Moreover, if $A\geq 0$ and $B>0$, then we have the Lebesgue integral
		\begin{align}
			-\int_0^\infty \frac{\d}{\d\gamma} E_{\gamma}(A\Vert B) \, \d \gamma = \Tr[A], \quad B>0. \label{eq:integral_HS}
		\end{align}
	\end{lemm}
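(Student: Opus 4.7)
The plan is to leverage the analytic eigendecomposition established in Lemma~\ref{lemm:differentiability_projection}. Writing $A - \gamma B = \sum_i \lambda_i(\gamma) |v_i(\gamma)\rangle\langle v_i(\gamma)|$ with real-analytic eigen-pairs, I obtain the clean representation
\begin{align*}
E_\gamma(A\Vert B) \;=\; \Tr[(A-\gamma B)_+] \;=\; \sum_i \bigl(\lambda_i(\gamma)\bigr)_+.
\end{align*}
Because each $\lambda_i$ is analytic on $\mathbb{R}$ and nonincreasing (with $\lambda_i'(\gamma) = -\langle v_i(\gamma)|B|v_i(\gamma)\rangle$, as derived inside the proof of Lemma~\ref{lemm:differentiability_projection}), its positive part is piecewise analytic, with at most one transition at the unique $\gamma_0$ where $\lambda_i(\gamma_0)=0$. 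Summing, $E_\gamma(A\Vert B)$ is piecewise analytic away from the exceptional set where $A-\gamma B$ is singular.

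Next I will compute the one-sided derivatives summand by summand. The only subtle case is $\lambda_i(\gamma_0)=0$: monotonicity of $\lambda_i$ forces $\lambda_i\ge 0$ immediately to the left of $\gamma_0$ and $\lambda_i\le 0$ immediately to the right, so $(\lambda_i)_+$ equals $\lambda_i$ on the left (contributing $\lambda_i'(\gamma_0)$ to the left derivative) and vanishes on the right (contributing $0$ to the right derivative). The other cases ($\lambda_i(\gamma_0)>0$ or $<0$) are trivial. Summing the contributions and using the projection formulas from Lemma~\ref{lemm:differentiability_projection} gives
\begin{align*}
\partial_+ E_\gamma(A\Vert B) &= \sum_{i:\lambda_i(\gamma)>0} \lambda_i'(\gamma) \;=\; -\Tr\!\left[ B \proj{A>\gamma B}\right], \\
\partial_- E_\gamma(A\Vert B) &= \sum_{i:\lambda_i(\gamma)\ge 0} \lambda_i'(\gamma) \;=\; -\Tr\!\left[ B \proj{A\ge\gamma B}\right].
\end{align*}

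For the integral identity, I will use that $E_\gamma(A\Vert B)$ is convex in $\gamma$ (listed in the summary) and hence locally absolutely continuous, so the left and right derivatives agree Lebesgue-almost everywhere and the fundamental theorem of calculus applies. The endpoint values are $E_0(A\Vert B) = \Tr[A_+] = \Tr[A]$ (since $A\ge0$), and $E_\gamma(A\Vert B) = 0$ for all $\gamma \ge \|A\|_\infty / \lambda_{\min}(B)$, which is finite by the assumption $B>0$. Thus $\int_0^\infty \tfrac{\d}{\d\gamma} E_\gamma(A\Vert B)\,\d\gamma = 0 - \Tr[A]$, which rearranges to \eqref{eq:integral_HS}.

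The main obstacle is the case $\lambda_i(\gamma_0)=0$, where the asymmetry between the right and left derivatives arises — it is precisely this boundary behavior that produces the difference between the projections $\proj{A>\gamma B}$ and $\proj{A\ge \gamma B}$ in the two formulas. Once monotonicity of each eigenvalue curve is exploited to pin down the behavior of $(\lambda_i)_+$ on each side of its zero, the rest of the argument is bookkeeping.
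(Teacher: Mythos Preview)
Your proof is correct and takes a genuinely different route from the paper on both parts of the lemma.

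For the one-sided derivatives, the paper invokes the general trace-derivative formula of Lemma~\ref{lemm:HP14} with $f(x)=x_+$ at nonsingular points, and then pushes to singular $\gamma$ by taking one-sided limits and using the right/left continuity of the projections from Lemma~\ref{lemm:differentiability_projection}. You instead work directly with the scalar sum $E_\gamma=\sum_i(\lambda_i(\gamma))_+$ and differentiate term by term, exploiting monotonicity of each $\lambda_i$ to resolve the boundary case $\lambda_i(\gamma_0)=0$. Your argument is more self-contained and makes the asymmetry between $\proj{A>\gamma B}$ and $\proj{A\ge\gamma B}$ transparent; the paper's argument has the virtue of treating the trace derivative as a black box and separating the analytic perturbation theory from the calculus.

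For the integral identity the difference is sharper. The paper computes $\int_0^\infty \Tr[B\proj{\gamma B<A}]\,\d\gamma$ via the operator layer cake formula (Theorem~\ref{theo:Dlog_formula}), rewriting it as $\Tr[B\cdot\mathrm{D}\log[B](A)]=\Tr[A]$. You instead observe that $E_\gamma$ is convex hence absolutely continuous, vanishes beyond $\|A\|_\infty/\lambda_{\min}(B)$, and equals $\Tr[A]$ at $\gamma=0$, so the fundamental theorem of calculus gives the result immediately. Your route is considerably more elementary and avoids the operator layer cake machinery entirely; the paper's route, while heavier, serves as an early illustration of Theorem~\ref{theo:Dlog_formula}, which is the workhorse for the paper's later trace representations.
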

	\begin{proof}
		The piecewise analyticity directly follows from Lemma~\ref{lemm:differentiability_projection} and the definition $E_{\gamma}(A\Vert B) = \Tr\left[\left( A - \gamma B \right)_+ \right]$.
		
		For $\delta$ such that $A-\delta B$ is not singular, we have
		\begin{DispWithArrows}[displaystyle]
			\left.\frac{\d}{\d \gamma} \Tr\left[ ( A - \gamma B )_+ \right]\right|_{\gamma = \delta}
			= - \Tr\left[ B \left\{ A  > \delta B \right\} \right]
			= - \Tr\left[ B \left\{ A  \geq \delta B \right\} \right],
		\end{DispWithArrows}
		by Lemma~\ref{lemm:HP14} below with $f(x) = x_+$ and 
		\begin{align}
			f'(x) = \begin{cases} 
				1 & x>0
				\\
				0 & x< 0
			\end{cases}.
		\end{align}
		
		For $\delta$ such that $A-\delta B$ is not singular, 
		\begin{DispWithArrows}[displaystyle]
			\left.\frac{\d}{\d \gamma} \Tr\left[ ( A - \gamma B )_+ \right]\right|_{\gamma = \delta^+}
			&= - \Tr\left[ B \left\{ A  > \delta^+ B \right\} \right]
			\notag
			\Arrow{\textnormal{$\delta \mapsto \left\{ A  > \delta B \right\}$ is right} \\ \textnormal{continuous by Lemma~\ref{lemm:differentiability_projection}}}
			\notag
			\\
			&= - \Tr\left[ B \left\{ A  > \delta B \right\} \right],
		\end{DispWithArrows}
		and 
		\begin{DispWithArrows}[displaystyle]
			\left.\frac{\d}{\d \gamma} \Tr\left[ ( A - \gamma B )_+ \right]\right|_{\gamma = \delta^-}
			&= - \Tr\left[ B \left\{ A  \geq \delta^- B \right\} \right]
			\Arrow{\textnormal{$\delta \mapsto \left\{ A  \geq \delta B \right\}$ is left} \\ \textnormal{continuous by Lemma~\ref{lemm:differentiability_projection}}}
			\notag
			\\
			&= - \Tr\left[ B \left\{ A  \geq \delta B \right\} \right].
		\end{DispWithArrows}
		
		Lastly, by Theorem~\ref{theo:Dlog_formula} below, we have
		\begin{align*}
			-\int_0^\infty E_{\gamma}'(A\Vert B) \, \d \gamma
			&= \int_0^\infty \Tr\left[ B \left\{ \gamma B < A \right\} \right] \, \d \gamma
			\\
			&= \Tr\left[ B \cdot \mathrm{D} \log [B] (A) \right]
			\\
			&= \Tr\left[ \mathrm{D} \log [B] (B) \cdot A \right]
			\\
			&= \Tr[A].
		\end{align*}
	\end{proof}

	\begin{lemm}\label{lemm:monotone_in_gamma}
		For any $A, B \geq 0$, the maps
		\begin{align}
			\gamma \mapsto \Tr\left[ A \left\{ A < \gamma B \right\} \right],
			\quad
			\gamma \mapsto \Tr\left[ A \left\{ A \leq \gamma B \right\} \right]
		\end{align}
		are nondecreasing on $\gamma >0$.
	\end{lemm}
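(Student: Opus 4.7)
The plan is to reduce the claim to the assertion that the two maps $\gamma \mapsto \Tr[A\{A > \gamma B\}]$ and $\gamma \mapsto \Tr[A\{A \geq \gamma B\}]$ are \emph{nonincreasing} on $(0,\infty)$, from which the stated monotonicity follows by complementation: since $\{A < \gamma B\} = \I - \{A \geq \gamma B\}$ and $\{A \leq \gamma B\} = \I - \{A > \gamma B\}$, one has $\Tr[A\{A < \gamma B\}] = \Tr[A] - \Tr[A\{A \geq \gamma B\}]$ and similarly for $\leq$.

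The first step would be to rewrite these traces in terms of the Hockey-Stick divergence. Decomposing $A = (A - \gamma B) + \gamma B$ and using that $(A - \gamma B)\{A > \gamma B\} = (A - \gamma B)_+$ (and likewise with $\geq$, since the kernel of $A - \gamma B$ contributes zero), I obtain
\begin{align*}
\Tr[A\{A > \gamma B\}] &= E_\gamma(A\Vert B) + \gamma \Tr[B\{A > \gamma B\}] = E_\gamma(A\Vert B) - \gamma\, \partial_+ E_\gamma(A\Vert B), \\
\Tr[A\{A \geq \gamma B\}] &= E_\gamma(A\Vert B) + \gamma \Tr[B\{A \geq \gamma B\}] = E_\gamma(A\Vert B) - \gamma\, \partial_- E_\gamma(A\Vert B),
\end{align*}
where the second equalities invoke Lemma~\ref{Lem:HS-derivative} to identify $\Tr[B\{A > \gamma B\}] = -\partial_+ E_\gamma(A\Vert B)$ and $\Tr[B\{A \geq \gamma B\}] = -\partial_- E_\gamma(A\Vert B)$.

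Next I would show that for any convex $g$ on $(0,\infty)$, the map $h(\gamma) := g(\gamma) - \gamma\,\partial_+ g(\gamma)$ is nonincreasing (and analogously with $\partial_-$); geometrically, this says the $y$-intercept of the tangent line to a convex graph decreases as the base point slides to the right. For $0 < \gamma_1 < \gamma_2$, the subgradient inequality at $\gamma_2$ gives $g(\gamma_1) - g(\gamma_2) \geq (\gamma_1 - \gamma_2)\,\partial_+ g(\gamma_2)$, and rearranging yields
$$h(\gamma_1) - h(\gamma_2) \geq \gamma_1 \bigl(\partial_+ g(\gamma_2) - \partial_+ g(\gamma_1)\bigr) \geq 0,$$
since the right derivative of a convex function is nondecreasing; the same argument works verbatim with $\partial_-$. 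Applying this to $g(\gamma) = E_\gamma(A\Vert B)$, whose convexity in $\gamma$ is standard and recorded in the summary, gives the desired monotonicity of $\Tr[A\{A > \gamma B\}]$ and $\Tr[A\{A \geq \gamma B\}]$, and hence of the original maps.

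The main obstacle I anticipate is the careful bookkeeping of one-sided derivatives at the isolated $\gamma$ where $A - \gamma B$ is singular, since $E_\gamma(A\Vert B)$ is only piecewise smooth there by Lemma~\ref{Lem:HS-derivative} and the two maps in the statement actually differ at precisely these points. However, the subgradient inequality and monotonicity of the right/left derivatives hold with either semiderivative, and the matching $\{>\} \leftrightarrow \partial_+$ versus $\{\geq\} \leftrightarrow \partial_-$ established in the first step ensures that the tangent-line argument applies consistently in each case.
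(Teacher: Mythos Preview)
Your proof is correct. Both you and the paper ultimately rely on the convexity of $\gamma\mapsto E_\gamma$ and on Lemma~\ref{Lem:HS-derivative}, but the routes differ. The paper's argument is shorter: it swaps the roles of $A$ and $B$, so that the one-sided derivatives of $E_\gamma(B\Vert A)$ are exactly $-\Tr[A\{\gamma A < B\}]$ and $-\Tr[A\{\gamma A \leq B\}]$; convexity makes these nondecreasing, and then the substitution $\gamma\mapsto 1/\gamma$ converts $\{\gamma A < B\}$ into $\{A<\gamma B\}$ and flips the monotonicity back. Your approach instead stays with $E_\gamma(A\Vert B)$, recognises $\Tr[A\{A>\gamma B\}]$ as the $y$-intercept $g(\gamma)-\gamma\,\partial_\pm g(\gamma)$ of the tangent line to $g=E_\gamma$, and invokes the general convex-analysis fact that this intercept is nonincreasing. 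The paper's trick is slicker; yours is slightly longer but more self-contained (no role swap, no change of variable) and makes the geometric reason for monotonicity transparent.
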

	\begin{proof}
		Since the Hockey-stick divergence $\Tr\left[ ( A - \gamma B )_+ \right]$ is convex nonincreasing in $\gamma >0$ (see Lemma~\ref{lemm:Hockey-stick_convex_nonincreasing} below), both the left and right derivatives are nondecreasing.
		Invoking Lemma~\ref{Lem:HS-derivative} and  switching the roles of $A$ and $B$, and substituting $\delta \leftarrow 1/\gamma$, the proof is concluded.
	\end{proof}
	
	\begin{lemm}[\hspace{-0.3pt}{\cite[Theorem 3.23]{HP14}}\footnote{Ref.~{\cite[Theorem 3.23]{HP14}} states for continuously differentiable function $f$. The same argument holds for one-sided differentiable functions as well.}] \label{lemm:HP14}
		Let $A$ and $B$ be Hermitian, and let $f$ be a continuously one-sided differentiable function defined around the eigenvalues of $A+t_0 B$.
		Then,
		\begin{align}
			\left.\frac{\d}{\d t} \Tr\left[ f(A+tB) \right]\right|_{t=t_0}
			= \Tr\left[ B \cdot  f'(A+t_0 B)\right].
		\end{align}
	\end{lemm}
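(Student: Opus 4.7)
The plan is to apply the analytic perturbation theory for Hermitian families that was already used inside the proof of Lemma~\ref{lemm:differentiability_projection}. In a neighborhood of the fixed $t_0$, Kato's theorem yields real-analytic eigenvalues $\{\lambda_i(t)\}_i$ and normalized eigenvectors $\{|v_i(t)\rangle\}_i$ such that
\[ A + t B = \sum_i \lambda_i(t)\, |v_i(t)\rangle\langle v_i(t)|, \]
and hence $\Tr[f(A+tB)] = \sum_i f(\lambda_i(t))$. This reduces the problem to computing the scalar one-sided derivatives $\frac{\d}{\d t} f(\lambda_i(t))$ at $t = t_0$, the transfer back to an operator identity being purely a matter of reassembly.

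The next step is to combine the chain rule with the Hellmann--Feynman identity
\[ \lambda_i'(t_0) = \langle v_i(t_0) | B | v_i(t_0)\rangle, \]
which is exactly the computation carried out (up to a sign) in the proof of Lemma~\ref{lemm:differentiability_projection}: the cross-terms involving $\frac{\d}{\d t}|v_i(t)\rangle$ cancel by differentiating the normalization $\langle v_i(t) | v_i(t)\rangle \equiv 1$. This yields
\[ \left.\frac{\d}{\d t}\Tr[f(A+tB)]\right|_{t=t_0} = \sum_i f'(\lambda_i(t_0))\, \langle v_i(t_0) | B | v_i(t_0)\rangle. \]
Reassembling the sum via functional calculus, $\sum_i f'(\lambda_i(t_0))\, |v_i(t_0)\rangle\langle v_i(t_0)| = f'(A+t_0 B)$, so the right-hand side equals $\Tr[B \cdot f'(A + t_0 B)]$, as claimed.

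The main obstacle is making the ``continuously one-sided differentiable'' hypothesis precise, which is exactly what is required for applications such as $f(x) = x_+$ in Lemma~\ref{Lem:HS-derivative}. When $f$ admits distinct left and right derivatives at some $\lambda_i(t_0)$, the direction in which $\lambda_i(t)$ approaches $\lambda_i(t_0)$ (governed by the sign of $\langle v_i(t_0)|B|v_i(t_0)\rangle$ via Hellmann--Feynman) determines whether the left or right derivative of $f$ should appear in the chain rule. Consequently the identity in the statement has to be interpreted with a consistent one-sided choice of $f'$, splitting into two companion formulas---one for the right derivative in $t$ and one for the left---which match the two formulas for $\partial_\pm E_\gamma(A\Vert B)$ derived in Lemma~\ref{Lem:HS-derivative}. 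This book-keeping of directions is the only delicate point; the remainder is a routine calculation once the real-analytic spectral decomposition is available.
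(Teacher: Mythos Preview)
The paper does not give its own proof of this lemma: it is quoted from \cite[Theorem~3.23]{HP14}, with only a footnote asserting that the one-sided extension ``holds by the same argument.'' Your proof via Kato's analytic eigenpairs, the Hellmann--Feynman identity, and reassembly through functional calculus is exactly the standard route (and almost certainly the one in \cite{HP14}), so there is nothing to compare against within the paper itself.

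One small caveat on your one-sided discussion: the right $t$-derivative of $\sum_i f(\lambda_i(t))$ does not in general produce a uniform choice of $f'_+$ across all $i$; an eigenvalue with $\lambda_i'(t_0)<0$ moves leftward as $t\downarrow t_0$ and so picks up $f'_-(\lambda_i(t_0))$. The reason the clean formula $\Tr[B\, f'(A+t_0B)]$ survives is that whenever $f'_+$ and $f'_-$ disagree the corresponding coefficient $\lambda_i'(t_0)=\langle v_i|B|v_i\rangle$ either has a definite sign (so the correct one-sided value is forced) or vanishes (so either choice contributes zero). In the paper's sole application (Lemma~\ref{Lem:HS-derivative} with $f(x)=x_+$) the lemma is invoked only at $\gamma$ where $A-\gamma B$ is nonsingular, so $f$ is genuinely differentiable at every eigenvalue and the subtlety never materializes.
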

	
	\begin{lemm}\label{lemm:Hockey-stick_convex_nonincreasing}
		Let $A,B\geq 0$.
		The Hockey-Stick divergence $E_{\gamma}(\rho\Vert\sigma)$ is convex nonincreasing in $\gamma \geq 0$.
	\end{lemm}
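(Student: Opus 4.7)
The plan is to use the variational formula for the trace of the positive part, namely
\[
\Tr[M_+] = \sup_{0 \leq T \leq \I} \Tr[TM]
\]
for any Hermitian $M$, which holds because the maximizing $T$ is exactly the spectral projection onto the positive eigenspace of $M$. Applying this with $M = A - \gamma B$ gives
\[
E_{\gamma}(A\Vert B)
= \sup_{0 \leq T \leq \I} \Tr\bigl[T(A - \gamma B)\bigr].
\]

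From this representation both claims fall out immediately. For each fixed $T$ with $0 \leq T \leq \I$, the map $\gamma \mapsto \Tr[T(A-\gamma B)] = \Tr[TA] - \gamma \Tr[TB]$ is affine in $\gamma$, so $E_{\gamma}(A\Vert B)$ is a supremum of affine functions and hence convex in $\gamma \geq 0$. Moreover, since $B \geq 0$ and $T \geq 0$, we have $\Tr[TB] \geq 0$, so every such affine function is nonincreasing in $\gamma$; the pointwise supremum of a family of nonincreasing functions is again nonincreasing, which gives the second claim.

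As a sanity check that aligns with the surrounding material, one can also derive the monotonicity from Lemma~\ref{Lem:HS-derivative}: the right derivative $\partial_+ E_{\gamma}(A\Vert B) = -\Tr[B\proj{\gamma B < A}] \leq 0$ since $B \geq 0$ and $\proj{\gamma B < A}$ is a projection. There is no real obstacle here — the only subtlety is to avoid the (incorrect) temptation of using operator monotonicity of the positive part $M \mapsto M_+$, which is why the variational (or derivative) route is the natural one.
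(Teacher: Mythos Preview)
Your proof is correct and takes a genuinely different route from the paper's. The paper handles the two properties separately: for monotonicity it uses Weyl's inequality (if $A-\gamma B \geq A-\gamma' B$ then the ordered eigenvalues compare pointwise, so the sums of the positive ones do too), and for convexity it invokes the identity $\Tr[(A-\gamma B)_+] = \tfrac{1}{2}\|A-\gamma B\|_1 + \tfrac{1}{2}\Tr[A-\gamma B]$ together with convexity of the Schatten $1$-norm. Your variational formula $E_\gamma(A\Vert B)=\sup_{0\leq T\leq\I}\Tr[T(A-\gamma B)]$ is cleaner: it delivers both properties in one stroke, since a pointwise supremum of affine nonincreasing functions is automatically convex and nonincreasing. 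Your approach also avoids any appeal to eigenvalue perturbation facts and makes explicit why operator monotonicity of $M\mapsto M_+$ is not needed. The paper's route, on the other hand, ties the convexity back to the trace norm, which is useful elsewhere in the paper where the identity $E_\gamma = \tfrac{1}{2}\|{\cdot}\|_1 + \tfrac{1}{2}\Tr[\cdot]$ is invoked anyway.
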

	\begin{proof}
		Let $0\leq \gamma \leq \gamma'$. We immediately have $A-\gamma B \geq A - \gamma' B$.
		Note that $(A-\gamma B)_+$ and $(A-\gamma' B)_+$ might not be comparable, but by the functional calculus (or the eigenvalue mapping theorem), the ordered eigenvalue are pointwise comparable, i.e.,~$\lambda^{\downarrow}_i(A-\gamma B) \geq \lambda^{\downarrow}_i(A-\gamma' B)$.
		After adding the positive eigenvalues, we have that $\gamma \mapsto E_{\gamma}(\rho\Vert\sigma)$ is nonincreasing.
		
		From the identity
		\begin{align}
			\Tr\left[ (A-\gamma B)_+ \right] 
			= \frac12 \left\| A - \gamma B \right\|_1 + \Tr\left[ A - \gamma B \right],
		\end{align}
		the convexity of the Hockey-stick divergence in $\gamma$ follows from the convexity of Schatten $1$-norm and the linearity of trace.
	\end{proof}
	
	We will need the following two theorems from the recent work~\cite{preparation}. These will crucially connect the integral representations with more traditional expressions. 
	\begin{theo}[Operator layer cake {\cite[Theorem B.1]{preparation}}] \label{theo:Dlog_formula}
		For any positive definite operator $A$ and any self-adjoint operator $B$ on a finite-dimensional Hilbert space,
		the following representation holds:
		\begin{align} \label{eq:Dlog_formula}
			\mathrm{D} \log[A](B) = \int_{0}^{\infty} \{ uA < B \} \d u\,-\,\int_{-\infty}^{0} \{ uA > B \} \d u,
		\end{align}
		where $ \mathrm{D} \log[A](B)=\frac{\d}{\d t}\log(A+tB)|_{t=0} $ is the directional derivative of the operator logarithm,
		and $ \{ uA < B \} \equiv \{ B - uA > 0 \} $ denotes the projection onto the positive part of $B-uA$.
	\end{theo}
	
	\begin{theo}[Operator change of variables{\cite[Theorem C.1]{preparation}}] \label{theo:change-of-variable}
		Let $A$ and $B$ be finite-dimensional positive semi-definite operators satisfying $r := \|A B^{-1} \|_{\infty} < \infty$.
		Then, for any Lebesgue-integrable function $h$ on $[0,r]$,
		\begin{align} \label{eq:change-of-variable}
			\int_0^\infty \proj{ A > \gamma B } h(\gamma) \, \d \gamma
			&= \int_0^\infty \frac{1}{B+t\I} A \frac{1}{\sqrt{B+t\I}} h\left(\frac{1}{\sqrt{B+t\I}} A \frac{1}{\sqrt{B+t\I}} \right)\frac{1}{\sqrt{B+t\I}} \, \d t
			\\
			&= \int_0^\infty \frac{1}{B+t\I} A \frac{1}{B+t\I} h\left( A \frac{1}{B+t\I} \right) \d t.
		\end{align}
	\end{theo}
	With these technical tools in hand, we can now start investigating their implications for quantum $f$-divergences in the following sections.  
	
	
	\section{Layer cake representation of \Renyi and $f$-divergences} \label{sec:relation}
	
	We start by stating formal definitions of the quantum \Renyi and $f$-divergences that constitute the starting point of this work.  
	\begin{defn}[Quantum $f$-divergence]
		For convex function $f$ defined on $[0,\infty)$ satisfying $f(1) = 0$ and states $\rho$ and $\sigma$ such that $\rho\ll\sigma$,
		\begin{align}
			D^{\ast}_f(\rho\Vert\sigma) :=
			f(0) + \int_0^\infty \Tr\left[ \sigma \proj{ \rho > \gamma \sigma}\right] \, \d f(\gamma).
		\end{align}
		If $f$ is differentiable,
		\begin{align}\label{Eq:Def-1-dif}
			D^{\ast}_f(\rho\Vert\sigma) =
			f(0) + \int_0^\infty f'(\gamma) \Tr\left[ \sigma \proj{ \rho > \gamma \sigma}\right] \, \d \gamma.
		\end{align}
	\end{defn}
	We also define the corresponding \Renyi divergence. 
	\begin{defn}[Quantum \Renyi divergence]
		Let $\alpha \in (0,1)\cup(1,\infty)$.
		For states $\rho$ and $\sigma$ satisfying $\rho\ll\sigma$ unless $\alpha \in (0,1)$,
		\begin{align}
			D^{\ast}_{\alpha} (\rho\Vert\sigma)
			&:= \frac{1}{\alpha-1} \log Q^{\ast}_{\alpha}(\rho\Vert\sigma),
			\\
			Q^{\ast}_{\alpha}(\rho\Vert\sigma)
			&:= \alpha \int_0^\infty \gamma^{\alpha-1} \Tr\left[ \sigma  \proj{ \rho > \gamma \sigma }\right] \d \gamma, \quad \alpha > 0. \label{Eq:Q-layerCake}
		\end{align}
	\end{defn}
	Note that \Renyi and $f$-divergences are related via the Hellinger divergence,
	\begin{align}
		H_{\alpha}(\rho\Vert\sigma)
		:= D_{f_\alpha}(\rho\Vert\sigma),
	\end{align} 
	where $f_{\alpha}(x) = \frac{x^{\alpha}-1}{\alpha-1}$, $\alpha>0$, 
	by 
	\begin{align}\label{Eq:quasi-and-Hellinger}
		Q_{\alpha}(\rho\Vert\sigma)
		&= (\alpha-1) H_{\alpha} (\rho\Vert\sigma) + 1.
	\end{align}
	Also, by Lemma~\ref{Lem:HS-derivative}, we can write Equation~\eqref{Eq:Def-1-dif} as, 
	\begin{align}\label{Eq:Def-1-dif-HS}
		D^{\ast}_f(\rho\Vert\sigma) =
		f(0) - \int_0^\infty f'(\gamma) \frac{\d E_\gamma(\rho\|\sigma)}{\d\gamma} \, \d \gamma.
	\end{align}
	This is suggestive towards a connection to $f$-divergences based on Hockey Stick divergences. 
	In his PhD thesis~\cite{polyanskiy2010channel}, Polyanskiy gave the following integral representation for the relative entropy,
	\begin{align}
		D(P\Vert Q)
		&=1 - \int_0^\infty \log \gamma \frac{\d E_{\gamma}(P\Vert Q)}{\d \gamma} \d \gamma.
	\end{align}
	Our definition can be seen as an extension to general $f$-divergences that is well defined also in the quantum setting\footnote{The relative entropy special case follows from Equation~\eqref{Eq:Def-1-dif-HS} with $f(x)=x\log x$ and applying Equation~\eqref{Eq:shiftByC}, which is shown a bit later, with $c=1$.}. 
	Our first main result is that the above definitions are equivalent to those in Equations~\eqref{Eq:int-quasi-petz} and~\eqref{Eq:int-f-div} previously proposed in~\cite{hirche2023quantum}. 
	
	Let $\cF$ be the set of functions $f:(0,\infty)\rightarrow\bbR$ with $f(1)=0$ that are convex and twice differentiable. 
	\begin{theo} \label{theorem:f_divergence_layer-cake}
		For any $f\in\cF$ or $\alpha\in(0,1)\cup(1,\infty)$, we have
		\begin{align}
			D^{\ast}_f(\rho\|\sigma) &=   D_f(\rho\|\sigma)  \\
			D^{\ast}_\alpha(\rho\|\sigma) &=   D_\alpha(\rho\|\sigma)
		\end{align}
	\end{theo}
	\begin{proof}
		By the relation in~\eqref{Eq:quasi-and-Hellinger} it is sufficient to prove the first equality. Set, $\beta_2=e^{-D_{\max}(\sigma\|\rho)}$ and $\beta_1^{-1}=e^{D_{\max}(\rho\|\sigma)}$. In~\cite[Equation~(4.17)]{beigi2025some} the following expression was shown, 
		\begin{align}
			D_f(\rho\|\sigma) &=   f({\beta_2}) - f(1) - \sum_{i=0}^N \left(  \int_{a_i}^{a_{i+1}} f'(\gamma)  E'_\gamma(\rho\|\sigma) \d\gamma \right),  
		\end{align}
		where the $a_i$ are a finite number of points for which $E'_\gamma(\rho\|\sigma)$ is not continuous. Because at any such point the left and right derivatives exist and are finite, we can extend this to 
		\begin{align}
			D_f(\rho\|\sigma) &=   f({\beta_2}) - f(1) -   \int_{\beta_2}^{\beta_1^{-1}} f'(\gamma)  E'_\gamma(\rho\|\sigma) \d\gamma.  
		\end{align}
		Note that, 
		\begin{align}
			\int_{0}^{\beta_2} f'(\gamma)  E'_\gamma(\rho\|\sigma) \d\gamma = -\int_{0}^{\beta_2} f'(\gamma)   \d\gamma = f(0) - f(\beta_2). 
		\end{align}
		Hence, 
		\begin{align}
			D_f(\rho\|\sigma) &=   f(0) - f(1) -   \int_{0}^{\beta_1^{-1}} f'(\gamma)  E'_\gamma(\rho\|\sigma) \d\gamma.   
		\end{align}
		Finally, the result follows because the Hockey-Stick divergence is constant and equal to $0$ for all $\gamma\geq\beta_1^{-1}$ and by using Lemma~\ref{Lem:HS-derivative}. 
	\end{proof}
	

	Note that, 
	\begin{align}
		-\int_0^\infty \frac{\d E_{\gamma}(\rho\Vert\sigma)}{\d \gamma} \d \gamma = E_0(\rho\|\sigma) - E_\infty(\rho\|\sigma) = 1. 
	\end{align}
	Hence, we can modify the representation with arbitrary constants, 
	\begin{align}\label{Eq:shiftByC}
		D_f(\rho\|\sigma) &=   f(0) + c -   \int_{0}^{\infty} (f'(\gamma)-c)  E'_\gamma(\rho\|\sigma) \d\gamma 
	\end{align}
	For the particular case of the Hellinger divergence, we can give further integral representations. Interestingly, here we have to consider different values of $\alpha$ separately. 
	\begin{lemm} \label{lemm:alpha-representation}
		We have, for $0<\alpha<1$, 
		\begin{align}
			(\alpha-1)H_\alpha(\rho\|\sigma) &= (1-\alpha) \int_0^\infty \gamma^{\alpha-2} \tr[\rho\{\rho\leq\gamma\sigma\}] \d\gamma - 1 \label{eq:layer-cake_alpha<1-0}\\
			&= (1-\alpha) \int_0^\infty \gamma^{-\alpha} \tr[\rho\{\gamma\rho\leq\sigma\}] \d\gamma - 1
			\label{eq:layer-cake_alpha<1}
		\end{align}
		and, for $\alpha>1$, 
		\begin{align}
			(\alpha-1)H_\alpha(\rho\|\sigma) &= (\alpha-1) \int_0^\infty \gamma^{\alpha-2} \tr[\rho\{\rho>\gamma\sigma\}] \d\gamma  -1  \label{eq:layer-cake_alpha>1-0}\\
			&= (\alpha-1) \int_0^\infty \gamma^{-\alpha} \tr[\rho\{\gamma\rho>\sigma\}] \d\gamma  -1 .
		\end{align}
	\end{lemm}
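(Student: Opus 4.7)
By Theorem~\ref{theorem:f_divergence_layer-cake} combined with~\eqref{Eq:quasi-and-Hellinger}, one has $(\alpha-1)H_\alpha(\rho\|\sigma)=Q_\alpha(\rho\|\sigma)-1$, where $Q_\alpha$ is given by the layer-cake integral~\eqref{Eq:Q-layerCake}. The plan is to reshape that integral by two successive integrations by parts, with a single intermediate application of the change-of-measure identity from $\sigma$ to $\rho$; the case split $\alpha\lessgtr 1$ will arise only in the second integration by parts, through the boundary behaviour at $\gamma=0$ and $\gamma=\infty$, while the reciprocal versions of the stated formulas will follow from the substitution $\gamma\mapsto 1/\gamma$.

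For the first step, set $F(\gamma):=\tr[\sigma\{\rho>\gamma\sigma\}]$. By Lemma~\ref{lemm:differentiability_projection}, $F$ is right continuous and piecewise analytic, and since $\rho\ll\sigma$ it has compact support. Integrating by parts in~\eqref{Eq:Q-layerCake} in the Riemann--Stieltjes sense turns it into $\int_0^\infty\gamma^\alpha\,d\tr[\sigma\{\rho\leq\gamma\sigma\}]$, the boundary terms vanishing because $F$ has compact support at infinity and $\gamma^\alpha\to 0$ at zero. Next, the change-of-measure identity $\gamma\,d\tr[\sigma\{\rho\leq\gamma\sigma\}]=d\tr[\rho\{\rho\leq\gamma\sigma\}]$ from Lemma~\ref{lemm:change-of-measure} yields
\begin{align*}
Q_\alpha(\rho\|\sigma)\;=\;\int_0^\infty\gamma^{\alpha-1}\,d\tr[\rho\{\rho\leq\gamma\sigma\}].
\end{align*}
Conceptually this is consistent with Lemma~\ref{lemm:differentiability_projection}: both Stieltjes measures are concentrated on the finite set of $\gamma_0$ at which $\rho-\gamma_0\sigma$ is singular, and for any kernel vector $|v\rangle$ there one has $\rho|v\rangle=\gamma_0\sigma|v\rangle$, so the jump sizes are in ratio $\gamma_0$.

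The second integration by parts then converts this Stieltjes integral back into a Riemann integral in $\gamma$, and it is here that the dichotomy enters. For $\alpha>1$, rewrite the integrator as $1-\tr[\rho\{\rho>\gamma\sigma\}]$; the function $\tr[\rho\{\rho>\gamma\sigma\}]$ has compact support and vanishes at infinity, and $\gamma^{\alpha-1}\to 0$ at zero, so both boundary contributions vanish and one arrives at~\eqref{eq:layer-cake_alpha>1-0} after subtracting $1$. For $0<\alpha<1$, integrate directly against $G(\gamma):=\tr[\rho\{\rho\leq\gamma\sigma\}]$: at infinity, $\gamma^{\alpha-1}\to 0$ controls $G\to 1$, while at zero the elementary bound $G(\gamma)\leq\gamma\cdot\tr[\sigma\{\rho\leq\gamma\sigma\}]\leq\gamma$ (itself a one-line consequence of the change-of-measure identity applied on $[0,\gamma]$) yields $\gamma^{\alpha-1}G(\gamma)=O(\gamma^\alpha)\to 0$, producing~\eqref{eq:layer-cake_alpha<1-0}. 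Finally, the reciprocal forms~\eqref{eq:layer-cake_alpha<1} and its $\alpha>1$ companion follow from the change of variables $\gamma\mapsto 1/\gamma$, whose Jacobian $d\gamma/\gamma^2$ turns $\gamma^{\alpha-2}$ into $\gamma^{-\alpha}$ and sends the events $\{\rho>\gamma\sigma\}$ and $\{\rho\leq\gamma\sigma\}$ into $\{\gamma\rho>\sigma\}$ and $\{\gamma\rho\leq\sigma\}$ respectively. The main obstacle is the bookkeeping of boundary contributions in this second integration by parts, which must be argued differently in the two ranges of $\alpha$; everything else is routine once the piecewise analytic structure from Lemma~\ref{lemm:differentiability_projection} and the change-of-measure identity of Lemma~\ref{lemm:change-of-measure} are in hand.
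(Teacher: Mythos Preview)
Your argument is correct, but it follows a genuinely different route from the paper's own proof. The paper starts from the Hockey-Stick integral representation $H_\alpha(\rho\|\sigma)=\alpha\int_0^\infty\gamma^{\alpha-2}E_\gamma(\rho\|\sigma)\,\d\gamma-\alpha\int_0^1\gamma^{\alpha-2}(1-\gamma)\,\d\gamma$, expands $E_\gamma(\rho\|\sigma)=\tr[\rho\{\rho>\gamma\sigma\}]-\gamma\tr[\sigma\{\rho>\gamma\sigma\}]$, recognises the second term as $1+(\alpha-1)H_\alpha$ via~\eqref{Eq:Q-layerCake}, and then solves algebraically for $H_\alpha$; the case distinction enters through whether $\int_0^1\gamma^{\alpha-2}\,\d\gamma$ or $\int_1^\infty\gamma^{\alpha-2}\,\d\gamma$ converges. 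Your approach instead runs the chain \emph{integration by parts $\to$ change of measure $\to$ integration by parts}, which is exactly the mechanism underlying Proposition~\ref{prop:f-divergence_RS} and Lemma~\ref{lemm:change-of-measure} in Section~\ref{sec:RS-integral}. That is logically fine (neither of those results depends on Lemma~\ref{lemm:alpha-representation}), and arguably more conceptual: it exhibits the formulas as consequences of the $\sigma\to\rho$ change of measure rather than of an algebraic coincidence. The paper's proof, on the other hand, is self-contained at this point in the exposition and avoids forward references.

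One small inaccuracy: your heuristic aside that ``both Stieltjes measures are concentrated on the finite set of $\gamma_0$ at which $\rho-\gamma_0\sigma$ is singular'' is only true in the commuting case. In general, $\gamma\mapsto\tr[\sigma\{\rho\leq\gamma\sigma\}]$ is piecewise analytic but not piecewise constant (see Figures~\ref{figure:RS-integral_quantum} and~\ref{figure:RS-integral_quantum_increasing}), so the Stieltjes measure has an absolutely continuous part in addition to atoms at the singular $\gamma_0$. This does not affect your proof, since you correctly invoke Lemma~\ref{lemm:change-of-measure} for the actual step; the remark is just misleading as stated.
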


	\begin{proof}[Proof of Lemma~\ref{lemm:alpha-representation}]
		Here, we use the integral representation found in~\cite{hirche2023quantum}, which can be written as, 
		\begin{align}
			H_\alpha(\rho\|\sigma) = \alpha \int_0^\infty \gamma^{\alpha-2} E_\gamma(\rho\|\sigma) \d\gamma - \alpha\int_0^1 \gamma^{\alpha-2}(1-\gamma) \d\gamma.
		\end{align}
		Recall that, 
		\begin{align}
			E_\gamma(\rho\|\sigma) = \tr(\rho-\gamma\sigma)_+ = \tr[\rho\{\rho>\gamma\sigma\}] - \gamma\tr[\sigma\{\rho>\gamma\sigma\}]. 
		\end{align}
		With, this we start by proving the case of $0<\alpha<1$, 
		\begin{align}
			H_\alpha(\rho\|\sigma) &= \alpha \int_0^\infty \gamma^{\alpha-2} \left(\tr[\rho\{\rho>\gamma\sigma\}] - \gamma\tr[\sigma\{\rho>\gamma\sigma\}] \right)\d\gamma - \alpha\int_0^1 \gamma^{\alpha-2}(1-\gamma) \d\gamma \\
			\label{Eq:Ha-step2}
			&= \alpha \int_0^\infty \gamma^{\alpha-2} \tr[\rho\{\rho>\gamma\sigma\}] \d\gamma  -1 -(\alpha-1) H_\alpha(\rho\|\sigma) - \alpha\int_0^1 \gamma^{\alpha-2}(1-\gamma) \d\gamma \\
			&= \alpha \int_0^\infty \gamma^{\alpha-2} \left(1-\tr[\rho\{\rho\leq\gamma\sigma\}]\right) \d\gamma  -1 -(\alpha-1) H_\alpha(\rho\|\sigma) - \alpha\int_0^1 \gamma^{\alpha-2}(1-\gamma) \d\gamma \\
			&= -\alpha \int_0^\infty \gamma^{\alpha-2} \tr[\rho\{\rho\leq\gamma\sigma\}] \d\gamma  -1 -(\alpha-1) H_\alpha(\rho\|\sigma) + \alpha\int_1^\infty \gamma^{\alpha-2} \d\gamma + \alpha\int_0^1 \gamma^{\alpha-1} \d\gamma \\
			&= -\alpha \int_0^\infty \gamma^{\alpha-2} \tr[\rho\{\rho\leq\gamma\sigma\}] \d\gamma   -(\alpha-1) H_\alpha(\rho\|\sigma) -\frac{\alpha}{\alpha-1}\\
			&= -\alpha \int_0^\infty \gamma^{-\alpha} \tr[\rho\{\gamma\rho\leq\sigma\}] \d\gamma   -(\alpha-1) H_\alpha(\rho\|\sigma) -\frac{\alpha}{\alpha-1}, 
		\end{align}
		where for the second equality we used Equation~\eqref{Eq:Q-layerCake} and the final equality is substituting $\gamma'=\gamma^{-1}$ in the integral. The claim then follows be rearranging the terms. 
		
		We continue by proving the claim for $\alpha>1$, starting from Equation~\eqref{Eq:Ha-step2},
		\begin{align}
			H_\alpha(\rho\|\sigma) &= \alpha \int_0^\infty \gamma^{\alpha-2} \tr[\rho\{\rho>\gamma\sigma\}] \d\gamma  -1 -(\alpha-1) H_\alpha(\rho\|\sigma) - \alpha\int_0^1 \gamma^{\alpha-2}(1-\gamma) \d\gamma  \\
			&= \alpha \int_0^\infty \gamma^{\alpha-2} \tr[\rho\{\rho>\gamma\sigma\}] \d\gamma  -1 -(\alpha-1) H_\alpha(\rho\|\sigma) - \frac{1}{\alpha-1} \\
			&= \alpha \int_0^\infty \gamma^{-\alpha} \tr[\rho\{\gamma\rho>\sigma\}] \d\gamma  -1 -(\alpha-1) H_\alpha(\rho\|\sigma) - \frac{1}{\alpha-1}.
		\end{align}
		The claim follows again after some rearranging. 
	\end{proof}    
	This new representation gives us immediately the following corollary. 
	\begin{coro}
		For $0<\alpha<1$,
		\begin{align}
			D_{\alpha}(\rho \Vert \sigma)
			= \frac{\alpha}{1-\alpha} D_{1-\alpha}(\sigma \Vert \rho).
		\end{align}
	\end{coro}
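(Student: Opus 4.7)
The plan is to first establish the symmetric identity $Q_{\alpha}(\rho\Vert\sigma) = Q_{1-\alpha}(\sigma\Vert\rho)$ for $\alpha\in(0,1)$, from which the corollary will follow in one line by taking logarithms and rearranging.

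For this identity, the most transparent route I see goes through the Hockey-Stick integral representation~\eqref{Eq:int-quasi-petz}, whose equivalence to the layer-cake definition is already guaranteed by Theorem~\ref{theorem:f_divergence_layer-cake}. Performing the substitutions $\alpha\mapsto 1-\alpha$ and $\rho\leftrightarrow\sigma$ on the right-hand side of~\eqref{Eq:int-quasi-petz}, the prefactor transforms as $\alpha(\alpha-1)\mapsto (1-\alpha)(-\alpha)=\alpha(\alpha-1)$ and is therefore unchanged, while the two integrands $\gamma^{\alpha-2}E_{\gamma}(\rho\Vert\sigma)$ and $\gamma^{-\alpha-1}E_{\gamma}(\sigma\Vert\rho)$ simply exchange places, leaving their sum invariant. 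This proves the symmetry immediately.

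As a second, self-contained derivation one can instead use Lemma~\ref{lemm:alpha-representation}: combining $Q_{\alpha}=1+(\alpha-1)H_{\alpha}$ with~\eqref{eq:layer-cake_alpha<1-0} yields $Q_{\alpha}(\rho\Vert\sigma)=(1-\alpha)\int_0^{\infty}\gamma^{\alpha-2}\tr[\rho\{\rho\leq\gamma\sigma\}]\,\d\gamma$, whereas the defining formula~\eqref{Eq:Q-layerCake} applied to $Q_{1-\alpha}(\sigma\Vert\rho)$ reads $(1-\alpha)\int_0^{\infty}\gamma^{-\alpha}\tr[\rho\{\sigma>\gamma\rho\}]\,\d\gamma$. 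The substitution $\gamma\mapsto 1/\gamma$ together with the elementary identity $\{\sigma>\gamma\rho\}=\{\rho<\gamma^{-1}\sigma\}$ converts the second integral into the first, modulo the coincidence of $\{\rho<\gamma\sigma\}$ and $\{\rho\leq\gamma\sigma\}$ outside the finite set of $\gamma$'s where $\rho-\gamma\sigma$ is singular, which is guaranteed by Lemma~\ref{lemm:differentiability_projection}.

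Given the symmetry $Q_{\alpha}(\rho\Vert\sigma)=Q_{1-\alpha}(\sigma\Vert\rho)$, the closing step is purely algebraic: $D_{\alpha}(\rho\Vert\sigma)=\tfrac{1}{\alpha-1}\log Q_{\alpha}(\rho\Vert\sigma)=\tfrac{1}{\alpha-1}\log Q_{1-\alpha}(\sigma\Vert\rho)=\tfrac{-\alpha}{\alpha-1}D_{1-\alpha}(\sigma\Vert\rho)=\tfrac{\alpha}{1-\alpha}D_{1-\alpha}(\sigma\Vert\rho)$, where I used $D_{1-\alpha}(\sigma\Vert\rho)=-\tfrac{1}{\alpha}\log Q_{1-\alpha}(\sigma\Vert\rho)$. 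I do not anticipate any real obstacle: all the content sits in the manifest symmetry of~\eqref{Eq:int-quasi-petz} under $(\alpha,\rho,\sigma)\mapsto(1-\alpha,\sigma,\rho)$, and the only mildly delicate point in the alternative route is the measure-zero discrepancy between strict and non-strict projection indicators, cleanly handled by Lemma~\ref{lemm:differentiability_projection}.
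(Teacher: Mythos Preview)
Your proposal is correct. Your second route---combining \eqref{eq:layer-cake_alpha<1-0} for $Q_{\alpha}(\rho\Vert\sigma)$ with the layer-cake definition \eqref{Eq:Q-layerCake} for $Q_{1-\alpha}(\sigma\Vert\rho)$, then matching via the substitution $\gamma\mapsto 1/\gamma$---is exactly the paper's argument, just written in the reverse direction (the paper starts from $-\alpha H_{1-\alpha}(\sigma\Vert\rho)$ and works back to $(\alpha-1)H_{\alpha}(\rho\Vert\sigma)$). Your first route, reading the symmetry directly off the Hockey-Stick representation \eqref{Eq:int-quasi-petz}, is genuinely more economical: once Theorem~\ref{theorem:f_divergence_layer-cake} is in hand, the invariance of $\alpha(\alpha-1)$ and the swap of the two integrands under $(\alpha,\rho,\sigma)\mapsto(1-\alpha,\sigma,\rho)$ give the result in one line, with no change of variables or measure-zero projector bookkeeping needed. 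The paper's approach has the minor advantage of staying entirely within the layer-cake formalism being developed in that section, but yours is the cleaner proof.
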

	\begin{proof}
		It is equivalent to proving 
		\begin{align}
			(\alpha-1) H_{\alpha}(\rho \Vert \sigma)
			= - \alpha H_{1-\alpha}(\sigma \Vert \rho).
		\end{align}
		We start with \eqref{eq:layer-cake_alpha<1-0}:
		\begin{DispWithArrows}[displaystyle]
			- \alpha H_{1-\alpha}(\sigma \Vert \rho) -1
			&= \alpha \int_0^\infty \gamma^{-\alpha-1} \tr[\sigma\{\sigma < \gamma\rho\}] \d\gamma 
			\Arrow{$\gamma \leftarrow \frac{1}{\gamma}$} -1
			\\
			&= \alpha \int_\infty^0 \gamma^{\alpha+1} \tr[\sigma\{\gamma \sigma < \rho\}] \left( - \frac{1}{\gamma^2} \right)\d\gamma -1
			\\
			&=\alpha \int_0^\infty \gamma^{\alpha-1} \tr[\sigma\{\gamma \sigma < \rho\}]\d\gamma -1
			\Arrow{\eqref{Eq:quasi-and-Hellinger}}
			\\
			&= (\alpha-1) H_{\alpha}(\rho \Vert \sigma).
		\end{DispWithArrows}
	\end{proof}
	We next turn our attention to implications of the new divergences.

	
	\section{Trace representations for \(f\)-divergences}
	
	Earlier work in~\cite{beigi2025some} coined the term \textit{trace representations} for the expressions that resemble the most usual representations of divergences that are given by the trace of some operator function.  Of course the best known example is given by the typical expression for the Umegaki relative entropy~\cite{Ume62}, 
	\begin{align} \label{eq:defn:Umegaki}
		D(\rho\Vert\sigma) := \Tr\left[ \rho \left( \log \rho - \log \sigma \right) \right].
	\end{align}
	The relative entropy is a special case of the studied $f$-divergence for $f(x) = x \log x$.
	This observation from~\cite{hirche2023quantum} was made using an integral representation by Frenkel~\cite{frenkel2022integral}. 
	
	In the following, we first give an alternative proof of this statement based on the layer cake representation. 
	After that we proof a conjecture regarding a trace representation for the \Renyi divergence for $\alpha>1$ from~\cite{beigi2025some}. Finally, we briefly discuss the case of general $f$-divergences. 
	
	\subsection{The relative entropy (an alternative proof for Frenkel's formula)} \label{sec:alpha=1}

	In this section we provide an alternative proof of the integral formula for the relative entropy by Frenkel. We start by giving an alternative representation. 
	\begin{prop} \label{prop:algernative_proof_Frenkel}
		For any positive definite operators $A$ and $B$, we have
		\begin{align}
			\log A - \log B 
			&= \int_1^{\infty} \left\{ A > \alpha B \right\} \frac{1}{\alpha} \d \alpha - \int_1^{\infty} \left\{ B > \alpha A \right\} \frac{1}{\alpha} \d \alpha,
			\label{eq:log-difference1}
			\\
			D(A\Vert B)
			&= \int_1^\infty \frac{1}{\gamma} \Tr\left[ A \left( \left\{ A > \gamma B \right\} - \left\{ B > \gamma A \right\} \right) \right] \d \gamma.
			\label{eq:D-projection}
		\end{align}
	\end{prop}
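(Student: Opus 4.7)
The plan is to derive~\eqref{eq:log-difference1} by combining the operator layer cake representation of $\mathrm{D}\log$ (Theorem~\ref{theo:Dlog_formula}) with the fundamental theorem of calculus along the segment from $B$ to $A$, and then obtain~\eqref{eq:D-projection} by multiplying on the left by $A$ and taking the trace.

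The starting point is the identity
\begin{align*}
\log A - \log B = \int_0^1 \mathrm{D}\log[C_s](A - B)\,\d s, \qquad C_s := (1-s)B + sA,
\end{align*}
which holds because $C_s > 0$ for $s \in [0, 1]$ and $\tfrac{\d}{\d s} C_s = A - B$ is self-adjoint. Applying Theorem~\ref{theo:Dlog_formula} turns the integrand into
\begin{align*}
\mathrm{D}\log[C_s](A - B) = \int_0^\infty \{u C_s < A - B\}\,\d u - \int_{-\infty}^0 \{u C_s > A - B\}\,\d u.
\end{align*}
The central step is a change of variable that converts these into projections of the form $\{A > \gamma B\}$ and $\{A < \gamma B\}$. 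Expanding $u C_s$, the inequality $u C_s < A - B$ is equivalent to $(1 - us) A > (1 + u(1-s)) B$; for $s \in (0, 1]$ and $u \in [0, 1/s)$ this reads $A > \gamma(s,u) B$ with
\begin{align*}
\gamma(s, u) := \frac{1 + u(1-s)}{1 - us} \in [1, \infty),
\end{align*}
while for $u \geq 1/s$ the coefficient $1-us$ flips sign and the inequality is impossible for positive $A, B$. An analogous analysis with $v = -u$ handles the negative piece, restricts $v$ to $[0, 1/(1-s))$, and turns the second integrand into $\{A < \tilde\gamma(s,v) B\}$ with $\tilde\gamma \in (0, 1]$. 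A short computation shows that in both pieces the Jacobian is $\d u = (1 + s(\gamma - 1))^{-2}\,\d\gamma$.

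After Fubini's theorem exchanges the $s$- and $\gamma$-integrals, the inner $s$-integral evaluates via the elementary identity
\begin{align*}
\int_0^1 \frac{\d s}{(1 + s(\gamma - 1))^2} = \frac{1}{\gamma}, \qquad \gamma > 0,
\end{align*}
leaving
\begin{align*}
\log A - \log B = \int_1^\infty \frac{\{A > \gamma B\}}{\gamma}\,\d\gamma - \int_0^1 \frac{\{A < \gamma B\}}{\gamma}\,\d\gamma.
\end{align*}
The final substitution $\gamma \mapsto 1/\gamma$ in the second integral rewrites $\{A < \gamma B\}$ on $(0, 1)$ as $\{B > \gamma A\}$ on $(1, \infty)$, yielding~\eqref{eq:log-difference1}. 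Multiplying on the left by $A$ and taking the trace then produces~\eqref{eq:D-projection}.

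I expect the main obstacle to be the bookkeeping in the Möbius-type substitution $u \mapsto \gamma(s,u)$: one must carefully track the sign of $1 - us$ (and of $1 - v(1-s)$) to truncate each $u$-integral to the finite subinterval where the operator inequality is actually achievable, and only then change variables. Justifying the Fubini interchange is routine after that truncation since the integrands are projections (hence uniformly bounded) and the parameter domain is finite, and the neat evaluation of the inner $s$-integral to $1/\gamma$ isolates exactly the kernel appearing in the statement.
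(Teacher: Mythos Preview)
Your proposal is correct and follows essentially the same route as the paper: the fundamental theorem of calculus along $C_s=(1-s)B+sA$, Theorem~\ref{theo:Dlog_formula}, a truncation of the $u$-range where the operator inequality is vacuous, a M\"obius-type change of variables, Fubini, and evaluation of the inner $s$-integral to $1/\gamma$. The only cosmetic difference is that the paper first rewrites the projections as $\{A-B>\gamma B\}$ (respectively $\{A-B<\gamma A\}$) and shifts $\alpha=\gamma+1$ at the end, whereas you pass directly to $\{A>\gamma B\}$ via $\gamma(s,u)=(1+u(1-s))/(1-us)$; both substitutions produce the same Jacobian and the same final expression.
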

	\begin{proof}
		Using the fundamental theorem of calculus and Theorem~\ref{theo:Dlog_formula}, we write
		\begin{align}
			\begin{split} \label{eq:FTC}
				&\log A - \log B 
				= \int_0^1 \mathrm{D} \log \left[ B + \beta (A-B) \right](A-B) \, \d \beta
				\\
				&=
				\underbrace{
					\int_0^1 \int_0^\infty \left\{ A-B > \left( B + \beta(A-B)\right)u \right\} \d u \,\d \beta
				}_{\text{(I) }}
				-
				\underbrace{
					\int_{0}^1 \int_{-\infty}^0 \left\{ A-B < \left( B + \beta(A-B)\right)u \right\} \d u \,\d \beta
				}_{\text{(II)}}.
			\end{split}
		\end{align}
		We continue by evaluating the individual terms, starting with the first.
		\begin{DispWithArrows}[displaystyle, fleqn, mathindent = 75pt]
			\text{(I)}
			&= \int_0^1 \int_0^\infty \left\{ (1-u\beta) (A-B) > u B \right\} \d u \,\d \beta
			\Arrow{for $0\leq \beta \leq 1$ and $u\beta \geq 1$,\\
				\quad $(1-u\beta) (A-B) \leq u B$ } \notag
			\\
			&= \int_0^1 \int_0^{\frac{1}{\beta}} \left\{ (1-u\beta) (A-B) > u B\right\} \d u \,\d \beta \notag
			\\
			&= \int_0^1 \int_0^{\frac{1}{\beta}} \left\{ A - B > \frac{u}{1-u\beta} B \right\} \d u\, \d\beta
			\Arrow{$\gamma = \frac{u}{1-u\beta}$, $\d u = \frac{1}{(1+\beta\gamma)^2}\d \gamma$} \notag
			\\ 
			&= \int_0^1 \int_0^{\infty} \left\{ A - B > \gamma B \right\} \frac{1}{(1+\beta \gamma)^2} \d \gamma \, \d \beta 
			\Arrow{$\because$ Fubini's theorem and \\ dominated convergence theorem}
			\notag
			\\ 
			&= \int_0^{\infty} \left\{ A - B > \gamma B \right\} \int_0^1 \frac{1}{(1+\beta \gamma)^2} \d \beta \, \d \gamma \notag
			\\
			&=\int_0^{\infty} \left\{ A - B > \gamma B \right\} \frac{1}{\gamma + 1} \d \gamma 
			\Arrow{$\alpha  = \gamma + 1$} \notag
			\\
			&= \int_1^{\infty} \left\{ A > \alpha B \right\} \frac{1}{\alpha} \, \d \alpha. 
			\label{eq:(I)}
		\end{DispWithArrows}
		For the second term we get the following.
		\begin{DispWithArrows}[displaystyle, fleqn, mathindent = 65pt] 
			\text{(II)}
			&= \int_0^1 \int_{-\infty}^0 \left\{ (1-u\beta) (A-B) < u B \right\} \d u \,\d \beta 
			\notag
			\\
			&= \int_0^1 \int_{-\infty}^0 \left\{ (1-u\beta+u) (A-B) < u A \right\} \d u \,\d \beta
			\Arrow{for $0\leq \beta \leq 1$ and $u(1-\beta)\leq -1$
				\\ \quad $(1-u\beta+ u) (A-B) \geq u A$ } \notag
			\\
			&= \int_0^1 \int_{\frac{1}{\beta-1}}^0 \left\{ A - B < \frac{u}{1-u\beta+u} A \right\} \d u \, \d \beta 
			\Arrow{$\gamma = \frac{u}{1+(1-\beta)u}$, $\d u = \frac{\d \gamma}{\left(1+(\beta-1)\gamma\right)^2}$}  \notag
			\\
			&= \int_0^1 \int_{-\infty}^0 \left\{ A - B < \gamma A \right\} \frac{1}{\left( 1 + (\beta-1)\gamma\right)^2} \d \gamma \, \d \beta 
			\Arrow{$\because$ Fubini's theorem and\\ \, dominated convergence theorem}
			\notag
			\\
			&=
			\int_{-\infty}^0 \left\{ A - B < \gamma A \right\} \int_0^1 \frac{1}{\left( 1 + (\beta-1)\gamma\right)^2} \d \beta \, \d \gamma \notag
			\\
			&= \int_{-\infty}^0 \left\{ A - B < \gamma A \right\} \frac{1}{1-\gamma} \d \gamma
			\Arrow{$\alpha = 1-\gamma$} \notag
			\\
			&= \int_1^{\infty} \left\{ B > \alpha A \right\} \frac{1}{\alpha} \d \alpha.
			\label{eq:(II)}
		\end{DispWithArrows}
		Combining \eqref{eq:FTC}, \eqref{eq:(I)}, and \eqref{eq:(II)}, we have
		\begin{align}
			\log A - \log B 
			&= \int_1^{\infty} \left\{ A > \alpha B \right\} \frac{1}{\alpha} \d \alpha - \int_1^{\infty} \left\{ B > \alpha A \right\} \frac{1}{\alpha} \d \alpha,
			\notag
			\\
			D(A\Vert B)
			&= \int_1^\infty \frac{1}{\gamma} \Tr\left[ A \left( \left\{ A > \gamma B \right\} - \left\{ B > \gamma A \right\} \right) \right] \d \gamma.
			\notag\qedhere
		\end{align}
	\end{proof}
	
	The following is a restatement of the result by Frenkel~\cite{frenkel2022integral}. However, we follow an alternative path by building on our previous result. 
	\begin{prop} \label{prop:algernative_proof_Frenkel2}
		For any positive definite operators $A$ and $B$, we have
		\begin{align}
			D(A\Vert B )
			&= \int_1^\infty \Big(\frac{1}{\gamma} E_\gamma(A\|B) +  \frac{1}{\gamma^2} E_\gamma(B\|A)\Big)\, \d \gamma
			+ \Tr[A - B].
		\end{align}
	\end{prop}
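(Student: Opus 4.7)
The plan is to reduce the identity to Proposition~\ref{prop:algernative_proof_Frenkel} and then algebraically repackage its projector-based integrand into Hockey-Stick form, at the cost of a correction term that I will evaluate explicitly.

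First, I would solve the definitional identity $E_\gamma(A\Vert B) = \Tr\!\left[A\left\{A>\gamma B\right\}\right] - \gamma\Tr\!\left[B\left\{A>\gamma B\right\}\right]$ (and its swap $A\leftrightarrow B$) for $\Tr\!\left[A\left\{A>\gamma B\right\}\right]$ and $\Tr\!\left[A\left\{B>\gamma A\right\}\right]$. Substituting into Proposition~\ref{prop:algernative_proof_Frenkel} rewrites the integrand as
\begin{align*}
\frac{1}{\gamma}\Tr\!\left[A\left(\left\{A>\gamma B\right\}-\left\{B>\gamma A\right\}\right)\right]
= \frac{1}{\gamma}E_\gamma(A\Vert B) + \frac{1}{\gamma^2}E_\gamma(B\Vert A) + \Tr\!\left[B\left\{A>\gamma B\right\}\right] - \frac{1}{\gamma^2}\Tr\!\left[B\left\{B>\gamma A\right\}\right],
\end{align*}
so it remains to show that the $[1,\infty)$-integral of the last two residual terms equals $\Tr[A-B]$.

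Second, I would evaluate these residuals via Lemma~\ref{Lem:HS-derivative}, which identifies $\Tr\!\left[B\left\{A>\gamma B\right\}\right]$ with $-\tfrac{\d}{\d\gamma}E_\gamma(A\Vert B)$ almost everywhere. Since $A,B>0$ forces $E_\gamma(A\Vert B)=0$ for $\gamma\geq\|AB^{-1}\|_\infty$, the first residual telescopes to $\int_1^\infty \Tr\!\left[B\left\{A>\gamma B\right\}\right]\d\gamma = E_1(A\Vert B)$. For the second, I would again expand $\Tr\!\left[B\left\{B>\gamma A\right\}\right] = E_\gamma(B\Vert A) + \gamma\Tr\!\left[A\left\{B>\gamma A\right\}\right]$, and integrate the $1/\gamma$-weighted derivative $-\tfrac{\d}{\d\gamma}E_\gamma(B\Vert A)$ by parts; the $E_\gamma(B\Vert A)/\gamma^2$ contributions then cancel, yielding $\int_1^\infty \tfrac{1}{\gamma^2}\Tr\!\left[B\left\{B>\gamma A\right\}\right]\d\gamma = E_1(B\Vert A)$.

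Finally, combining the two residuals gives the correction $E_1(A\Vert B) - E_1(B\Vert A) = \Tr[(A-B)_+ - (B-A)_+] = \Tr[A-B]$, which is exactly the claim. The main technical subtlety is the integration by parts in the second residual: the map $\gamma\mapsto\left\{B>\gamma A\right\}$ is only piecewise analytic (Lemma~\ref{lemm:differentiability_projection}), but $E_\gamma(B\Vert A)$ is convex in $\gamma$ (Lemma~\ref{lemm:Hockey-stick_convex_nonincreasing}) and hence absolutely continuous on bounded intervals, while positive-definiteness of $A$ forces $E_\gamma(B\Vert A)=0$ for $\gamma\geq\|BA^{-1}\|_\infty$, so all boundary terms at infinity vanish and the standard Lebesgue integration by parts applies cleanly.
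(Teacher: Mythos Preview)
Your proof is correct and follows essentially the same route as the paper: both start from Proposition~\ref{prop:algernative_proof_Frenkel}, perform the identical algebraic decomposition of the projector integrand into the two Hockey-Stick terms plus the residuals $\int_1^\infty \Tr[B\{A>\gamma B\}]\,\d\gamma$ and $\int_1^\infty \gamma^{-2}\Tr[B\{B>\gamma A\}]\,\d\gamma$, and then show these combine to $\Tr[A-B]$. The only difference is cosmetic bookkeeping in evaluating the residuals: the paper handles the first by splitting $[1,\infty)=[0,\infty)\setminus[0,1)$ and invoking \eqref{eq:integral_HS}, and the second by the substitution $\gamma\mapsto\gamma^{-1}$, after which the two $[0,1]$-integrals cancel; you instead read each residual as a telescoping or parts-integrated derivative of $E_\gamma$ and land on $E_1(A\Vert B)-E_1(B\Vert A)=\Tr[A-B]$. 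Both computations are equally clean, and your justification of the integration by parts via absolute continuity of $E_\gamma$ is sound.
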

	\begin{proof}
		Continuing from the previous proposition,
		\begin{DispWithArrows}[displaystyle] 
			&\int_1^\infty \frac{1}{\gamma} \Tr\left[ A \left( \left\{ A > \gamma B \right\} - \left\{ B > \gamma A \right\} \right) \right] \d \gamma
			\notag
			\\
			&= \int_1^{\infty} \Tr\left[ \frac{1}{\gamma} \left( A - \gamma B \right) \left\{ A > \gamma B \right\} + \frac{1}{\gamma^2} \left( B - \gamma A \right) \left\{ B > \gamma A \right\} \right] \d \gamma \notag
			\\
			&\quad + \int_1^{\infty} \Tr\left[ B \left\{ A > \gamma B \right\} \right] \d \gamma 
			- \int_1^{\infty} \frac{1}{\gamma^2} \Tr\left[  B \left\{ B > \gamma A \right\} \right] \d \gamma \notag
			\\
			&= \int_1^\infty \Big(\frac{1}{\gamma} E_\gamma(A\|B) +  \frac{1}{\gamma^2} E_\gamma(B\|A)\Big)\, \d \gamma \notag
			\\
			&\quad + \underbrace{\int_1^{\infty} \Tr\left[ B \left\{ A > \gamma B \right\} \right] \d \gamma}_{\text{(i)}} 
			- \underbrace{\int_1^{\infty} \frac{1}{\gamma^2} \Tr\left[  B \left\{ B > \gamma A \right\} \right] \d \gamma}_{\text{(ii)}}, \label{eq:temp}
		\end{DispWithArrows}
		and
		\begin{DispWithArrows}[displaystyle] 
			\text{(i)}
			&= \int_1^{\infty} \Tr\left[ B \left\{ A > \gamma B \right\} \right] \d \gamma
			\notag
			\\
			&=  -\int_0^1 \Tr\left[ B \left\{ A > \gamma B \right\}\right] \d \gamma 
			+ \int_0^{\infty} \Tr\left[ B \left\{ A > \gamma B\right\}\right] \d \gamma \notag
			\\
			&= -\int_0^1 \Tr\left[ B \left\{ A > \gamma B \right\} \right] \d \gamma 
			+  \Tr[A],
			\label{eq:(i)}
		\end{DispWithArrows}
		
		\begin{DispWithArrows}[displaystyle] 
			\text{(ii)}
			&= \int_1^{\infty} \frac{1}{\gamma^2} \Tr\left[  B \left\{ B > \gamma A \right\} \right] \d \gamma
			\Arrow{$\alpha = \frac{1}{\gamma}$}
			\notag
			\\
			&= \int_0^1 \Tr\left[ B \left\{ \alpha B > A \right\} \right] \d \alpha
			\notag
			\\
			&= \int_0^1 \Tr\left[ B \left( \I - \left\{ \alpha B \leq A \right\} \right) \right] \d \alpha
			\Arrow{ $\because \{\alpha\mid\{\alpha B = A \}\neq 0\}$\\ \phantom{\,$\because$\,}\text{has measure zero}}
			\notag
			\\
			&= \int_0^1 \Tr\left[ B \left( \I - \left\{ \alpha B < A \right\} \right) \right] \d \alpha
			\notag
			\\
			&= \Tr[B] - \int_0^1 \Tr\left[ B \left\{ A > \alpha B \right\} \right] \d \alpha.
			\label{eq:(ii)}
		\end{DispWithArrows}
		
		By \eqref{eq:D-projection}, \eqref{eq:temp}, \eqref{eq:(i)}, and \eqref{eq:(ii)}, we obtain
		\begin{align}
			D(A\Vert B )
			&= \int_1^\infty \frac{1}{\gamma} \Tr\left[ A \left( \left\{ A > \gamma B \right\} - \left\{ B > \gamma A \right\} \right) \right] \d \gamma
			\\
			&= \int_1^\infty \Big(\frac{1}{\gamma} E_\gamma(A\|B) +  \frac{1}{\gamma^2} E_\gamma(B\|A)\Big)\, \d \gamma
			+ \Tr[A - B].
		\end{align}
	\end{proof}
	
	\subsection{The \Renyi divergence for \(\alpha > 1\)} \label{sec:alpha>1}
	
	In this section, we prove the trace expression conjectured in~\cite{beigi2025some} for the \Renyi divergence with $\alpha>1$. For now, we focus on normalized quantum states. That is because we can give a particularly simple proof using the previously derived results. Afterwards, we comment on the case of more general positive operators. 
	
	\begin{theo}\label{Theo:trace-renyi-QS}
		For any quantum states $\rho$, $\sigma$ and any $\alpha>1$,
		\begin{align}
			H_{\alpha}(\rho\Vert \sigma)
			&= \frac{-1}{\alpha-1} + \int_0^\infty \Tr\left[ \left( \rho \frac{1}{\sigma+t\I} \right)^{\alpha} \right] \d t .
		\end{align}
	\end{theo}
	\begin{proof}
		Recall the layer cake representation in Equation~\eqref{Eq:Def-1-dif}, which speciallised to the Hellinger divergence gives,
		\begin{align}
			H_\alpha(\rho\Vert\sigma) =
			\frac{-1}{\alpha-1} + \frac{\alpha}{\alpha-1}\int_0^\infty \gamma^{\alpha-1} \Tr\left[ \sigma \proj{ \rho > \gamma \sigma}\right] \, \d \gamma.
		\end{align}
		Using Theorem~\ref{theo:Dlog_formula} then gives, 
		\begin{align}
			H_\alpha(\rho\Vert\sigma) &=
			\frac{-1}{\alpha-1} + \frac{\alpha}{\alpha-1}\int_0^\infty \Tr\left[ \sigma \frac{1}{\sigma+t\I} \rho \frac{1}{\sigma+t\I} \left( \rho \frac{1}{\sigma+t\I} \right)^{\alpha-1} \right] \, \d t \\
			&=
			\frac{-1}{\alpha-1} + \frac{\alpha}{\alpha-1}\int_0^\infty \Tr\left[ \sigma \frac{1}{\sigma+t\I}  \left( \rho \frac{1}{\sigma+t\I} \right)^{\alpha} \right] \, \d t. \label{Eq:Almost-tr-exp}
		\end{align}
		Finally, we employ Lemma~\ref{lemm:order_above_1} below, from which the result immediately follows.
	\end{proof}
	
	
	Once we move to positive operators, we can not as much rely on the previously derived results. We here state the formal result and provide the proof in the Appendix. 
	\begin{theo}\label{Theo:trace-renyi-PO}
		For positive semidefinite operators $A, B$ and any $\alpha>1$,
		\begin{align}
			H_{\alpha}(A\Vert B)
			&= \frac{-1}{\alpha-1}\Tr[A] + \int_0^\infty \Tr\left[ \left( A \frac{1}{B+t\I} \right)^{\alpha} \right] \d t .
		\end{align}
	\end{theo}
	The proof relies again strongly on Theorem~\ref{theo:Dlog_formula} and Lemma~\ref{lemm:order_above_1}, however it takes a more direct route from the integral representation rather than using the layer cake representation. For details we refer to the Appendix~\ref{sec:appendix_proof}. 
	
	Theorem~\ref{Theo:trace-renyi-PO} with the observation made in \cite{beigi2025some} allow us to relate the integral quasi \Renyi divergence to the quasi sandwiched \Renyi divergence.
	
	\begin{prop}[Relation to quasi sandwiched \Renyi divergence] \label{prop:relation_sandwiched}
		For any quantum states $\rho,\sigma $ satisfying $\rho\ll\sigma$, and any $\alpha > 1$,
		\begin{align}
			Q_{\alpha}(A \Vert B ) \leq \Tr\left[ \left( B^{\frac{1-\alpha}{2\alpha}}A  B^{\frac{1-\alpha}{2\alpha}} \right)^{\alpha} \right].
		\end{align}
	\end{prop}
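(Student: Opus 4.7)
The plan is to use Theorem~\ref{Theo:trace-renyi-PO} to rewrite $Q_\alpha(A\|B)$ as an integral of traces of operator powers, express the sandwiched divergence in a matching integral form via the standard integral representation of negative fractional powers, and then invoke an operator inequality to cover the remaining gap.

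First I would apply Theorem~\ref{Theo:trace-renyi-PO}, which (after using $\Tr[(A(B+t\I)^{-1})^\alpha]=\Tr[(A^{1/2}(B+t\I)^{-1}A^{1/2})^\alpha]$ by the identical nonzero spectra) yields, for normalized $A$,
\[
Q_\alpha(A\|B) \;=\; (\alpha-1)\int_0^\infty \Tr\!\left[\bigl(A^{1/2}(B+t\I)^{-1}A^{1/2}\bigr)^\alpha\right]\d t.
\]
Setting $s := (\alpha-1)/\alpha\in(0,1)$ and using the standard integral representation
\[
B^{-s} \;=\; \frac{\sin(\pi s)}{\pi}\int_0^\infty t^{-s}(B+t\I)^{-1}\,\d t,
\]
the right-hand side of the proposition becomes
\[
\Tr\!\left[(A^{1/2}B^{-s}A^{1/2})^\alpha\right] \;=\; \Tr\!\left[\left(\tfrac{\sin(\pi s)}{\pi}\int_0^\infty t^{-s}\, A^{1/2}(B+t\I)^{-1}A^{1/2}\,\d t\right)^\alpha\right].
\]

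Writing $X_t := A^{1/2}(B+t\I)^{-1}A^{1/2}\geq 0$, the proposition thus reduces to the trace inequality
\[
(\alpha-1)\int_0^\infty \Tr[X_t^\alpha]\,\d t \;\leq\; \Tr\!\left[\left(\tfrac{\sin(\pi s)}{\pi}\int_0^\infty t^{-s} X_t\, \d t\right)^\alpha\right].
\]
A direct computation with $x_t = a/(b+t)$ (using $\int_0^\infty x_t^\alpha\,\d t = a^\alpha b^{1-\alpha}/(\alpha-1)$ and $\int_0^\infty t^{-s}x_t\,\d t = (\pi/\sin(\pi s))\, a b^{-s}$, so that $s\alpha=\alpha-1$) shows that the inequality is in fact an \emph{equality} in the commutative (scalar) case; hence the inequality above merely captures a noncommutativity correction.

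The main obstacle is to establish this last trace inequality with the correct (sharp) constant. Naive routes move the inequality the wrong way: both the numerical convexity of $X\mapsto\Tr[X^\alpha]$ and the integral Minkowski inequality for the Schatten $\alpha$-norm give upper bounds on $\Tr[(\int t^{-s}X_t\,\d t)^\alpha]$, whereas a straightforward Araki--Lieb--Thirring estimate applied inside the integral (mirroring the $\alpha\in(0,1)$ argument in \cite{beigi2025some}) only yields the weaker Petz bound $Q_\alpha\leq \bar Q_\alpha$. To obtain the sharper sandwiched bound one must exploit both the specific resolvent form of $X_t$ and its integral relation to $B^{-s}$; this is precisely the role played by the operator-theoretic observation from \cite{beigi2025some}, which supplies the refined inequality needed to close the argument and yield $Q_\alpha(A\|B)\leq \widetilde{Q}_\alpha(A\|B)$.
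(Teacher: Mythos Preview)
Your proposal is essentially the same as the paper's proof: both apply Theorem~\ref{Theo:trace-renyi-PO} to obtain the trace representation $Q_\alpha(A\|B)=(\alpha-1)\int_0^\infty\Tr[(A^{1/2}(B+t\I)^{-1}A^{1/2})^\alpha]\,\d t$ and then invoke \cite[Proposition~3.10]{beigi2025some} to conclude. The paper's proof is in fact just those two sentences; your additional material (the integral representation of $B^{-s}$, the reduction to a trace inequality for the resolvent family $X_t$, and the discussion of why convexity, Minkowski, and a naive Araki--Lieb--Thirring estimate fail or only yield the weaker Petz bound) is correct and illuminating context, but it is not needed for the argument and you ultimately defer to the same external reference for the decisive step.
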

	\begin{proof}
		The claim follows from Theorem~\ref{Theo:trace-renyi-PO} and 
		\cite[Proposition~3.10]{beigi2025some}  for $\alpha >1$.
	\end{proof}
	
	\smallskip
	Finally, we need to prove the lemma that makes all of the results in this section possible. 
	\begin{lemm} \label{lemm:order_above_1}
		Let $A$ and $B$ be positive semi-definite operators satisfying $ \|A B^{-1} \|_{\infty} < \infty$.
		Then, 
		\begin{align}
			\int_0^\infty \Tr\left[ B \frac{1}{B+t\I} \left( A \frac{1}{B+t\I} \right)^\alpha \right] \d t
			&= \frac{\alpha-1}{\alpha} \int_0^\infty \Tr\left[ \left( A \frac{1}{B + t\I} \right)^\alpha\right] \d t, \quad \forall\, \alpha>1.
		\end{align}
	\end{lemm}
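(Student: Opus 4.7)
The plan is to reduce the identity to an integration-by-parts argument, using the resolvent identity $B(B+t\I)^{-1} = \I - t(B+t\I)^{-1}$ to split the left-hand side and then relate the two pieces. Writing $I(t) := A(B+t\I)^{-1}$ and $f(t) := \Tr[I(t)^\alpha]$, the desired identity is equivalent to
\begin{align}
\int_0^\infty t\,\Tr\bigl[(B+t\I)^{-1} I(t)^\alpha\bigr]\,\d t \,=\, \frac{1}{\alpha}\int_0^\infty f(t)\,\d t.
\end{align}

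To compute $f'(t)$ cleanly for non-integer $\alpha$, I would pass to the self-adjoint positive operator $L(t) := A^{1/2}(B+t\I)^{-1}A^{1/2}$. Since $I(t) = XY$ and $L(t) = YX$ with $X = A^{1/2}$ and $Y = A^{1/2}(B+t\I)^{-1}$, the two operators share the same nonzero spectrum, so $f(t) = \Tr[L(t)^\alpha]$. Standard differentiation of the trace of a spectral function applied to an analytic family yields
\begin{align}
f'(t) \,=\, \alpha\,\Tr\bigl[L(t)^{\alpha-1} L'(t)\bigr] \,=\, -\alpha\,\Tr\bigl[L(t)^{\alpha-1} A^{1/2}(B+t\I)^{-2}A^{1/2}\bigr].
\end{align}
The key linking identity
\begin{align}
A^{1/2} L(t)^\beta \,=\, I(t)^\beta\, A^{1/2}, \qquad \beta \geq 0,
\end{align}
proved by noting that each eigenvector $v$ of $L(t)$ with eigenvalue $\lambda$ yields $I(t)(A^{1/2}v) = \lambda (A^{1/2}v)$ and then extending to non-integer powers via functional calculus, combined with cyclic invariance of the trace, collapses the previous display to
\begin{align}
f'(t) \,=\, -\alpha\,\Tr\bigl[(B+t\I)^{-1} I(t)^\alpha\bigr].
\end{align}

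The remaining step is integration by parts, $\int_0^\infty f(t)\,\d t = [t\,f(t)]_0^\infty - \int_0^\infty t\,f'(t)\,\d t$. The boundary terms vanish: at $t=0$, $f$ is bounded by the hypothesis $\|AB^{-1}\|_\infty < \infty$; and as $t \to \infty$, $\|I(t)\|_\infty = O(1/t)$ forces $t\,f(t) = O(t^{1-\alpha}) \to 0$ since $\alpha > 1$. This yields $\int_0^\infty f(t)\,\d t = \alpha \int_0^\infty t\,\Tr[(B+t\I)^{-1} I(t)^\alpha]\,\d t$, and rearranging via the initial resolvent splitting completes the proof. The main obstacle will be justifying the linking identity for non-integer $\beta$ rigorously, since $I(t)$ is not self-adjoint; the cleanest resolution is to handle the case of invertible $A$ first, where the similarity $I(t) = A^{1/2} L(t) A^{-1/2}$ directly implies $I(t)^\beta = A^{1/2} L(t)^\beta A^{-1/2}$, and then recover the general case by replacing $A$ with $A+\epsilon\I$ and letting $\epsilon\downarrow 0$, using dominated convergence to pass the limit through both integrals.
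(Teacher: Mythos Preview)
Your argument is correct and shares the same skeleton as the paper's main proof: both reduce the claim to the equivalent statement $\int_0^\infty t\,\Tr[(B+t\I)^{-1}I(t)^\alpha]\,\d t = \tfrac{1}{\alpha}\int_0^\infty \Tr[I(t)^\alpha]\,\d t$, establish the derivative identity
\[
\frac{\d}{\d t}\Tr\bigl[I(t)^\alpha\bigr] \;=\; -\alpha\,\Tr\bigl[(B+t\I)^{-1} I(t)^\alpha\bigr],
\]
and finish by integrating by parts against $t$, with the boundary terms killed by $\alpha>1$. The genuine difference is in how that derivative is obtained. The paper writes $\alpha = n + \tilde\alpha$ with $n=\lfloor\alpha\rfloor$, applies the product rule to the $n$ integer factors, and handles the residual fractional power $(I(t))^{\tilde\alpha}$ via the integral representation $X^{\tilde\alpha}=\tfrac{\sin\tilde\alpha\pi}{\pi}\int_0^\infty \lambda^{\tilde\alpha-1}X(X+\lambda\I)^{-1}\d\lambda$; each of the $n+1$ pieces is then shown to contribute the same trace after cycling. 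Your route instead conjugates to the self-adjoint $L(t)=A^{1/2}(B+t\I)^{-1}A^{1/2}$, where $\tfrac{\d}{\d t}\Tr[L(t)^\alpha]=\alpha\Tr[L(t)^{\alpha-1}L'(t)]$ is immediate (this is exactly Lemma~\ref{lemm:HP14}), and then transports back via the similarity $I(t)=A^{1/2}L(t)A^{-1/2}$. This is shorter and sidesteps the integer/fractional bookkeeping entirely; the price is the need for the $\epsilon$-regularization of $A$ to make the similarity honest, which you handle correctly. The paper also records a second proof via the change-of-measure Lemma~\ref{lemm:change-of-measure} combined with Theorem~\ref{theo:change-of-variable}, which is structurally unrelated to either approach.
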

	\begin{proof}
		The claim is equivalent to proving 
		\begin{align}
			\int_0^\infty \Tr\left[ \frac{t}{B+t\I} \left( A \frac{1}{B+t\I} \right)^\alpha \right] \d t
			&= \frac{1}{\alpha} \int_0^\infty \left( A \frac{1}{B + t\I} \right)^\alpha \d t.
		\end{align}
		
		Let $n = \lfloor \alpha \rfloor$ and $\tilde{\alpha} = \alpha - n$.
		We first calculate,
		\begin{align}
			\frac{\d}{\d t}\left[\left( A \frac{1}{B+t\I} \right)^{\alpha} \right]
			&= \frac{\d}{\d t}\bigg[ \underbrace{ A \frac{1}{B+t\I} \cdots A \frac{1}{B+t\I} }_{\text{$n$ times}} \left( A \frac{1}{B+t\I} \right)^{\tilde{\alpha}} \bigg]
			\notag
			\\
			\begin{split}
				\label{eq:order_above_1:derivative}
				&= A \frac{-1}{(B+t\I)^2} A \frac{1}{B+t\I} A \frac{1}{B+t\I} \cdots \left( A \frac{1}{B+t\I} \right)^{\tilde{\alpha}}
				\\
				&+A \frac{1}{B+t\I} \frac{-1}{(B+t\I)^2} A \frac{1}{B+t\I}  \cdots \left( A \frac{1}{B+t\I} \right)^{\tilde{\alpha}}
				\\
				&\quad\vdots
				\\
				&+ A \frac{1}{B+t\I} A \frac{1}{B+t\I} \cdots \frac{-1}{(B+t\I)^2} \cdot \left( A \frac{1}{B+t\I} \right)^{\tilde{\alpha}}
				\\
				&+ \left(A \frac{1}{B+t\I} \right)^n  \cdot \frac{\d}{\d t} \left( A \frac{1}{B+t\I} \right)^{\tilde{\alpha}}.
			\end{split}
		\end{align}
		Note that the first $n$ terms of \eqref{eq:order_above_1:derivative}
		have the same trace $\Tr\left[ \frac{-1}{B+t\I} \left( A \frac{1}{B+t\I} \right)^{\alpha} \right]$.
		We then calculate the trace of the last term in \eqref{eq:order_above_1:derivative}.
		By using the integral representation for the power function $0<x \mapsto x^{\tilde{\alpha}}$, i.e., $X^{\tilde{\alpha}} =  \int_0^\infty  \lambda^{\tilde{\alpha}-1} X \frac{1}{X+ \lambda\I} \d \lambda \cdot \frac{ \sin \alpha \pi }{\pi}$, we have
		\begin{align*}
			\frac{\d}{\d t} \left[ \left( A \frac{1}{B+t\I} \right)^{\tilde{\alpha}} \right]
			&= \frac{\d}{\d t} \int_0^\infty \lambda^{\tilde{\alpha}-1} A \frac{1}{B+t\I}  \frac{1}{A \frac{1}{B+t\I} + \lambda\I} \d \lambda \cdot \frac{ \sin \alpha \pi }{\pi}
			\\
			&= \frac{\d}{\d t} \int_0^\infty \lambda^{\tilde{\alpha}-1} A  \frac{1}{A  + \lambda (B+t\I)  } \d \lambda \cdot \frac{ \sin \alpha \pi }{\pi}
			\\
			&= \int_0^\infty \lambda^{\tilde{\alpha}-1} A  \frac{-\lambda}{\left(A  + \lambda (B+t\I) \right)^2 } \d \lambda \cdot \frac{ \sin \alpha \pi }{\pi}
			\\
			&= \int_0^\infty - \lambda^{\tilde{\alpha}} A \frac{1}{B+t\I} \frac{1}{A \frac{1}{B+t\I} + \lambda \I} \frac{1}{B+t\I} \frac{1}{A \frac{1}{B+t\I} + \lambda \I} \d \lambda \cdot \frac{ \sin \alpha \pi }{\pi}.
		\end{align*}
		Hence, the trace of the last term in \eqref{eq:order_above_1:derivative} is
		\begin{align}
			&\Tr\left[ \left( A \frac{1}{B+t\I} \right)^n \frac{\d}{\d t} \left( A \frac{1}{B+t\I} \right)^{\tilde{\alpha}} \right]
			\notag
			\\
			&= \Tr\left[ \left( A \frac{1}{B+t\I} \right)^n \int_0^\infty - \lambda^{\tilde{\alpha}} A \frac{1}{B+t\I} \frac{1}{A \frac{1}{B+t\I} + \lambda \I} \frac{1}{B+t\I} \frac{1}{A \frac{1}{B+t\I} + \lambda \I} \d \lambda \right]\cdot \frac{ \sin \alpha \pi }{\pi}
			\notag
			\\
			&=\Tr\left[ \left( A \frac{1}{B+t\I} \right)^{n+1} \int_0^\infty - \lambda^{\tilde{\alpha}}  \frac{1}{A \frac{1}{B+t\I} + \lambda \I} \frac{1}{B+t\I} \frac{1}{A \frac{1}{B+t\I} + \lambda \I} \d \lambda \right]\cdot \frac{ \sin \alpha \pi }{\pi}
			\notag
			\\
			&=\Tr\left[ \left( A \frac{1}{B+t\I} \right)^{n+1} \int_0^\infty - \lambda^{\tilde{\alpha}}  \frac{1}{\left(A \frac{1}{B+t\I} + \lambda \I\right)^2}  \d \lambda \cdot \frac{ \sin \alpha \pi }{\pi} \frac{1}{B+t\I}\right]
			\notag
			\\
			&=\Tr\left[ \left( A \frac{1}{B+t\I} \right)^{n+1} \left\{ - \tilde{\alpha} \left( A \frac{1}{B+t\I} \right)^{\tilde{\alpha}-1} \right\} \frac{1}{B+t\I}\right]
			\notag
			\\
			&= - \tilde{\alpha} \Tr\left[ \frac{1}{B+t\I} \left( A \frac{1}{B+t\I} \right)^{\alpha} \right].
			\label{eq:order_above_1:derivative1}
		\end{align}
		
		Using \eqref{eq:order_above_1:derivative}, we now calculate
		\begin{align}
			\frac{\d}{\d t}\left[ t \left( A \frac{1}{B+t\I} \right)^{\alpha} \right]
			\notag
			&= \left( A \frac{1}{B+t\I} \right)^{\alpha}
			+ t \frac{\d}{\d t}\left[\left( A \frac{1}{B+t\I} \right)^{\alpha} \right]
			\notag
			\\
			\begin{split} \label{eq:order_above_1:derivative2}
				&= \left( A \frac{1}{B+t\I} \right)^{\alpha} + t A \frac{-1}{(B+t\I)^2} A \frac{1}{B+t\I} A \frac{1}{B+t\I} \cdots \left( A \frac{1}{B+t\I} \right)^{\tilde{\alpha}}
				\\
				&\qquad\qquad\qquad\quad\;\, + t A \frac{1}{B+t\I} \frac{-1}{(B+t\I)^2} A \frac{1}{B+t\I}  \cdots \left( A \frac{1}{B+t\I} \right)^{\tilde{\alpha}}
				\\
				&\qquad\qquad\qquad\quad\quad \vdots
				\\
				&\qquad\qquad\qquad\quad\;\, + tA \frac{1}{B+t\I} A \frac{1}{B+t\I} \cdots \frac{-1}{(B+t\I)^2} \cdot \left( A \frac{1}{B+t\I} \right)^{\tilde{\alpha}}
				\\
				&\qquad\qquad\qquad\quad\;\, + t\left(A \frac{1}{B+t\I} \right)^n  \cdot \frac{\d}{\d t} \left( A \frac{1}{B+t\I} \right)^{\tilde{\alpha}}.
			\end{split}
		\end{align}
		
		Using \eqref{eq:order_above_1:derivative1}, we take $\int_0^\infty \Tr[ \,\cdot\, ]\d t $ on \eqref{eq:order_above_1:derivative2} to obtain
		\begin{align}
			\int_0^\infty \Tr \left[ \frac{\d}{\d t} t \left( A \frac{1}{B+t\I} \right)^{\alpha} \right]
			&=
			\int_0^\infty \Tr\left[ \left( A \frac{1}{B+t\I} \right)^{\alpha} \right] \d t
			\notag
			\\
			&\quad - n \cdot \int_0^\infty \Tr\left[ \frac{t}{B+t\I} \left( A \frac{1}{B+t\I} \right)^{\alpha} \right] - \tilde{\alpha} \cdot \int_0^\infty \Tr\left[ \frac{t}{B+t\I} \left( A \frac{1}{B+t\I} \right)^{\alpha} \right]
			\notag
			\\
			&= \int_0^\infty \Tr\left[ \left( A \frac{1}{B+t\I} \right)^{\alpha} \right] \d t - \alpha \int_0^\infty \Tr\left[ \frac{t}{B+t\I} \left( A \frac{1}{B+t\I} \right)^{\alpha}\right].
			\label{eq:order_above_1:derivative3}
		\end{align}
		
		On the other hand, for $\alpha>1$ and bounded $A$, we have
		\begin{align}
			\int_0^\infty \Tr \left[ \frac{\d}{\d t} t \left( A \frac{1}{B+t\I} \right)^{\alpha} \right]
			&=  \Tr \left[ \int_0^\infty\frac{\d}{\d t} t \left( A \frac{1}{B+t\I} \right)^{\alpha} \right]
			\notag
			\\
			&= \left. \Tr \left[  t \left( A \frac{1}{B+t\I} \right)^{\alpha} \right] \right|_{0}^\infty
			\notag
			\\
			&= 0
			\label{eq:order_above_1:derivative4}
		\end{align}
		Combining \eqref{eq:order_above_1:derivative3} and \eqref{eq:order_above_1:derivative4}, we get
		\begin{align}
			\int_0^\infty \Tr\left[ \left( A \frac{1}{B+t\I}\right)^{\alpha} \right]
			= \alpha \int_0^\infty \Tr \left[ \frac{t}{B+t\I} \left( A \frac{1}{B+t\I}\right)^{\alpha} \right], 
			\notag
		\end{align}
		concluding the proof.
	\end{proof}
	
	\medskip 
	By applying a change-of-measure argument (Lemma~\ref{lemm:change-of-measure}) that will be introduced later in Section~\ref{sec:RS-integral},
	Lemma~\ref{lemm:order_above_1} has a simpler proof, which we provide below.
	\begin{proof}[An alternative proof of Lemma~\ref{lemm:order_above_1}]
		Recall \eqref{eq:change-of-measure_combined} of Lemma~\ref{lemm:change-of-measure}:
		\begin{align*}
			\int_0^\infty g'(\gamma) \Tr[A\{A>\gamma B\}] \, \d \gamma=\int_0^\infty \left(g(\gamma)+\gamma g'(\gamma)\right) \Tr[A\{A>\gamma B\}] \, \d \gamma.
		\end{align*}
		
		Let $g(\gamma)=\gamma^{\alpha-1}$, $g'(\gamma)=(\alpha-1)\gamma^{\alpha-2}$, then
		\[
		(\alpha-1)\int_0^\infty \gamma^{\alpha-2} \Tr[A\{A>\gamma B\}] \, \d \gamma
		=\alpha \int_0^\infty \gamma^{\alpha-1} \Tr[B\{A>\gamma B\}] \, \d \gamma
		\]
		
		By Theorem~\ref{theo:change-of-variable},
		\begin{align}
			\int_0^\infty \Tr\left[ B\frac{1}{B+t\I}\left(A\frac{1}{B+t\I}\right)^\alpha \right]\d t 
			&=\Tr \left[ B \int_0^\infty \gamma^{\alpha-1} \{A>\gamma B\} \right] \, \d \gamma\\
			&=\frac{\alpha-1}{\alpha} \Tr \left[ A \int_0^\infty \gamma^{\alpha-2} \{A>\gamma B\} \right] \, \d \gamma\\
			&=\frac{\alpha-1}{\alpha}\int_0^\infty \Tr\left[ \left(A\frac{1}{B+t\I}\right)^\alpha \right] \, \d t,
		\end{align}
		concluding the proof.
	\end{proof}

	\subsection{General \(f\)-divergences}
	The results in the previous section raise the question whether the same holds true for more general functions $f$. In that direction, we state the following observation. 
	\begin{prop} \label{prop:genearl_f}
		For any quantum states $\rho$, $\sigma$ and any convex $f$ such that $f'(\gamma)$ is Lebesgue-integrable on $[0,\e^{D_{\max}(\rho\Vert\sigma)}]$, 
		\begin{align}
			D_f(\rho\Vert\sigma) &=
			f(0) + \int_0^\infty \Tr\left[ \sigma \frac{1}{\sigma+t\I} \rho \frac{1}{\sigma+t\I} f'\left( \rho \frac{1}{\sigma+t\I} \right) \right] \, \d t .
		\end{align}
	\end{prop}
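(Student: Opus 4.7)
The plan is to derive the formula as a direct application of the operator change-of-variables theorem (Theorem~\ref{theo:change-of-variable}) to the layer cake representation of the $f$-divergence. The starting point is the identity
\begin{align*}
D_f(\rho\Vert\sigma) = f(0) + \int_0^\infty f'(\gamma)\,\Tr\left[\sigma\,\{\rho > \gamma\sigma\}\right] \d\gamma,
\end{align*}
which follows from the definition in~\eqref{Eq:Def-1-dif} combined with the equivalence $D_f^{\ast} = D_f$ established in Theorem~\ref{theorem:f_divergence_layer-cake}. Since $\rho\ll\sigma$, we have $\|\rho\sigma^{-1}\|_{\infty} = \e^{D_{\max}(\rho\Vert\sigma)} < \infty$, and the projection $\{\rho > \gamma\sigma\}$ vanishes for $\gamma \geq \e^{D_{\max}(\rho\Vert\sigma)}$. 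Combined with the hypothesis that $f'$ is Lebesgue-integrable on $[0,\e^{D_{\max}(\rho\Vert\sigma)}]$, this gives absolute integrability of the integrand.

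Next I would bring $\sigma$ and the trace inside the $\gamma$-integral via Fubini's theorem, and then invoke Theorem~\ref{theo:change-of-variable} with $A=\rho$, $B=\sigma$, and $h=f'$ to obtain
\begin{align*}
\int_0^\infty f'(\gamma)\,\{\rho > \gamma\sigma\}\, \d\gamma = \int_0^\infty \frac{1}{\sigma+t\I}\,\rho\,\frac{1}{\sigma+t\I}\, f'\!\left(\rho\,\frac{1}{\sigma+t\I}\right) \d t.
\end{align*}
Taking the trace against $\sigma$ and pushing it back inside the $t$-integral by Fubini, justified because the $t$-integral is finite (it equals the already-controlled $\gamma$-integral on the left), produces the asserted representation.

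The main obstacle I anticipate is purely technical: carefully justifying the two Fubini-type interchanges and checking that the hypothesis $r = \|\rho\sigma^{-1}\|_\infty < \infty$ of Theorem~\ref{theo:change-of-variable} is met and that $h = f'$ is admissible as a Lebesgue-integrable integrand. The bounded-support observation above collapses the $\gamma$-integral to one over $[0,\e^{D_{\max}(\rho\Vert\sigma)}]$ against the finite measure $|f'(\gamma)|\,\d\gamma$, with an operator-valued factor of bounded operator norm, at which point all the interchanges are routine. Once this bookkeeping is in place, the proposition becomes essentially a one-line consequence of Theorem~\ref{theo:change-of-variable}.
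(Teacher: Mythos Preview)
Your proposal is correct and follows essentially the same approach as the paper: start from the layer cake representation~\eqref{Eq:Def-1-dif} and apply Theorem~\ref{theo:change-of-variable} with $A=\rho$, $B=\sigma$, $h=f'$. The paper's proof is in fact even terser than yours, so your additional care with the Fubini interchanges and the verification of the hypotheses of Theorem~\ref{theo:change-of-variable} is if anything an improvement in rigor.
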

	\begin{proof}
		Again, we start from Equation~\eqref{Eq:Def-1-dif}
		\begin{align}
			D_f(\rho\Vert\sigma) =
			f(0) + \int_0^\infty f'(\gamma) \Tr\left[ \sigma \proj{ \rho > \gamma \sigma}\right] \, \d \gamma.
		\end{align}
		Using  Theorem~\ref{theo:change-of-variable} the claim follows directly. 
	\end{proof}
	Note that in the commuting setting, it can easily be checked that this corresponds to the usual definition of the classical $f$-divergence. 
	

	
	\section{Application to error exponents} \label{sec:exponents}
	
	We are interested in two quantities, the \textbf{error exponent},
	\begin{align}
		B_e(r) = \sup\{ -\limsup_{n\rightarrow\infty} \frac1n \log\alpha_n(T_n) \mid \limsup_{n\rightarrow\infty} \frac1n \log\beta_n(T_n) \leq -r \}, 
	\end{align}
	and the \textbf{strong converse exponent},
	\begin{align}
		B^*_e(r) = \sup\{ -\liminf_{n\rightarrow\infty} \frac1n \log(1-\alpha_n(T_n)) \mid \limsup_{n\rightarrow\infty} \frac1n \log\beta_n(T_n) \leq -r \}.
	\end{align}
	We denote by $S_n(a):=\{\rho^{\otimes n}-\e^{na}\sigma^{\otimes n} > 0\}$ the quantum Neyman--Pearson tests. 
	\begin{prop}[Error probabilities via quantum Neyman--Pearson tests] \label{prop:threshold_test}
		Let $n\in\mathds{N}$ to be an arbitrary integer.
		For all $\alpha\in(0,1)\cup(1,\infty)$, 
		\begin{align} \label{eq:type-II_error_upper_bound}
			\log\Tr\left[ \sigma^{\otimes n}  S_n(a) \right] 
			&\leq -n a - n(\alpha-1)\left( a - \frac1n D_\alpha(\rho^{\otimes n}\|\sigma^{\otimes n})\right).
		\end{align}
		For all $\alpha>1$, 
		\begin{align} \label{eq:type-I_success_upper_bound}
			\log\tr[\rho^{\otimes n}S_n(a)] 
			&\leq  -n(\alpha-1)\left(a-\frac1n D_\alpha(\rho^{\otimes n}\|\sigma^{\otimes n})\right).
		\end{align}
		For all $0<\alpha<1$,
		\begin{align} \label{eq:type-I_error_upper_bound}
			\log\left[1-\tr[\rho^{\otimes n}S_n(a)] \right]
			\leq -n(\alpha-1)\left(a - \frac1n D_\alpha(\rho^{\otimes n}\|\sigma^{\otimes n})\right).
		\end{align}
	\end{prop}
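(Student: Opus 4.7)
The plan is to bound each error probability by restricting an appropriate layer cake representation of $Q_\alpha(\rho^{\otimes n}\|\sigma^{\otimes n})$ to an interval with endpoint $\gamma=\e^{na}$, then using monotonicity of the integrand in $\gamma$ to pull out a lower bound that isolates the trace $\tr[\,\cdot\, S_n(a)]$ we want to control. This is essentially a quantum Chernoff/Markov-type argument, and after rearranging, the identity $Q_\alpha = \e^{(\alpha-1)D_\alpha}$ converts the resulting bound into the stated exponential form.

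For \eqref{eq:type-II_error_upper_bound}, I would start from the defining layer cake representation
\[
Q_\alpha(\rho^{\otimes n}\|\sigma^{\otimes n}) = \alpha \int_0^\infty \gamma^{\alpha-1}\tr[\sigma^{\otimes n}\{\rho^{\otimes n}>\gamma\sigma^{\otimes n}\}]\,\d\gamma.
\]
Lemma~\ref{Lem:HS-derivative} together with the convexity of $E_\gamma$ in $\gamma$ (Lemma~\ref{lemm:Hockey-stick_convex_nonincreasing}) implies that $\gamma\mapsto\tr[\sigma^{\otimes n}\{\rho^{\otimes n}>\gamma\sigma^{\otimes n}\}]$ is nonincreasing. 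Restricting the integral to $[0,\e^{na}]$ and bounding the integrand below by its value at the right endpoint yields
\[
Q_\alpha(\rho^{\otimes n}\|\sigma^{\otimes n}) \;\geq\; \tr[\sigma^{\otimes n}S_n(a)]\cdot\alpha\int_0^{\e^{na}}\gamma^{\alpha-1}\,\d\gamma \;=\; \e^{n\alpha a}\tr[\sigma^{\otimes n}S_n(a)],
\]
which rearranges to the claim. Because the argument uses only monotonicity and nonnegativity of the integrand, it works uniformly for all $\alpha\in(0,1)\cup(1,\infty)$.

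For \eqref{eq:type-I_success_upper_bound} with $\alpha>1$, the same strategy applies but with the alternative representation from Lemma~\ref{lemm:alpha-representation},
\[
Q_\alpha(\rho^{\otimes n}\|\sigma^{\otimes n}) = (\alpha-1)\int_0^\infty \gamma^{\alpha-2}\tr[\rho^{\otimes n}\{\rho^{\otimes n}>\gamma\sigma^{\otimes n}\}]\,\d\gamma.
\]
Lemma~\ref{lemm:monotone_in_gamma}, applied to the complementary projector, gives the required monotonicity of the integrand; restricting to $[0,\e^{na}]$ produces $Q_\alpha\geq \e^{n(\alpha-1)a}\tr[\rho^{\otimes n}S_n(a)]$. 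For \eqref{eq:type-I_error_upper_bound} with $0<\alpha<1$, I would rewrite $1-\tr[\rho^{\otimes n}S_n(a)] = \tr[\rho^{\otimes n}\{\rho^{\otimes n}\leq\e^{na}\sigma^{\otimes n}\}]$ and invoke the other branch of Lemma~\ref{lemm:alpha-representation}, namely $Q_\alpha = (1-\alpha)\int_0^\infty\gamma^{\alpha-2}\tr[\rho^{\otimes n}\{\rho^{\otimes n}\leq\gamma\sigma^{\otimes n}\}]\,\d\gamma$. Here the integrand is nondecreasing by Lemma~\ref{lemm:monotone_in_gamma}, so restricting instead to $[\e^{na},\infty)$ and bounding the integrand below by its left endpoint value gives $Q_\alpha\geq\e^{n(\alpha-1)a}\bigl(1-\tr[\rho^{\otimes n}S_n(a)]\bigr)$, from which the claim follows.

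No substantial obstacle is anticipated: the entire argument is a quantum Markov inequality applied to the right layer cake representation. The only delicate point is selecting, for each regime, the representation whose threshold at $\gamma=\e^{na}$ extracts exactly the trace appearing on the left-hand side ($\sigma$-weighted vs.\ $\rho$-weighted, and $\{\rho>\gamma\sigma\}$ vs.\ $\{\rho\leq\gamma\sigma\}$); once that choice is made, everything reduces to an elementary integration of a power of $\gamma$ and routine tracking of the exponent.
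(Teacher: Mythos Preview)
Your proposal is correct and follows essentially the same route as the paper: for each inequality you pick the appropriate layer cake representation of $Q_\alpha$ (the defining one for \eqref{eq:type-II_error_upper_bound}, and the two branches of Lemma~\ref{lemm:alpha-representation} for \eqref{eq:type-I_success_upper_bound} and \eqref{eq:type-I_error_upper_bound}), truncate the integral at $\gamma=\e^{na}$, and use monotonicity to extract the relevant trace. The only cosmetic difference is that the paper invokes Lemma~\ref{lemm:monotone_in_gamma} directly for the monotonicity step in the first bound, whereas you derive it from Lemma~\ref{Lem:HS-derivative} plus convexity of $E_\gamma$; since Lemma~\ref{lemm:monotone_in_gamma} is itself proved from exactly those ingredients, the arguments are the same.
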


	\begin{proof}
		Starting from Equation~\eqref{Eq:Q-layerCake}, we have for the \Renyi divergence,
		\begin{align}
			D_\alpha(\rho\|\sigma) = \frac{1}{\alpha-1}\log\left( \alpha \int_{0}^{\infty} \gamma^{\alpha-1}  \Tr\left[ \sigma  \left\{  \rho > \gamma \sigma \right\} \right] \d\gamma\right). 
		\end{align}
		Observe the following for $0<c<e^{D_{\max}(\rho\|\sigma)}$, 
		\begin{DispWithArrows}[displaystyle]
			e^{(\alpha-1)D_\alpha(\rho\|\sigma)} 
			&=  \alpha \int_{0}^{\infty} \gamma^{\alpha-1}  \Tr\left[ \sigma  \left\{  \rho > \gamma \sigma \right\} \right] \d\gamma \\
			&\geq \alpha \int_{0}^{c} \gamma^{\alpha-1}  \Tr\left[ \sigma  \left\{  \rho > \gamma \sigma \right\} \right] \d\gamma 
			\Arrow{\textnormal{by Lemma~\ref{lemm:monotone_in_gamma}}}   
			\\
			&\geq \alpha \int_{0}^{c} \gamma^{\alpha-1}  \Tr\left[ \sigma  \left\{  \rho > c \sigma \right\} \right] \d\gamma \\
			&= c^\alpha \Tr\left[ \sigma  \left\{  \rho > c \sigma \right\} \right].
		\end{DispWithArrows}
		Now, substitute $\rho\rightarrow\rho^{\otimes n}$, $\sigma\rightarrow\sigma^{\otimes n}$, $c=e^{n a}$, 
		\begin{align}
			\Tr\left[ \sigma^{\otimes n}  \left\{  \rho^{\otimes n} > e^{n a} \sigma^{\otimes n} \right\} \right] 
			&\leq e^{-\alpha n a} e^{(\alpha-1)D_\alpha(\rho^{\otimes n}\|\sigma^{\otimes n})} \\
			&= e^{-n a -\frac{(\alpha-1)}{\alpha}\left(\alpha n a - \alpha D_\alpha(\rho^{\otimes n}\|\sigma^{\otimes n})\right)}.
		\end{align}
		Similarly, for $\alpha>1$, from Equation~\eqref{eq:layer-cake_alpha>1-0}, 
		\begin{DispWithArrows}[displaystyle]
			e^{(\alpha-1)D_\alpha(\rho\|\sigma)} 
			&= (\alpha-1) \int_0^\infty \gamma^{\alpha-2} \tr[\rho\{\rho>\gamma\sigma\}] \d\gamma 
			\Arrow{\textnormal{by Lemma~\ref{lemm:monotone_in_gamma}}}   
			\\
			&\geq (\alpha-1) \int_0^c \gamma^{\alpha-2} \tr[\rho\{\rho>c\sigma\}] \d\gamma \\
			&= c^{\alpha-1} \tr[\rho\{\rho>c\sigma\}], 
		\end{DispWithArrows}
		implying
		\begin{align}
			\tr[\rho^{\otimes n}\{\rho^{\otimes n}>e^{na}\sigma^{\otimes n}\}] 
			&\leq e^{(1-\alpha)na}e^{(\alpha-1)D_\alpha(\rho^{\otimes n}\|\sigma^{\otimes n})} \\
			&= e^{-(\alpha-1)(na-D_\alpha(\rho^{\otimes n}\|\sigma^{\otimes n})}.
		\end{align}
		Finally, for $0\leq\alpha\leq1$, from Equation~\eqref{eq:layer-cake_alpha<1-0},  
		\begin{DispWithArrows}[displaystyle]
			e^{(\alpha-1)D_\alpha(\rho\|\sigma)} 
			&= (1-\alpha) \int_0^\infty \gamma^{\alpha-2} \tr[\rho\{\rho\leq\gamma\sigma\}] \d\gamma 
			\Arrow{\textnormal{by Lemma~\ref{lemm:monotone_in_gamma}}}   
			\\
			&\geq (1-\alpha) \int_c^\infty \gamma^{\alpha-2} \tr[\rho\{\rho\leq c\sigma\}] \d\gamma \\
			&= c^{\alpha-1} \tr[\rho\{\rho\leq c\sigma\}], 
		\end{DispWithArrows}
		which implies, 
		\begin{align}
			1-\tr[\rho^{\otimes n}\{\rho^{\otimes n} > e^{na}\sigma^{\otimes n}\}] = \tr[\rho^{\otimes n}\{\rho^{\otimes n}\leq e^{na}\sigma^{\otimes n}\}] &\leq e^{(1-\alpha)na + (\alpha-1)D_\alpha(\rho^{\otimes n}\|\sigma^{\otimes n})} \\
			&= e^{-n(\alpha-1)\left(a - \frac1n D_\alpha(\rho^{\otimes n}\|\sigma^{\otimes n})\right)}
		\end{align}
	\end{proof}

	Now, set $\alpha=s+1$ for some $s>0$, then
	\begin{align}
		\Tr\left[ \sigma^{\otimes n}  \left\{  \rho^{\otimes n} > e^{n a} \sigma^{\otimes n} \right\} \right] \leq e^{- n a - \left(n s a - s D_{s+1}(\rho^{\otimes n}\|\sigma^{\otimes n})\right)}, 
	\end{align}
	which implies, because it holds for all $s>0$,
	\begin{align}
		\Tr\left[ \sigma^{\otimes n}  \left\{  \rho^{\otimes n} > e^{n a} \sigma^{\otimes n} \right\} \right] \leq e^{- n a - \sup_{s>0}\left(n s a - s D_{s+1}(\rho^{\otimes n}\|\sigma^{\otimes n})\right)}, 
	\end{align}
	Note that this is tighter that the inequality used in~\cite{mosonyi2015quantum}. To see that, observe, 
	\begin{align}
		\Tr\left[ \sigma^{\otimes n}  \left\{  \rho^{\otimes n} > e^{n a} \sigma^{\otimes n} \right\} \right] 
		&\leq e^{- n a - \sup_{s>0}\left(n s a - s D_{s+1}(\rho^{\otimes n}\|\sigma^{\otimes n})\right)} \\
		&\leq e^{- n a - \sup_{s>0}\left(n s a - s \widetilde D_{s+1}(\rho^{\otimes n}\|\sigma^{\otimes n})\right)}\\
		&= e^{- n a - n\sup_{s>0}\left( s a - s \widetilde D_{s+1}(\rho\|\sigma)\right)}\\
		&= e^{- n (a+\phi(a))}, 
	\end{align}
	which is~\cite[Eq.~(55)]{mosonyi2015quantum}.

	\begin{prop}[Quantum Markov inequality]
		For some nondecreasing function $f\in \mathcal{F}$ satisfying $f(0)\geq 0$ and for all $0<c<\e^{D_{\max}(\rho \Vert \sigma)}$,
		\begin{align}
			\Tr\left[ \sigma \left\{ \rho > c \sigma \right\} \right] \leq \frac{ D_{f}(\rho \Vert \sigma )}{f(c)}.
		\end{align}
	\end{prop}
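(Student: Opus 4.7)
The plan is to run the proof almost exactly as in the classical Markov inequality, but applied to the Riemann--Stieltjes layer cake representation of $D_f$. Concretely, I start from
\begin{align*}
D_f(\rho\Vert\sigma) = f(0) + \int_0^\infty \Tr\left[ \sigma \proj{ \rho > \gamma \sigma}\right] \, \d f(\gamma),
\end{align*}
which is the definition in \eqref{eq:defn:f}. Because $f$ is nondecreasing, the Stieltjes measure $\d f$ is nonnegative, and because $f(0)\geq 0$ and $\Tr[\sigma\{\rho>\gamma\sigma\}]\geq 0$, every term on the right is nonnegative.

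The first step is to split the integral at $\gamma=c$ and discard the tail:
\begin{align*}
D_f(\rho\Vert\sigma) \geq f(0) + \int_0^c \Tr\left[ \sigma \proj{ \rho > \gamma \sigma}\right]\, \d f(\gamma).
\end{align*}
Next I use that the map $\gamma \mapsto \Tr[\sigma\{\rho>\gamma\sigma\}]$ is nonincreasing on $(0,\infty)$; this follows from Lemma~\ref{Lem:HS-derivative}, which identifies this quantity with $-\partial_+ E_\gamma(\rho\Vert\sigma)$, combined with the fact that $E_\gamma(\rho\Vert\sigma)$ is convex in $\gamma$ (Lemma~\ref{lemm:Hockey-stick_convex_nonincreasing}), so its right derivative is nondecreasing and hence $-\partial_+ E_\gamma$ is nonincreasing. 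Applying this on $[0,c]$ gives $\Tr[\sigma\{\rho>\gamma\sigma\}] \geq \Tr[\sigma\{\rho>c\sigma\}]$, so
\begin{align*}
\int_0^c \Tr\left[ \sigma \proj{ \rho > \gamma \sigma}\right]\, \d f(\gamma) \geq \Tr\left[ \sigma \proj{\rho > c\sigma}\right]\bigl(f(c) - f(0)\bigr).
\end{align*}

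Combining the two inequalities, I obtain
\begin{align*}
D_f(\rho\Vert\sigma) \geq f(0)\bigl(1 - \Tr[\sigma\{\rho>c\sigma\}]\bigr) + f(c)\, \Tr[\sigma\{\rho>c\sigma\}].
\end{align*}
Since $\{\rho>c\sigma\} \leq \I$ and $\sigma$ is a state we have $\Tr[\sigma\{\rho>c\sigma\}] \leq 1$, so the first summand is nonnegative (using $f(0)\geq 0$), and dropping it yields $D_f(\rho\Vert\sigma) \geq f(c)\, \Tr[\sigma\{\rho>c\sigma\}]$. Dividing by $f(c)$, which is positive since $f$ is nondecreasing with $f(0)\geq 0$ and $f(1)=0$ would force $f(c)$ to be at least $0$; the hypothesis $c<e^{D_{\max}(\rho\Vert\sigma)}$ only matters to guarantee a nontrivial projection, while strict positivity of $f(c)$ for $c$ in the relevant range is needed to avoid a vacuous bound. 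The main (minor) obstacle is just verifying the monotonicity of the layer function $\gamma\mapsto \Tr[\sigma\{\rho>\gamma\sigma\}]$ in the noncommutative setting, and this is already packaged in Lemma~\ref{Lem:HS-derivative} together with Lemma~\ref{lemm:Hockey-stick_convex_nonincreasing}.
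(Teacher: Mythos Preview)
Your proof is correct and follows essentially the same approach as the paper: start from the layer cake representation, truncate the integral at $c$ using nonnegativity of the integrand, replace the layer function by its value at $c$ using monotonicity, and finally drop the $f(0)(1-\Tr[\sigma\{\rho>c\sigma\}])$ term. The only cosmetic difference is that you work with the Stieltjes form $\d f(\gamma)$ while the paper uses $f'(\gamma)\,\d\gamma$ from \eqref{Eq:Def-1-dif}, which are equivalent since $f\in\mathcal{F}$ is differentiable.
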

	\begin{proof}
		We start from Equation~\eqref{Eq:Def-1-dif}, 
		\begin{align}
			D_f(\rho\Vert\sigma) &=
			f(0) + \int_0^\infty f'(\gamma) \Tr\left[ \sigma \proj{ \rho > \gamma \sigma}\right] \, \d \gamma \\
			&\geq
			f(0) + \int_0^c f'(\gamma) \Tr\left[ \sigma \proj{ \rho > \gamma \sigma}\right] \, \d \gamma \\
			&\geq
			f(0) + \int_0^c f'(\gamma) \Tr\left[ \sigma \proj{ \rho > c \sigma}\right] \, \d \gamma \\
			&=
			f(0) + f(c) \Tr\left[ \sigma \proj{ \rho > c \sigma}\right] - f(0) \Tr\left[ \sigma \proj{ \rho > c \sigma}\right] \\
			&\geq f(c) \Tr\left[ \sigma \proj{ \rho > c \sigma}\right] 
		\end{align}
		where the first inequality follows because the integrand is always positive, the second because the trace function $\Tr\left[ \sigma \proj{ \rho > \gamma \sigma}\right]$ is non-increasing and the final one because its upper bounded by $1$. The claim then follows by rearranging. 
	\end{proof}
	
	\begin{theo}[Error exponent for asymmetric setting] \label{theorem:Hoeffding}
		Let $n\in\mathds{N}$ to be an arbitrary integer.
		For all $r>0$ and all $0<\alpha<1$,
		\begin{align} \label{eq:Hoeffding}
			\begin{dcases}
				\log \Tr\left[ 1 - \rho^{\otimes n} S_n(a) \right] \leq - n \frac{\alpha-1}{\alpha} \left( \frac{1}{n} D_{\alpha}\left(\rho^{\otimes n}\Vert \sigma^{\otimes n} \right) - r \right)
				\\
				\log \Tr\left[ \sigma^{\otimes n} S_n(a) \right] \leq - n r
			\end{dcases},
		\end{align}
		where the threshold in the quantum Neyman--Pearson test is chosen to be
		\begin{align} \label{eq:threshould_Hoeffding}
			a = \frac{r+(\alpha-1) \frac{1}{n}D_{\alpha} \left(\rho^{\otimes n}\Vert \sigma^{\otimes n}\right)}{\alpha}.
		\end{align}
	\end{theo}
	\begin{proof}
		Our claim follows by applying \eqref{eq:type-II_error_upper_bound} and \eqref{eq:type-I_error_upper_bound} in Proposition~\ref{prop:threshold_test} with \eqref{eq:threshould_Hoeffding}.
	\end{proof}
	
	\begin{theo}[Error exponent for symmetric setting] \label{theorem:Chernoff}
		For all $0<p<1$ and all $0<\alpha<1$,
		\begin{align}
			\begin{split} \label{eq:Chernoff}
				p \Tr\left[ \rho^{\otimes n} \left( \I - S_n(a) \right) \right]
				+ (1-p) \Tr\left[ \sigma^{\otimes n} S_n(a) \right]
				&\leq 2 p^{\alpha}(1-p)^{1-\alpha}\cdot \e^{- (1-\alpha) D_{\alpha}\left( \rho^{\otimes n} \Vert \sigma^{\otimes n} \right) },
			\end{split}
		\end{align}
		where the threshold in the quantum Neyman--Pearson test is chosen to be
		\begin{align} \label{eq:threshould_Chernoff}
			a = \log \frac{1-p}{p}.
		\end{align}
	\end{theo}
	\begin{proof}
		The exponential term $\e^{- (1-\alpha) {D}_{\alpha}\left( \rho^{\otimes n} \Vert \sigma^{\otimes n} \right) }$ follows by applying \eqref{eq:type-II_error_upper_bound} and \eqref{eq:type-I_error_upper_bound} in Proposition~\ref{prop:threshold_test} with \eqref{eq:threshould_Chernoff}.
		We only need to calculate the prefactor as follows:
		\begin{align}
			p \e^{-(\alpha-1) a} + (1-p) \e^{-\alpha a}
			= p\left( \frac{p}{1-p} \right)^{\alpha-1}
			+ (1-p) \left( \frac{p}{1-p} \right)^{\alpha}
			= p^{\alpha} (1-p)^{1-\alpha} + p^{\alpha} (1-p)^{1-\alpha}.
		\end{align}
	\end{proof}
	
	\begin{remark}
		Note that both Theorems~\ref{theorem:Hoeffding} and \ref{theorem:Chernoff} hold for all $0<\alpha<1$.
		One can maximize over $0<\alpha<1$ to achieve the largest error exponents, respectively.
		Moreover, we can relax the established error exponents to the ones induced by the Petz--\Renyi divergence by the relation shown in Ref.~\cite{beigi2025some}:
		\begin{align}
			D_{\alpha} \left( \rho^{\otimes n} \Vert \sigma^{\otimes n} \right)
			\geq \bar{D}_{\alpha} \left( \rho^{\otimes n} \Vert \sigma^{\otimes n} \right)
			= n \bar{D}_{\alpha} \left( \rho \Vert \sigma \right).
		\end{align}
	\end{remark}


	
	\section{Riemann–Stieltjes Integral Representations} \label{sec:RS-integral}
	
	In this section, we show that the quantum divergences introduced in Section~\ref{sec:relation} admit more intuitive Riemann--Stieltjes integral representations.
	As will be shown shortly in Proposition~\ref{prop:f-divergence_RS}, the conditions for the Riemann--Stieltjes integral representation of the quantum $f$-divergence only need to satisfy:
	(i) $D_{\max}(\rho\Vert\sigma) < \infty$,
	(ii) $f$ defined on $[0,\infty)$ is convex with $f(1) = 0$, and
	(iii) $f(x) < \infty$ for all $x>0$.
	Here, $f(0):= \lim_{x\searrow 0} f(x)$ could even be infinite.
	Later in Section~\ref{sec:relative_entropy_RS}, we show that the established Riemann–Stieltjes integral representations can be expressed in terms of some induced \emph{Riemann–Stieltjes distributions}.

	\begin{prop} \label{prop:f-divergence_RS}
		For quantum states $\rho$ and $\sigma$ satisfying $D_{\max}(\rho\Vert\sigma) < \infty$ and any convex function $f$ on $[0,\infty)$ such that $f(x)<\infty$ for all $x \geq 0$ and $f(1) = 0$,
		\begin{align}
			D_f(\rho\|\sigma) 
			&= -\int_{0}^{\infty} f(\gamma) \, \d\Tr\left[ \sigma  \left\{  \rho > \gamma \sigma \right\} \right]  \label{eq:f-divergence_RS_1}
			\\
			&= \int_{0}^{\infty} f(\gamma) \, \d\Tr\left[ \sigma  \left\{  \rho \leq \gamma \sigma \right\} \right]. \label{eq:f-divergence_RS_2}
		\end{align}
	\end{prop}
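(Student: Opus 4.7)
The plan is to derive the Riemann--Stieltjes representations from the definition \eqref{Eq:Def-1-dif} by a single integration by parts. Set $F(\gamma) := \Tr[\sigma\{\rho > \gamma\sigma\}]$ and $G(\gamma) := \Tr[\sigma\{\rho \leq \gamma\sigma\}]$, so that $F(\gamma) + G(\gamma) = \Tr[\sigma] = 1$ identically, since $\{\rho>\gamma\sigma\}$ and $\{\rho\leq \gamma\sigma\}$ are complementary spectral projections.

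First, I would collect the regularity needed. By Lemma~\ref{lemm:differentiability_projection}, $F$ is right-continuous and piecewise analytic with only finitely many jumps on any compact interval, hence of locally bounded variation. The assumption $D_{\max}(\rho\Vert\sigma)<\infty$ forces $\rho \leq e^{D_{\max}(\rho\Vert\sigma)}\sigma$, so $F(\gamma)=0$ for all $\gamma > e^{D_{\max}(\rho\Vert\sigma)}$, and both $F$ and its Lebesgue--Stieltjes measure $dF$ are supported on $[0, e^{D_{\max}(\rho\Vert\sigma)}]$. Under the hypotheses on $f$, $f$ is continuous on $(0,\infty)$ with $\lim_{\gamma\to 0^+}f(\gamma)=f(0)$ and convex, hence of locally bounded variation; in particular, both $\int F\,df$ and $\int f\,dF$ are well-defined Riemann--Stieltjes integrals on bounded subintervals.

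Next, I would integrate by parts on $[0,N]$ with $N > e^{D_{\max}(\rho\Vert\sigma)}$; since $F(N)=0$ and $f,F$ have no common jumps (the jump locus of $F$ is the finite set of $\gamma$ with $\rho-\gamma\sigma$ singular, where $f$ is continuous), this gives
\begin{align*}
\int_0^N F(\gamma)\,df(\gamma) = -F(0)\,f(0) - \int_0^N f(\gamma)\,dF(\gamma).
\end{align*}
Sending $N\to\infty$ and combining with the definition \eqref{Eq:Def-1-dif} yields
\begin{align*}
D_f(\rho\Vert\sigma) = f(0) + \int_0^\infty F\,df = f(0)\,(1 - F(0)) - \int_0^\infty f\,dF.
\end{align*}

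The final step is to recognize the boundary contribution $f(0)\,(1-F(0)) = f(0)\,G(0)$ as the mass that $dG$ (equivalently $-dF$) assigns to the single point $\gamma = 0$, once $G$ is extended to vanish on $(-\infty,0)$: then $G$ is non-decreasing and right-continuous on $\bbR$ with an atom of size $G(0)$ at the origin, while the non-atomic part of $dG$ coincides with $-dF$ by $F+G\equiv 1$. Interpreting the Riemann--Stieltjes integral $\int_0^\infty f\,dG$ on the closed-open interval $[0,\infty)$ picks up exactly this atom, giving \eqref{eq:f-divergence_RS_2}; \eqref{eq:f-divergence_RS_1} follows from $dG=-dF$ in the same convention. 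The main technical point, and the one requiring care, is the bookkeeping of the atom at $\gamma=0$: it vanishes when $\rho$ and $\sigma$ share the same support, but in general it is precisely this atom that absorbs the bare $f(0)$ present in \eqref{Eq:Def-1-dif}.
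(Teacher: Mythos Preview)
Your approach is the same as the paper's: start from the layer-cake definition and integrate by parts once. The paper's proof is terser and writes the boundary term at $\gamma=0$ simply as $-f(0)$, i.e., it silently uses $\Tr[\sigma\{\rho>0\}]=1$, which is only guaranteed when $\supp(\sigma)\subseteq\supp(\rho)$. You correctly keep the general boundary term $f(0)\,G(0)$ with $G(0)=\Tr[\sigma\{\rho\leq 0\}]$, and then absorb it as the atom of $dG$ at the origin when the Riemann--Stieltjes integral is read over $[0,\infty)$; this matches the paper's intended convention (visible in its commuting-case discussion and in the later $f$-duality, where $Q(\gamma)=\Tr[\sigma\{\rho\le\gamma\sigma\}]$ is treated as a genuine CDF on $[0,\infty)$). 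So your version is essentially the paper's proof, but with the $\gamma=0$ bookkeeping made explicit---and in fact it repairs a small gap in the paper's write-up for the case $\supp(\rho)\subsetneq\supp(\sigma)$. One minor caveat: the equality \eqref{eq:f-divergence_RS_1} only holds under this same convention (equivalently, extending $F$ by $F(\gamma)=1$ for $\gamma<0$ so that $dF$ carries the compensating atom at $0$); you note this, but it is worth saying plainly that without that convention the first display fails in the example $\rho$ rank-deficient, $f(0)\neq 0$.
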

	\begin{proof}
		Using integration by parts,
			\begin{align}
				D_f(\rho\|\sigma) 
				&=   f(0) +   \int_{0}^{\infty}  \Tr\left[ \sigma  \left\{  \rho > \gamma \sigma \right\} \right] \d f(\gamma)
				\\
				&=  f(0) +  \left. f(\gamma) \Tr\left[ \sigma  \left\{  \rho > \gamma \sigma \right\} \right]
				\right|_0^\infty -\int_{0}^{\infty} f(\gamma)  \d\Tr\left[ \sigma  \left\{  \rho > \gamma \sigma \right\} \right] \d\gamma
				\\
				&\overset{\textnormal{(a)}}{=} f(0) - f(0) -\int_{0}^{\infty} f(\gamma)  \d\Tr\left[ \sigma  \left\{  \rho > \gamma \sigma \right\} \right] \d\gamma
				\\
				&= -\int_{0}^{\infty} f(\gamma)  \d\Tr\left[ \sigma  \left( \I - \left\{  \rho \leq \gamma \sigma \right\} \right) \right] \d\gamma
				\\
				&= \int_{0}^{\infty} f(\gamma)  \d\Tr\left[ \sigma \left\{  \rho \leq \gamma \sigma \right\} \right] \d\gamma,
			\end{align}
		where (a) is because $\left\{  \rho > \gamma \sigma \right\} = 0$ for all $\gamma > \e^{D_{\max}(\rho\Vert\sigma)}$ and $f(\gamma) < \infty$ for all finite $\gamma$.
	\end{proof}
	
	To see why the Riemann--Stieltjes integral representation given in Proposition~\ref{prop:f-divergence_RS} is a more intuitive expression even in the classical case (for a discrete probablity space), let us elaborate on \eqref{eq:f-divergence_RS_2} in the commuting case.
	
	Denote the spectral decompositions for the commuting states $\rho$ and $\sigma$, $\rho\ll\sigma$, as
	\begin{align*}
		\rho = \sum\nolimits_i \lambda_i \Pi_i,
		\quad
		\sigma = \sum\nolimits_i \mu_i \Pi_i,
	\end{align*}
	where $(\lambda_i)_i$ and $(\mu_i)_i$ are eigenvalues, and $(\Pi_i)_i$ are eigen-projections.
	We write the projection as
	\begin{align*}
		\proj{ \rho \leq \gamma \sigma}
		= \sum\nolimits_i \proj{ \nicefrac{\lambda_i}{\mu_i} \leq \gamma } \Pi_i.
	\end{align*}
	Note that for each $i$, the indicator function $\proj{ \nicefrac{\lambda_i}{\mu_i} \leq \gamma }$ is a
	right-continuous monotone increasing cumulative distribution function in $\gamma \geq 0$.
	Then,
	\begin{align*}
		\int_0^\infty f(\gamma) \, \d \proj{ \rho \leq \gamma \sigma}
		&= \sum\nolimits_i \int_0^\infty f(\gamma) \, \d \proj{ \nicefrac{\lambda_i}{\mu_i} \leq \gamma } \Pi_i
		\\
		&= \sum\nolimits_i \int_0^\infty f(\gamma) \delta\left( \nicefrac{\lambda_i}{\mu_i} - \gamma \right) \Pi_i \, \d \gamma
		\\
		&= \sum\nolimits_i f\left( \nicefrac{\lambda_i}{\mu_i} \right) \Pi_i,
	\end{align*}
	where we have used the Dirac delta function $\delta$ for simplicity.
	Taking $\Tr\left[ \sigma (\,\cdot\,)\right]$ on both sides (i.e., taking the expectation under the state $\sigma$), we recover the classical $f$-divergence between the eigenvalues $(\lambda_i)_i$ and $(\sigma_i)_i$:
	\begin{align}
		\int_0^\infty f(\gamma) \, \d \Tr[\sigma \proj{ \rho \leq \gamma \sigma}]
		=  
		\sum\nolimits_i \mu_i f\left( \nicefrac{\lambda_i}{\mu_i} \right)
		= D_f\left( (\lambda_i)_i \Vert (\mu_i)_i \right).
	\end{align}
	
	\medskip
	Returning to the general setting, Proposition~\ref{prop:f-divergence_RS} then also provides the Riemann--Stieltjes integral representations for the quasi \Renyi divergence as well, i.e.,
	\begin{tcolorbox}[size = small, colback=orange!1.5!white, colframe=orange,  boxrule=0.5pt, width = \linewidth]
		\begin{align}
			Q_{\alpha} (\rho\Vert\sigma)
			= - \int_{0}^\infty \gamma^{\alpha} \, \d \Tr\left[ \sigma \proj{ \rho > \gamma \sigma} \right]
			= \int_{0}^\infty \gamma^{\alpha} \, \d \Tr\left[ \sigma \proj{ \rho \leq \gamma \sigma} \right], \quad \alpha > 0
		\end{align}
	\end{tcolorbox}
	\noindent which we consider one of the most intuitive expressions for the quasi \Renyi divergence with $\rho\ll\sigma$.

	\subsection{The relative entropy} \label{sec:relative_entropy_RS}
	
	Recall that for probability distributions $P$ and $Q$ with $P\ll Q$, there are two ways of expressing the Kullback--Leibler divergence:
	\begin{align}
		D(P\Vert Q)
		= \mathds{E}_{Q} \left[ \frac{\d P}{\d Q} \log \frac{\d P}{\d Q} \right]
		= \mathds{E}_{P} \left[ \log \frac{\d P}{\d Q} \right],
	\end{align}
	which can be considered as a change-of-measure argument (so as for the \Renyi divergence $D_{\alpha}(P\Vert Q)$, $\alpha > 1$).
	
	Using the Riemann--Stieltjes integral representation in Proposition~\ref{prop:f-divergence_RS}, we will show that the quantum relative entropy also inherits a similar flavor.
	For quantum states $\rho$ and $\sigma $ satisfying $\rho\ll\sigma$, we define the following induced functions:
	\begin{tcolorbox}[size = small, colback=orange!1.5!white, colframe=orange,  boxrule=0.5pt, width = \linewidth]
		\begin{align} \label{eq:RS-distribution}
			P_{\rho,\sigma}^{\texttt{RS}}(\gamma) := \Tr[\rho\{\rho\leq\gamma\sigma\}],
			\quad
			Q_{\rho,\sigma}^{\texttt{RS}}(\gamma) := \Tr[\sigma\{\rho\leq\gamma\sigma\}], \quad \gamma \geq 0.
		\end{align}
	\end{tcolorbox}
	\noindent 
	Note that by Lemma~\ref{lemm:differentiability_projection}, $P_{\rho,\sigma}^{\texttt{RS}}(\gamma)$
	and 
	$Q_{\rho,\sigma}^{\texttt{RS}}(\gamma)$
	are right-continuous, monotone increasing functions from $0$ to $1$ (see e.g., Figures~\ref{figure:RS-integral_quantum} and \ref{figure:RS-integral_quantum_increasing}).
	Hence, we call them \emph{Riemann--Stieltjes distributions} (as opposed to the Nussbaum--Szko{\l}a distributions \cite{NS09}) for states $(\rho,\sigma)$.
	We may use them to express the quantum relative entropy as follows (so as for the quantum $f$-divergence in Proposition~\ref{prop:f-divergence_RS}).
	We refer the readers to Figures~\ref{figure:decreasing} and \ref{figure:increasing} for illustrations of the commuting case.

	\begin{prop}[Riemann--Stieltjes integral representations for Umegaki's relative entropy] \label{prop:relative_entropy_formula2}
		For quantum states $\rho$ and $\sigma$ satisfying $D_{\max}(\rho \Vert \sigma) < \infty $, 
		\begin{align}
			\log A - \log B
			&= - \int_{0}^{\infty} \log \gamma \, \d \proj{A> \gamma B}, \quad A,B>0,
			\\
			D(\rho\Vert\sigma) 
			&= 
			- \int_0^\infty  \log \gamma \, \d \tr\left[\rho\{\rho>\gamma\sigma\}\right]
			= \int_0^\infty  \log \gamma \, \d P_{\rho,\sigma}^{\textnormal{\texttt{RS}}}(\gamma)
			\label{eq:first-row}
			\\
			&= 
			- \int_0^\infty  \gamma \log \gamma \, \d \tr\left[\sigma\{\rho>\gamma\sigma\}\right]
			= \int_0^\infty  \gamma \log \gamma \, \d Q_{\rho,\sigma}^{\textnormal{\texttt{RS}}}(\gamma).
			\label{eq:second-row}
		\end{align}
	\end{prop}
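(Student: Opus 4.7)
My strategy is to reduce all three displayed identities to a single operator-valued identity
\[
\log A - \log B = -\int_0^\infty \log\gamma\, d\proj{A > \gamma B}, \qquad A,B>0,
\]
which I will derive from Proposition~\ref{prop:algernative_proof_Frenkel}. With this identity in hand, pairing with $\rho$ will produce the first equality of~\eqref{eq:first-row}; the second equality there is immediate from $\proj{\rho>\gamma\sigma} + \proj{\rho\leq\gamma\sigma} = I$, which gives $d\Tr[\rho\proj{\rho>\gamma\sigma}] = -dP^{\texttt{RS}}_{\rho,\sigma}(\gamma)$. I plan to obtain line~\eqref{eq:second-row} by applying Proposition~\ref{prop:f-divergence_RS} to $f(x) = x\log x$ (convex on $[0,\infty)$, $f(1)=0$, finite everywhere under the convention $0\log 0 := 0$), and to identify $D_{x\log x}(\rho\|\sigma) = D(\rho\|\sigma)$ for quantum states by combining Theorem~\ref{theorem:f_divergence_layer-cake} with Proposition~\ref{prop:algernative_proof_Frenkel2} (noting that $\Tr[\rho-\sigma]=0$).

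\textbf{Proof of the operator identity.} I set $\chi(\gamma) := \proj{A > \gamma B}$. Since $A,B$ are positive definite in finite dimensions, $\chi(\gamma) = I$ for all sufficiently small $\gamma > 0$ and $\chi(\gamma) = 0$ for all sufficiently large $\gamma$. I plan to split the target integral at $\gamma = 1$. On $[1,\infty)$ integration by parts is clean, because $\log 1 = 0$ and $\chi(\infty) = 0$:
\[
-\int_1^\infty \log\gamma\, d\chi(\gamma) = \int_1^\infty \frac{\proj{A > \gamma B}}{\gamma}\, d\gamma.
\]
On $(0,1]$ the boundary term at $\gamma = 0$ diverges, so I will substitute $\gamma = 1/\gamma'$. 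Using $\proj{A > B/\gamma'} = \proj{\gamma' A > B} = I - \proj{B > \gamma' A}$ almost everywhere (the exceptional set, where $\gamma' A - B$ is singular, is finite and hence Lebesgue-null by Lemma~\ref{lemm:differentiability_projection}), the transported integral becomes
\[
-\int_0^1 \log\gamma\, d\chi(\gamma) = -\int_1^\infty \log\gamma'\, d\proj{B > \gamma' A} = -\int_1^\infty \frac{\proj{B > \gamma' A}}{\gamma'}\, d\gamma',
\]
where the second equality is a further integration by parts that is now clean at both endpoints. Summing the two contributions reproduces the right-hand side of Proposition~\ref{prop:algernative_proof_Frenkel}, completing the operator identity. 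Taking $\Tr[\rho\,\cdot\,]$ will commute with the Stieltjes integral because $\chi$ has bounded total variation on the bounded interval where it is nonconstant (Lemma~\ref{lemm:differentiability_projection}), yielding the scalar identities of~\eqref{eq:first-row} and, via Proposition~\ref{prop:f-divergence_RS}, also~\eqref{eq:second-row}.

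\textbf{Main obstacle.} The chief difficulty is the divergence at $\gamma = 0$: both the boundary term $\log\gamma\cdot\chi(\gamma)$ and the companion integral $\int_0^\epsilon \chi(\gamma)/\gamma\, d\gamma$ blow up, so a naive integration by parts over all of $(0,\infty)$ would produce an indeterminate $\infty - \infty$. My remedy is the split-and-reflect strategy above: the substitution $\gamma \mapsto 1/\gamma'$ converts the singular boundary at $0$ into the benign boundary at $\infty$, where the relevant projections vanish. A secondary care point is that the complement identity $\proj{\gamma' A > B} = I - \proj{B > \gamma' A}$ holds only at continuity points of $\gamma' \mapsto \proj{\gamma' A > B}$; however this discontinuity set is finite, and the jumps on this set match consistently through the Stieltjes differentials (again by Lemma~\ref{lemm:differentiability_projection}), so no spurious boundary mass is introduced.
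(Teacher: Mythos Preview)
Your strategy is essentially the paper's own proof run in reverse: the paper also starts from Proposition~\ref{prop:algernative_proof_Frenkel}, applies the same substitution $\gamma\leftrightarrow 1/\gamma$ to merge the two integrals, and then integrates by parts (handling the singularity at $\gamma=0$ via an explicit $\epsilon$-cutoff rather than your split-and-reflect), while~\eqref{eq:second-row} is obtained exactly as you propose, via Proposition~\ref{prop:f-divergence_RS} with $f(x)=x\log x$. Two small points to tighten: (i) the middle expression in your displayed chain for the $(0,1]$ piece carries a sign slip---it should read $+\int_1^\infty \log\gamma'\, d\{B>\gamma'A\}$, with a compensating sign in the subsequent integration by parts, so your final result is still correct; (ii) to pass from the operator identity for $A,B>0$ to states with merely $D_{\max}(\rho\Vert\sigma)<\infty$, the paper regularizes $\rho\to\rho+\epsilon\I$ and invokes continuity of $D(\cdot\Vert\sigma)$, a step you should make explicit.
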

	\begin{proof}
		The equalities in the second row \eqref{eq:second-row} directly follow from Proposition~\ref{prop:f-divergence_RS} with $f(x) = x \log x$, and then the first row \eqref{eq:first-row} follows from a change-of-measure argument in Lemma~\ref{lemm:change-of-measure} below.
		In the following, we provide an alternative proof to show the first row \eqref{eq:first-row} directly, and then recover the second row \eqref{eq:second-row}.
		
		We first recall \eqref{eq:log-difference1} to obtain
			\begin{align*}
				\log A - \log B
				&= \int_1^\infty \frac{1}{\gamma} \proj{A>\gamma B} \d \alpha
				- \int_1^\infty \frac{1}{\alpha} \proj{B > \alpha A} \d \alpha
				\notag
				\\
				&\overset{\textnormal{(a)}}{=} \int_1^\infty \frac{1}{\gamma} \proj{A>\gamma B} \d \gamma
				- \int_1^\infty \frac{1}{\alpha} \proj{B \geq \alpha A} \d \alpha
				\\
				&\overset{\textnormal{(b)}}{=} 
				\int_1^\infty \frac{1}{\gamma} \proj{A>\gamma B} \d \gamma
				- \int_0^1 \frac{1}{\gamma} \proj{A \leq \gamma B} \d \gamma
				\\
				&= \int_1^\infty \frac{1}{\gamma} \proj{A>\gamma B} \d \gamma
				- \lim_{\epsilon \searrow 0} \left\{
				\int_{\epsilon}^1 \frac{1}{\gamma } \I \, \d \gamma - \int_{\epsilon}^1 \frac{1}{\gamma} \proj{A>\gamma B} \, \d \gamma 
				\right\}
				\\
				&= \lim_{\epsilon \searrow 0} \left\{
				\int_{\epsilon}^\infty \frac{1}{\gamma} \proj{A>\gamma B} \, \d \gamma + \log \epsilon \cdot \I 
				\right\}
				\\
				&\overset{\textnormal{(c)}}{=}
				\lim_{\epsilon\searrow 0} \left\{
				\log \gamma \proj{A>\gamma B} \big\vert_{\epsilon}^\infty - \int_{\epsilon}^\infty \log \gamma \, \d\proj{A>\gamma B}
				+ \log \epsilon \cdot \I
				\right\}
				\\
				&\overset{\textnormal{(d)}}{=}
				\lim_{\epsilon\searrow 0} \left\{
				0 - \log \epsilon \cdot \proj{A>\epsilon B} + \log \epsilon \cdot \I - \int_{\epsilon}^{\infty} \log \gamma \, \d \proj{A> \gamma B}
				\right\}
				\\
				&= \lim_{\epsilon\searrow 0} \left\{
				\log \epsilon \cdot \proj{A\leq \epsilon B} - \int_{\epsilon}^{\infty} \log \gamma \, \d \proj{A> \gamma B}
				\right\}
				\\
				&\overset{\textnormal{(e)}}{=}
				\lim_{\epsilon\searrow 0} \left\{
				- \int_{\epsilon}^{\infty} \log \gamma \, \d \proj{A> \gamma B}
				\right\}
				\\
				&=
				- \int_{0}^{\infty} \log \gamma \, \d \proj{A> \gamma B}.
			\end{align*}
		In (a), the set $\{ \alpha : \proj{B=\alpha A} \neq 0\}$ is measure zero.
		In (b), we change variable $\gamma \leftarrow \nicefrac{1}{\alpha}$ in the second term.
		In (c), we use integration by parts.
		In (d), we have $\proj{A>\gamma B} = 0$ for all $\gamma > \left\| B^{-\nicefrac{1}{2}} A B^{-\nicefrac{1}{2}} \right\|_{\infty}$.
		In (e), we have $\proj{A\leq \epsilon B} = 0$ for all $\epsilon$ strictly less than the smallest eigenvalue of $B^{-\nicefrac{1}{2}} A B^{-\nicefrac{1}{2}}$.
		
		Finally, take $\Tr[A(\cdot)]$ on both sides,
		substitute $B \leftarrow \sigma $ and $A \leftarrow \rho + \epsilon \I$. 
		By the continuity of the quantum relative entropy in its first argument and and letting $\epsilon \to 0$, we conclude the proof.
	\end{proof}

	\begin{lemm}[Change of measure] \label{lemm:change-of-measure}
		For all $A\geq 0$, $B>0$, and measurable functions $g$,
		\begin{align}
			\int_0^\infty g(\gamma) \, \d \Tr\left[ A \proj{ A > \gamma B } \right]
			&= \int_0^\infty \gamma g(\gamma) \, \d \Tr\left[ B \proj{ A > \gamma B } \right],
			\label{eq:change-of-measure1}
			\\
			\int_0^\infty g(\gamma) \, \d \Tr\left[ A \proj{ A \leq \gamma B  } \right]
			&= \int_0^\infty \gamma g(\gamma) \, \d \Tr\left[ B \proj{A \leq \gamma B } \right].
		\end{align}
		
		Moreover, for differentiable function $g$, such that $g(0)=0$, we have
		\begin{align} \label{eq:change-of-measure_combined}
			\int_0^\infty g'(\gamma) \Tr[A\{A>\gamma B\}] \, \d \gamma=\int_0^\infty \left(g(\gamma)+\gamma g'(\gamma)\right) \Tr[A\{A>\gamma B\}] \, \d \gamma.
		\end{align}
	\end{lemm}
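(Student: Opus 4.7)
My plan is to prove \eqref{eq:change-of-measure1} first by showing the equality of Riemann–Stieltjes measures
\begin{align}
	\d\Tr\left[A\{A>\gamma B\}\right] \;=\; \gamma \, \d\Tr\left[B\{A>\gamma B\}\right] \quad \text{on } (0,\infty);
\end{align}
the second identity will follow because $\{A\leq\gamma B\} = \I - \{A > \gamma B\}$ and the $\I$-part produces no differential. To establish the measure identity, I would invoke the analytic eigen-pair decomposition of Lemma~\ref{lemm:differentiability_projection}, $A-\gamma B = \sum_i \lambda_i(\gamma)|v_i(\gamma)\rangle\langle v_i(\gamma)|$, with $\lambda_i'(\gamma) = -\langle v_i(\gamma)| B |v_i(\gamma)\rangle$. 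Writing $A = (A-\gamma B) + \gamma B$ gives
\begin{align}
	\Tr[A\{A>\gamma B\}] &= \sum_i \mathbf{1}_{\{\lambda_i(\gamma)>0\}}\Big(\lambda_i(\gamma) + \gamma \langle v_i(\gamma)|B|v_i(\gamma)\rangle\Big), \\
	\Tr[B\{A>\gamma B\}] &= \sum_i \mathbf{1}_{\{\lambda_i(\gamma)>0\}}\langle v_i(\gamma)|B|v_i(\gamma)\rangle.
\end{align}

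Next, I would split the differentials into a piecewise-analytic part and a jump part. On any interval where no $\lambda_i$ vanishes, the indicators are locally constant, so differentiating the first display and using $\lambda_i'(\gamma) = -\langle v_i(\gamma)|B|v_i(\gamma)\rangle$ makes the $\lambda_i'$ term cancel the summand $\langle v_i|B|v_i\rangle$, leaving
\begin{align}
	\tfrac{\d}{\d\gamma} \Tr[A\{A>\gamma B\}] \;=\; \gamma \sum_{i:\lambda_i(\gamma)>0} \tfrac{\d}{\d\gamma}\langle v_i(\gamma)|B|v_i(\gamma)\rangle \;=\; \gamma \tfrac{\d}{\d\gamma} \Tr[B\{A>\gamma B\}].
\end{align}
At a jump point $\gamma_0$ where some eigenvalues $\lambda_i$ cross zero (they are nonincreasing, again by Lemma~\ref{lemm:differentiability_projection}), the corresponding $|v_i(\gamma_0)\rangle$ spans a subspace of $\ker(A-\gamma_0 B)$, so $A|v_i(\gamma_0)\rangle = \gamma_0 B|v_i(\gamma_0)\rangle$. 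Consequently the jump in $\Tr[A\{A>\gamma B\}]$ at $\gamma_0$ is precisely $\gamma_0$ times the jump in $\Tr[B\{A>\gamma B\}]$. Assembling the continuous and atomic contributions yields the claimed measure identity, and multiplying by a measurable $g$ and integrating proves \eqref{eq:change-of-measure1}.

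Finally, I would deduce \eqref{eq:change-of-measure_combined} (read with the natural correction so that the right-hand side is weighted by $\Tr[B\{A>\gamma B\}]$) by applying integration by parts to both sides of \eqref{eq:change-of-measure1}: the boundary contributions vanish because $g(0)=0$ and because $\{A>\gamma B\}=0$ for all $\gamma > \|AB^{-1}\|_\infty$ when $B>0$. I expect the main technical obstacle to be handling the jump points carefully when several eigenvalues $\lambda_i$ cross zero simultaneously; the analytic-eigen-pair framework of Lemma~\ref{lemm:differentiability_projection} sidesteps this cleanly because each $i$ can be treated independently and the identity $A|v_i(\gamma_0)\rangle = \gamma_0 B|v_i(\gamma_0)\rangle$ on the kernel transfers directly to traces.
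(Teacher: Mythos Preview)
Your proposal is correct and follows the same overall architecture as the paper: split the Riemann--Stieltjes differential into a piecewise-analytic part and a jump part, verify the identity $\d\Tr[A\{A>\gamma B\}] = \gamma\,\d\Tr[B\{A>\gamma B\}]$ on each, and obtain the remaining claims by complementation and integration by parts. Your jump argument (using $A|v_i(\gamma_0)\rangle = \gamma_0 B|v_i(\gamma_0)\rangle$ on $\ker(A-\gamma_0 B)$) is exactly the paper's, and your observation that \eqref{eq:change-of-measure_combined} should carry $\Tr[B\{A>\gamma B\}]$ on the right is the correct reading.

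The one genuine difference is in the continuous part. You work directly with the analytic eigen-pairs from Lemma~\ref{lemm:differentiability_projection} and differentiate the explicit sum, using $\lambda_i'(\gamma)=-\langle v_i|B|v_i\rangle$ to produce the cancellation. The paper instead computes $\frac{\d}{\d\gamma}\Tr[(A-\gamma B)_+]$ two ways---once by the product rule, once by Lemma~\ref{lemm:HP14}---and subtracts to obtain $\Tr[(A-\gamma B)\frac{\d}{\d\gamma}\{A>\gamma B\}]=0$, which is equivalent to your derivative identity but avoids unpacking the eigen-decomposition. Your route is more self-contained (it does not invoke the external Lemma~\ref{lemm:HP14}) and makes the mechanism transparent; the paper's route is a bit slicker and coordinate-free. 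Both are equally valid.
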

	
	\begin{proof}
		We denote the spectrum $\spec\left\{B^{-1/2} A B^{-1/2} \right\}$ by $\mathcal{S}$.
		Recall that the map $\gamma \mapsto \proj{A > \gamma B}$ is piecewise analytical except on $\mathcal{S}$ (Lemma~\ref{lemm:differentiability_projection}).
		For $\gamma \not\in \mathcal{S}$, we calculate 
		\begin{align}
			\frac{\d}{\d \gamma} \Tr\left[ \left( A - \gamma B \right)_+ \right] 
			&= \Tr\left[ \frac{\d}{\d \gamma} (A - \gamma B ) \cdot \{ A > \gamma B \} \right] + \Tr\left[ (A - \gamma B) \frac{\d}{\d \gamma } \{ A > \gamma B \}  \right]
			\\
			&= -\Tr\left[ B \{ A > \gamma B \} \right] + \Tr\left[ (A - \gamma B) \frac{\d}{\d \gamma } \{ A > \gamma B \}  \right].
		\end{align}
		On the other hand, Lemma~\ref{lemm:HP14} shows that
		\begin{align}
			\frac{\d}{\d \gamma} \Tr\left[ (A - \gamma B)_+ \right]
			= - \Tr[B \{ A > \gamma B \} ].
		\end{align}
		This gives
		\begin{align}
			\Tr\left[ (A - \gamma B ) \frac{\d}{\d \gamma} \{A > \gamma B \} \right] = 0,
		\end{align}
		or equivalently,
		\begin{align}
			\frac{\d}{\d \gamma} \Tr\left[ A \{ A > \gamma B\} \right]
			&= \gamma \frac{\d}{\d \gamma} \Tr\left[ B \{ A > \gamma B \right].
		\end{align}
		Then,
		\begin{align}
			\int_{\gamma \not\in \mathcal{S}} g(\gamma) \, \d \Tr\left[ A \{ A > \gamma B \} \right]
			&= \int_{\gamma \not\in \mathcal{S}} g(\gamma) \frac{\d}{\d \gamma} \Tr\left[ A \{ A > \gamma B \} \right]  \d \gamma
			\\
			&= \int_{\gamma \not\in \mathcal{S}} \gamma g(\gamma) \frac{\d}{\d \gamma} \Tr\left[ B \{ A > \gamma B \} \right]  \d \gamma
			\\
			&= \int_{\gamma \not\in \mathcal{S}} \gamma g(\gamma) \, \d \Tr\left[ B \{ A > \gamma B \}\right] \label{eq:change-of-measure1}
		\end{align}
		by noting that both the maps $\gamma \mapsto \Tr\left[ A \{ A > \gamma B \}\right]$ and $\gamma \mapsto \Tr\left[ B \{ A > \gamma B \} \right]$ are absolutely continuous on $[0,\infty)\backslash \mathcal{S}$ as they are differentiable with uniformly bounded right derivatives on $[0,\infty)\backslash \mathcal{S}$.
		
		Next, we move on to the case of $\gamma \in \mathcal{S}$.
		For a map $\gamma \mapsto h(\gamma)$, we denote the jump at $\gamma$ by
		\begin{align}
			\delta(h(\gamma))
			:= h(\gamma^+) - h(\gamma^-).
		\end{align}
		for notational simplicity.
		We claim that
		\begin{align}
			\Tr\left[ (A - \gamma B) \cdot \delta\left( \Tr[A\{A>\gamma B\}] \right) \right],
		\end{align}
		which is equivalent to
		\begin{align} \label{eq:change-of-measure-jump}
			\Tr\left[ A \cdot \delta\left( \Tr[A\{A>\gamma B\}] \right) \right]
			= \gamma \Tr\left[ B \cdot \delta\left( \Tr[A\{A>\gamma B\}] \right) \right].
		\end{align}
		Indeed,
		\begin{DispWithArrows}[displaystyle]
			\delta\left( \Tr[A\{A>\gamma B\}] \right)
			&= \Tr\left[ A \left( \{A>\gamma^+ B \} - \{A > \gamma^- B \} \right) \right]
			\Arrow{\textnormal{by Lemma~\ref{lemm:differentiability_projection}}}
			\\
			&= \Tr\left[ A \left( \{A>\gamma B \} - \{A \geq \gamma B \} \right) \right]
			\\
			&= -\Tr\left[ A \{A - \gamma B = 0\} \right]
			\\ 
			&= -\gamma \Tr\left[ B \{A - \gamma B = 0\} \right]
			\\
			&= \gamma \left( \Tr\left[ B \left( \{A>\gamma B \} - \{A \geq \gamma B \} \right) \right] \right)
			\Arrow{\textnormal{by Lemma~\ref{lemm:differentiability_projection}}}
			\\
			&= \gamma \left( \Tr\left[ B \left( \{A>\gamma^+ B \} - \{A \geq \gamma^- B \} \right) \right] \right)
			\\
			&= \gamma \cdot \delta \left( \Tr[B\{A>\gamma B\}] \right),
		\end{DispWithArrows}
		which proves \eqref{eq:change-of-measure-jump}.
		
		To show the claim of the lemma, we let
		\begin{align}
			h_1(\gamma) &:= \Tr\left[ A \{A>\gamma B \} \right],
			\\
			h_2(\gamma) &:= \Tr\left[ B \{A>\gamma B \} \right],
		\end{align}
		and let $\bar{h}_1(\gamma)$ be a simple function (i.e., a linear combination of linear functions) such that $h_1 - \bar{h}_1$ is continuous.
		In other words,
		\begin{align} \label{eq:change_jump}
			\delta(h_1(\gamma) ) = \delta (\bar{h}_1(\gamma)), \quad \forall \gamma \in \mathcal{S}.
		\end{align}
		We define $\bar{h}_2$ similarly.
		Finally, we have
		\begin{DispWithArrows}[displaystyle]
			\int_0^\infty g(\gamma) \, \d h_1(\gamma) 
			&= \int_0^\infty g(\gamma) \, \d \bar{h}_1(\gamma)  + \int_0^\infty g(\gamma) \, \d \left( h_1(\gamma) - \bar{h}_1(\gamma)\right)
			\notag
			\\
			&= \sum_{\gamma \in \mathcal{S}} g(\gamma) \delta(\bar{h}_1(\gamma)) + \int_0^\infty g(\gamma) \frac{\d \left( h_1(\gamma) - \bar{h}_1(\gamma)\right) }{\d \gamma} \, \d \gamma
			\Arrow{\textnormal{by} \eqref{eq:change_jump}}
			\notag
			\\
			&= \sum_{\gamma \in \mathcal{S}} g(\gamma) \delta({h}_1(\gamma)) + \int_{0}^\infty g(\gamma) \frac{\d \left( h_1(\gamma) - \bar{h}_1(\gamma)\right) }{\d \gamma} \, \d \gamma
			\Arrow{\textnormal{by ignoring} $\gamma \in \mathcal{S}$}
			\notag
			\\
			&= \sum_{\gamma \in \mathcal{S}} g(\gamma) \delta({h}_1(\gamma)) + \int_{\gamma \not\in \mathcal{S}} g(\gamma) \frac{\d h_1(\gamma)  }{\d \gamma} \, \d \gamma
			\Arrow{\textnormal{by} \eqref{eq:change-of-measure1}}
			\notag
			\\
			&= \sum_{\gamma \in \mathcal{S}} g(\gamma) \delta({h}_1(\gamma)) + \int_{\gamma \not\in \mathcal{S}} \gamma g(\gamma) \frac{\d h_2(\gamma)  }{\d \gamma} \, \d \gamma
			\Arrow{\textnormal{by ignoring} $\gamma \in \mathcal{S}$}
			\notag
			\\
			&= \sum_{\gamma \in \mathcal{S}} g(\gamma) \delta({h}_1(\gamma)) + \int_{0}^\infty \gamma g(\gamma) \frac{\d \left( h_2(\gamma) - \bar{h}_2(\gamma) \right) }{\d \gamma} \, \d \gamma
			\Arrow{\textnormal{by} \eqref{eq:change-of-measure-jump}}
			\notag
			\\
			&= \sum_{\gamma \in \mathcal{S}} \gamma g(\gamma) \delta({h}_2(\gamma)) + \int_{0}^\infty \gamma g(\gamma) \frac{\d \left( h_2(\gamma) - \bar{h}_2(\gamma) \right)  }{\d \gamma} \, \d \gamma
			\notag
			\\
			&= \int_0^\infty \gamma g(\gamma) \, \d h_2(\gamma)
			\notag
		\end{DispWithArrows}
		proving our first claim.
		
		Note that 
		\begin{align}
			\d \Tr\left[ A \proj{ A > \gamma B } \right]
			&= - \d \Tr\left[ A \proj{ A \leq \gamma B } \right],
			\\
			\d \Tr\left[ B \proj{ A > \gamma B } \right]
			&= - \d \Tr\left[ B \proj{ A \leq \gamma B } \right].
		\end{align}
		The second claim follows.
		
		For the last claim,
		we apply integration by parts on both sides of \eqref{eq:change-of-measure1}
		to obtain
		\begin{align}
			&g(\gamma)\Tr[A\{A>\gamma B\}]\biggr\vert_0^\infty -\int_0^\infty g'(\gamma) \Tr[A\{A>\gamma B\}] \, \d \gamma\\
			&=\gamma g(\gamma)\Tr[A\{A>\gamma B\}]\biggr\vert_0^\infty -\int_0^\infty \left(g(\gamma)+\gamma g'(\gamma)\right) \Tr[A\{A>\gamma B\}] \, \d \gamma,
		\end{align}
		or
		\begin{align}
			\int_0^\infty g'(\gamma) \Tr[A\{A>\gamma B\}] \, \d \gamma=\int_0^\infty \left(g(\gamma)+\gamma g'(\gamma)\right) \Tr[A\{A>\gamma B\}] \, \d \gamma,
		\end{align}
		which concludes the proof.
	\end{proof}

	\begin{figure}[ht]
		\centering
		\includegraphics[width=0.5\linewidth]{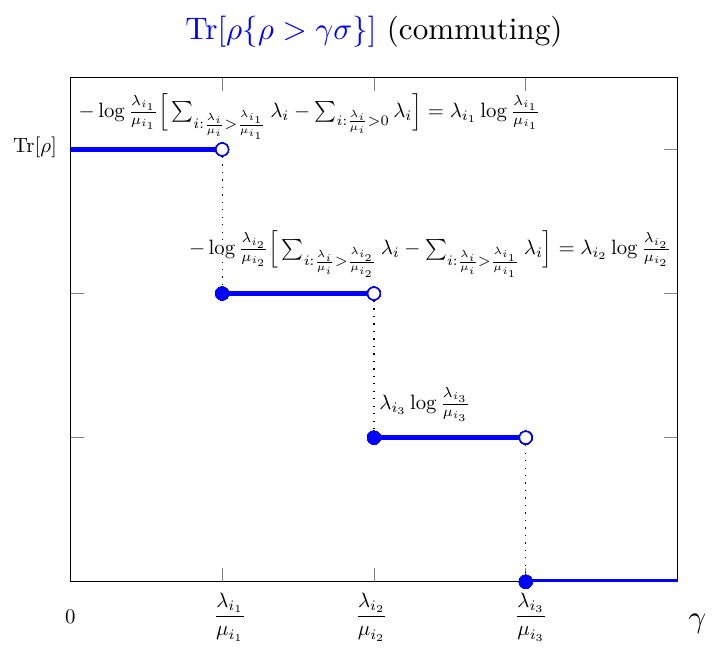}
		\caption{An illustration of the Riemann--Stieltjes integral representation for $D(\rho\Vert\sigma)= -\int_0^\infty \log \gamma \,\mathrm{d} \text{Tr}\left[\rho\{ \rho > \gamma \sigma \}\right]$ (Proposition~\ref{prop:relative_entropy_formula2}) for the commuting case: $\rho = \sum_i \lambda_i |i\rangle \langle i|$ and $ \sigma = \sum_i \mu_i |i\rangle \langle i|$.
			The plotted function in blue is $\gamma\mapsto \Tr[\rho\{\rho>\gamma \sigma\}]$.
			The indices $i_1, i_2,\ldots$ are chosen such that 
			${\lambda_{i_1}}/{\mu_{i_1}} \leq {\lambda_{i_2}}/{\mu_{i_2}} \leq \cdots$.
			The equation above each strip (partition) is $- \log \gamma_{i_k} \left( \text{Tr}\left[\rho\{ \rho > \gamma_{i_k} \sigma \}\right] - \text{Tr}\left[\rho\{ \rho > \gamma_{i_{k-1}} \sigma \}\right] \right) = \lambda_{i_k} \log \gamma_{i_k}$, where $\gamma_{i_k} = \lambda_{i_k}/\mu_{i_k}$ for some $k$.
			The overall sum over each partition is $\sum_k \lambda_{i_k} \log ({\lambda_{i_k}}/{\mu_{i_k}}) = D(\rho\Vert \sigma)$.
		} \label{figure:decreasing}
	\end{figure}
	
	\begin{figure}[ht]
		\centering
		\includegraphics[width=0.5\linewidth]{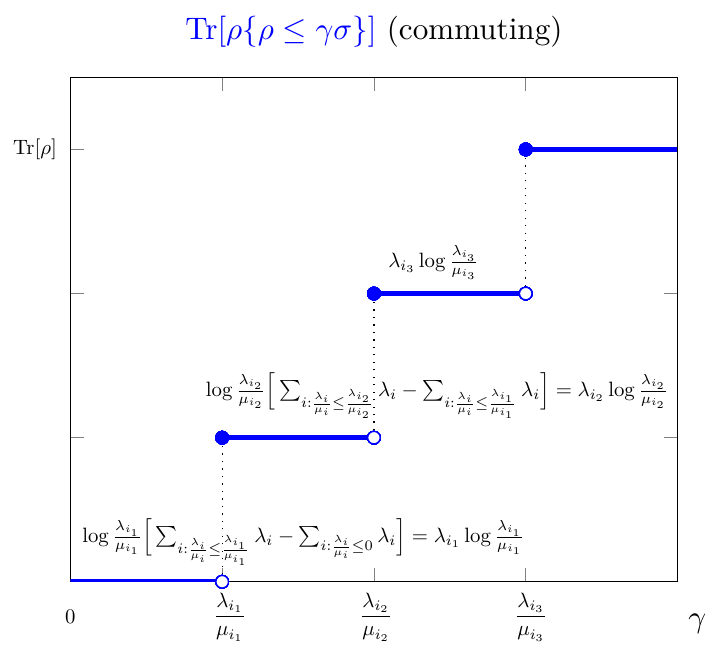}
		\caption{An illustration of the Riemann--Stieltjes integral representation for $D(\rho\Vert\sigma)= \int_0^\infty \log \gamma \,\mathrm{d} \text{Tr}\left[\rho\{ \rho \leq \gamma \sigma \}\right]$ (Proposition~\ref{prop:relative_entropy_formula2}) for the commuting case: $\rho = \sum_i \lambda_i |i\rangle \langle i|$ and $ \sigma = \sum_i \mu_i |i\rangle \langle i|$.
			The plotted function in blue is $\gamma\mapsto \Tr[\rho\{\rho<\gamma \sigma\}]$.
			The indices $i_1, i_2,\ldots$ are chosen such that 
			${\lambda_{i_1}}/{\mu_{i_1}} \leq {\lambda_{i_2}}/{\mu_{i_2}} \leq \cdots$.
			The equation above each strip (partition) is $\log \gamma_{i_k} \left( \text{Tr}\left[\rho\{ \rho \leq \gamma_{i_{k}} \sigma \}\right] - \text{Tr}\left[\rho\{ \rho \leq \gamma_{i_{k-1}} \sigma \}\right] \right) = \lambda_{i_k} \log \gamma_{i_k}$, where $\gamma_{i_k} = \lambda_{i_k}/\mu_{i_k}$ for some $k$.
			The overall sum over each partition is $\sum_k \lambda_{i_k} \log ({\lambda_{i_k}}/{\mu_{i_k}}) = D(\rho\Vert \sigma)$.
		}
		\label{figure:increasing}
	\end{figure}

	\newpage
	\section{Quantum \(f\)-divergence duality} \label{sec:f-duality}
	
	The classical $f$-divergence $D_f(P \Vert Q) = \mathds{E}_{Q}\left[ f\left(\frac{\d P}{\d Q}\right) \right]$ admits a variational formula via convex duality (see e.g.~\cite[\S 7.14]{Polyanskiy_book2024}):
	\begin{align}
		D_f(P \Vert Q ) 
		&= \sup_{g: \mathcal{X}\to \mathds{R}} \left\{  \mathds{E}_{P}[g(X)] - \mathds{E}_{Q}\left[ f^\star ( g(X) ) \right] \right\}, \label{eq:convex_conjugate}
	\end{align}
	where the real-valued function $f^\star$ is the \emph{convex conjugate} (or \emph{Legendre--Fenchel transform}) of the convex function $f$ defined as
	\begin{align}
		f^\star(y) := \sup_{x \in \texttt{dom}(f)} \left\{ y x - f(x) \right\}.
	\end{align}
	It is well known that $f^\star$ is convex and continuous on its domain, and a biconjugation relation holds if $f$ is convex, i.e.,~$(f^\star)^\star = f$.
	Note that if $f$ is continuous and convex, then for fixed $y$ the optimizer is $x={f'}^{-1}(y)$.

	Via the Riemann--Stieltjes integral representation in Section~\ref{sec:relative_entropy_RS}, we show below that the quantum $f$-divergence naturally inherits such a variational representation
	via the Riemann--Stieltjes integral formula given in Proposition~\ref{prop:f-divergence_RS}.
	
	\begin{theo}[Quantum $f$-divergence duality] \label{theorem:f-duality}
		For any convex differentiable function $f$ on $[0,\infty)$,
		\begin{align}
			D_f(\rho \Vert \sigma)
			&= \sup_{g: [0,\infty)\to \mathds{R}}  \left\{
			\int_0^\infty g(\gamma) \, \d \Tr[\rho\{\rho\leq\gamma \sigma] - \int_0^\infty f^\star(g(\gamma)) \, \d \Tr[\sigma \{\rho\leq\gamma\sigma\}] \right\}. 
		\end{align}
	\end{theo}
	\begin{proof}
		Using the biconjugation $(f^\star)^\star = f$, the Riemann--Stieltjes integral representation given in \eqref{eq:f-divergence_RS_1} implies that
		\begin{align}
			D_f(\rho \Vert \sigma)
			&=   \int_{0}^{\infty} f(\gamma)\,  \d\Tr\left[ \sigma  \left\{  \rho \leq \gamma \sigma \right\} \right] 
			\\
			&= \int_{0}^{\infty} \sup_{y\in \texttt{dom}(f^\star)} \left\{ y \gamma - f^\star(y) \right\}  \d\Tr\left[ \sigma  \left\{  \rho \leq \gamma \sigma \right\} \right] 
			\\
			&\geq \int_{0}^{\infty}  g(\gamma) \gamma - f^\star(g(\gamma))\, \d\Tr\left[ \sigma  \left\{  \rho \leq \gamma \sigma \right\} \right] 
		\end{align}
		for all measurable functions $g: [0,\infty)\to \mathds{R}$. 
		Here, the inequality ``$\geq$'' because the map $\gamma \mapsto \Tr\left[ \sigma  \left\{  \rho \leq \gamma \sigma \right\} \right]$ is nondecreasing by Lemma~\ref{lemm:monotone_in_gamma}. 
		
		By a change of measure in Lemma~\ref{lemm:change-of-measure}, i.e.,
		\begin{align}
			\int_0^\infty g(\gamma) \gamma \d\Tr\left[ \sigma  \left\{  \rho \leq \gamma \sigma \right\} \right] 
			= \int_0^\infty g(\gamma) \d\Tr\left[ \rho  \left\{  \rho \leq \gamma \sigma \right\} \right],
		\end{align}
		and taking the supremum over all measurable functions $g: [0,\infty)\to \mathds{R}$, we prove one direction.
		
		Note that a convex conjugate in \eqref{eq:convex_conjugate} can be achieved by the derivative of $f$, i.e.,~$y = f'(x)$.
		Hence, the lower bound is achieved by $g(\gamma) = f'(\gamma)$.
	\end{proof}

	It is well known that the convex conjugate of $f(x) = x\log x$ is $f^\star(y) = \e^{y-1}$.
	Theorem~\ref{theorem:f-duality} implies the following variational formula.
	\begin{coro}[Variational representation for Umegaki's relative entropy via the Riemann--Stieltjes integral]
		For finite-dimensional quantum states $\rho$ and $\sigma$ satisfying $D_{\max}(\rho \Vert \sigma) < \infty $, 
		\begin{align} \label{eq:variational-RS}
			D(\rho\Vert\sigma)
			&=  \sup_{g: [0,\infty)\to \mathds{R}}  \left\{
			\int_0^\infty g(\gamma) \, \d \Tr[\rho\{\rho\leq\gamma \sigma] - \int_0^\infty \e^{g(\gamma)-1} \, \d \Tr[\sigma \{\rho \leq \gamma\sigma\}] \right\}.
		\end{align}
	\end{coro}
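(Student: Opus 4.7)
The plan is to apply the general quantum $f$-divergence duality (Theorem~\ref{theorem:f-duality}) with the specific choice $f(x) = x\log x$, extended to $[0,\infty)$ by $f(0) = 0$. This function is convex on $[0,\infty)$ and differentiable on $(0,\infty)$ with $f'(x) = 1 + \log x$.

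First I would compute the convex conjugate. For each $y \in \mathds{R}$, the map $x \mapsto yx - x\log x$ is strictly concave on $(0,\infty)$ and its derivative $y - 1 - \log x$ vanishes uniquely at $x = \e^{y-1}$; substituting back gives
\begin{align*}
    f^\star(y) \;=\; y\, \e^{y-1} - \e^{y-1}(y-1) \;=\; \e^{y-1},
\end{align*}
which is finite on all of $\mathds{R}$.

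Next I would identify $D_{x\log x}(\rho\Vert\sigma)$ with Umegaki's relative entropy. Combining the Riemann--Stieltjes representation of Proposition~\ref{prop:f-divergence_RS} with \eqref{eq:second-row} of Proposition~\ref{prop:relative_entropy_formula2} yields
\begin{align*}
    D_{x\log x}(\rho\Vert\sigma) \;=\; \int_0^\infty \gamma \log \gamma \, \d \Tr\!\left[\sigma\{\rho\leq \gamma\sigma\}\right] \;=\; D(\rho\Vert\sigma),
\end{align*}
under the hypothesis $D_{\max}(\rho\Vert\sigma) < \infty$, which bounds the support of the induced Riemann--Stieltjes measure and keeps all quantities finite. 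Plugging $f^\star(y) = \e^{y-1}$ into Theorem~\ref{theorem:f-duality} then produces the claimed identity~\eqref{eq:variational-RS} immediately.

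The one mild technical point will be the soft singularity $f'(x) \to -\infty$ as $x \searrow 0$, which matters only when $\rho$ is rank-deficient and the measure $\d \Tr[\sigma\{\rho\leq\gamma\sigma\}]$ may place mass at $\gamma = 0$. Since $f^\star$ is globally finite, the variational problem is well-posed; attainment at $g(\gamma) = f'(\gamma) = 1 + \log\gamma$ holds away from $\gamma = 0$, while the change-of-measure identity in Lemma~\ref{lemm:change-of-measure} relates $\int g(\gamma)\,\gamma\,\d\Tr[\sigma\{\rho\leq\gamma\sigma\}]$ to $\int g(\gamma)\,\d\Tr[\rho\{\rho\leq\gamma\sigma\}]$, so any atom at $\gamma=0$ is multiplied by $\gamma = 0$ and contributes nothing. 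If one prefers, the same conclusion follows by approximation $\rho \mapsto \rho + \epsilon\I$ and passage to the limit $\epsilon \searrow 0$ using continuity of the relative entropy, exactly as in the proof of Proposition~\ref{prop:relative_entropy_formula2}.
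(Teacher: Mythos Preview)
Your proposal is correct and follows essentially the same approach as the paper: apply Theorem~\ref{theorem:f-duality} with $f(x)=x\log x$, compute $f^\star(y)=\e^{y-1}$, and identify $D_{x\log x}(\rho\Vert\sigma)$ with Umegaki's relative entropy. The paper's own argument is actually terser (it simply states the convex conjugate and invokes the theorem), so your additional care about the identification via Propositions~\ref{prop:f-divergence_RS} and~\ref{prop:relative_entropy_formula2} and the behavior at $\gamma=0$ is more detail than strictly required but entirely appropriate.
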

	
	\begin{remark}
		The variational formula in \eqref{eq:variational-RS} is not the same as the Donsker--Varadhan expression \cite{Pet88}:
		\begin{align}
			D(\rho\Vert\sigma)
			= \sup_{\omega >0} \left\{  \Tr[ \rho \log \omega ] - \log \Tr\left[ \e^{\log \sigma + \log \omega }\right] \right\}.
		\end{align}
		(which is also mentioned even in the classical case \cite[Example 7.6]{Polyanskiy_book2024}),
		in the sense that the former is formed by the scalar inner product while the latter is formed by the Hilbert--Schmidt inner product.
	\end{remark}
	
	The convex conjugate of $f(x)=\frac12 |x-1|$ is $f^\star(y) = y$ for $|y|\leq \frac12$ and $f^\star(y) = \infty$, otherwise.
	\begin{coro}
		\begin{align}
			\frac{1}{2}\left\| \rho - \sigma \right\|_1
			&= \sup_{ \|g\|_\infty \leq 1} \frac12 \int_0^\infty g(\gamma) \, \d \Tr[(\sigma-\rho) \{\rho\leq\gamma\sigma\}]. 
		\end{align}
	\end{coro}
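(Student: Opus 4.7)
The plan is to derive the corollary directly from Theorem~\ref{theorem:f-duality} by specialising the $f$-divergence duality to $f(x) = \tfrac{1}{2}|x-1|$, and then rescaling the test function to get the form stated. There are three technical pieces to set up before the corollary drops out: identifying $D_f$ with the trace distance, controlling the convex conjugate on the full domain $[0,\infty)$, and dealing with the mild non-smoothness of $|x-1|$ at $x=1$.

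First I would verify that for $f(x) = \tfrac12|x-1|$ we have $D_f(\rho\Vert\sigma) = \tfrac12\|\rho-\sigma\|_1$. Since $f(0) = \tfrac12$, $f'(\gamma) = -\tfrac12$ on $(0,1)$ and $f'(\gamma) = \tfrac12$ on $(1,\infty)$, the layer cake formula \eqref{Eq:Def-1-dif} together with the normalisation $\int_0^\infty \Tr[\sigma\{\rho>\gamma\sigma\}]\,\d\gamma = 1$ (a consequence of Lemma~\ref{Lem:HS-derivative}) gives
\begin{align*}
D_f(\rho\Vert\sigma) = 1 - \int_0^1 \Tr[\sigma\{\rho>\gamma\sigma\}]\,\d\gamma = E_1(\rho\Vert\sigma) = \tfrac12\|\rho-\sigma\|_1,
\end{align*}
where the middle equality integrates $E_\gamma'(\rho\Vert\sigma) = -\Tr[\sigma\{\rho>\gamma\sigma\}]$ from $0$ to $1$, and the last uses the identity $E_1 = \tfrac12\|\rho-\sigma\|_1$ listed in the Summary. (The non-smoothness of $f$ at $x=1$ is harmless here: the derivative is piecewise constant and the integral is perfectly well defined; alternatively one may smooth $f$ and pass to the limit.)

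Next I would compute $f^\star(y) = \sup_{x\geq 0}\{yx - \tfrac12|x-1|\}$. On the two pieces $x\in[0,1]$ and $x\geq 1$ the objective is affine with slopes $y+\tfrac12$ and $y-\tfrac12$. A quick case split gives $f^\star(y) = y$ for $y\in[-\tfrac12,\tfrac12]$, $f^\star(y) = +\infty$ for $y > \tfrac12$, and $f^\star(y) = -\tfrac12$ for $y < -\tfrac12$. Although $f^\star$ is not literally $+\infty$ below $-\tfrac12$ (since $\mathrm{dom}(f) = [0,\infty)$ is bounded below), truncating any test function $g$ to $[-\tfrac12,\tfrac12]$ weakly increases the duality objective of Theorem~\ref{theorem:f-duality}, so the supremum may without loss of generality be taken over $\|g\|_\infty \leq \tfrac12$, on which $f^\star(g) = g$.

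Finally I would assemble the pieces. Applying Theorem~\ref{theorem:f-duality} and combining the two Riemann--Stieltjes integrals gives
\begin{align*}
\tfrac12\|\rho-\sigma\|_1 = \sup_{\|g\|_\infty \leq 1/2} \int_0^\infty g(\gamma)\, \d\Tr\bigl[(\rho-\sigma)\{\rho\leq\gamma\sigma\}\bigr].
\end{align*}
Substituting $g = -h/2$ (which is a bijection of $\{\|g\|_\infty\leq\tfrac12\}$ with $\{\|h\|_\infty\leq 1\}$) and pulling the $\tfrac12$ outside the supremum yields exactly the stated identity. The only point that requires care is the convex-conjugate computation on the half-line domain $[0,\infty)$ (the truncation argument above), and the minor issue of non-differentiability of $f$ at $1$ when invoking Theorem~\ref{theorem:f-duality}; both are handled cleanly by the monotonicity of $\gamma \mapsto \Tr[\sigma\{\rho\leq\gamma\sigma\}]$ guaranteed by Lemma~\ref{lemm:monotone_in_gamma} and by a standard smoothing of $|x-1|$ if desired.
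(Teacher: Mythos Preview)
Your proposal is correct and follows essentially the same route as the paper: apply Theorem~\ref{theorem:f-duality} with $f(x)=\tfrac12|x-1|$, identify $D_f$ with the trace distance, compute $f^\star$, and rescale. In fact you are more careful than the paper on two points it glosses over: that $f^\star(y)=-\tfrac12$ (not $+\infty$) for $y<-\tfrac12$ when $\operatorname{dom}(f)=[0,\infty)$, handled by your truncation argument; and the non-differentiability of $f$ at $1$, which Theorem~\ref{theorem:f-duality} formally excludes but which you address by smoothing.
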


	The convex conjugate of $f(x) = (x-1)^2$ is $f^\star(y) = y + \frac{y^2}{4}$.
	\begin{coro}
		\begin{align}
			\chi^2(\rho\Vert\sigma)
			&= \sup_{g:[0,\infty)\to\mathds{R}} \left\{ \int_0^\infty g(\gamma) \, \d\Tr[\rho\proj{\rho\leq\gamma \sigma}]
			-  \int_0^\infty \left( g(\gamma) + \frac{g(\gamma)^2}{4} \right) \, \d \Tr[\sigma \proj{\rho\leq\gamma \sigma}]
			\right\} \\
			&= \sup_{g:[0,\infty)\to\mathds{R}} \left\{  \int_0^\infty g(\gamma) \, \d\Tr[\rho\proj{\rho\leq\gamma \sigma}]
			-  \int_0^\infty \left( 1 + \frac{g(\gamma)^2}{4} \right) \, \d \Tr[\sigma \proj{\rho\leq\gamma \sigma}]
			\right\}.
		\end{align}
	\end{coro}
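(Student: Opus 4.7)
The plan is to invoke Theorem~\ref{theorem:f-duality} with two equivalent generating functions for the $\chi^2$-divergence. \textbf{For the first display}, I would take $f(x)=(x-1)^2$, so that $D_f(\rho\Vert\sigma)=\chi^2(\rho\Vert\sigma)$. Solving the stationarity condition $y=2(x-1)$ yields the optimizer $x=1+y/2$ and hence $f^\star(y)=y+y^2/4$; substituting this into Theorem~\ref{theorem:f-duality} produces the first formula verbatim. \textbf{For the second display}, I would instead apply the theorem to the alternative (but still admissible) generator $\tilde f(x)=x^2-1$, which is convex and differentiable on $[0,\infty)$ and satisfies $\tilde f(1)=0$; its conjugate is $\tilde f^\star(y)=1+y^2/4$ (optimizer $x=y/2$), and Theorem~\ref{theorem:f-duality} then produces exactly the second formula, save that $D_{\tilde f}(\rho\Vert\sigma)$ appears on the left hand side in place of $\chi^2(\rho\Vert\sigma)$.

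It remains to identify $D_{\tilde f}(\rho\Vert\sigma)=D_f(\rho\Vert\sigma)=\chi^2(\rho\Vert\sigma)$. Since $\tilde f(x)-f(x)=2(x-1)$ and the layer cake representation \eqref{eq:defn:f} depends linearly on the generator through $f(0)$ and $f'$, this reduces to proving $D_{x-1}(\rho\Vert\sigma)=0$. Substituting $f_0(x)=x-1$, with $f_0(0)=-1$ and $f_0'\equiv 1$, into \eqref{eq:defn:f} yields
\begin{align*}
D_{x-1}(\rho\Vert\sigma)=-1+\int_0^\infty\Tr\left[\sigma\{\rho>\gamma\sigma\}\right]\d\gamma.
\end{align*}
By Theorem~\ref{theo:Dlog_formula}, the remaining integral equals $\Tr[\sigma\cdot\mathrm{D}\log[\sigma](\rho)]$; using the integral formula $\mathrm{D}\log[A](B)=\int_0^\infty(A+t\I)^{-1}B(A+t\I)^{-1}\,\d t$ together with cyclicity of the trace, this rewrites as $\Tr[\rho\cdot\mathrm{D}\log[\sigma](\sigma)]=\Tr[\rho]=1$, giving $D_{x-1}(\rho\Vert\sigma)=0$.

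I do not foresee a serious obstacle. The entire plan reuses machinery already established earlier in the paper (the $f$-divergence duality of Theorem~\ref{theorem:f-duality} and the operator layer cake theorem~\ref{theo:Dlog_formula}), and the only mildly delicate point is verifying the linearity of $D_f$ in $f$, which is transparent from the explicit integral formula in \eqref{eq:defn:f}.
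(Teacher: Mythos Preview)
Your proposal is correct and follows essentially the same route as the paper: apply Theorem~\ref{theorem:f-duality} with $f(x)=(x-1)^2$ for the first line and with $\tilde f(x)=x^2-1$ for the second. The paper leaves the identification $D_{\tilde f}=\chi^2$ implicit, whereas you spell it out via $D_{x-1}=0$; note that this last step is already available as \eqref{eq:integral_HS} in Lemma~\ref{Lem:HS-derivative}, so you need not re-derive it from Theorem~\ref{theo:Dlog_formula}. The paper also mentions a second, slightly slicker derivation of the second line: simply substitute $g\to g-2$ in the first variational formula and simplify, which avoids introducing a second generator altogether.
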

	\begin{proof}
		Here, the second line follows either from $f(x)=x^2-1$ with $f^\star(y) = 1 + \frac{y^2}{4}$, or from the first line by substituting $g\rightarrow g-2$. 
	\end{proof}
	Further examples can be derived, e.g. for the Hellinger divergence we have $f^\star(y)=(\frac{\alpha-1}{\alpha}y)^{\frac{\alpha}{\alpha-1}}+\frac{1}{\alpha-1}$, which generalizes the result above for the $\chi^2$ divergence.

	
	
	
	
	\section{Conclusions} \label{sec:conclusions}
	
	In this work, we took a new approach towards defining quantum generalizations of classical divergences. These lead us to the recently introduced $f$-divergences defined via integral representations. The new approach allows us to find several new representations and resulting properties. In particular, we prove a trace representation for the \Renyi divergence with $\alpha>1$ conjectured in~\cite{beigi2025some}. As an information theoretic application we give bounds on error exponents in hypothesis testing, unifying and simplifying earlier approaches. 
	
	Different representations of quantum divergences have proven useful at many points in the literature, and we also expect our results to be useful beyond the results given here. Especially, the close connection to hypothesis testing suggest a fundamental nature of the here explored \Renyi divergences. Finding further applications in tighter bounds on information theoretic problems would be interesting.

	\section*{Acknowledgments}
	CH received funding by the Deutsche Forschungsgemeinschaft (DFG, German
	Research Foundation) – 550206990.
	PL and HC are supported under grants 113-2119-M-007-006, 113-2119-M-001-006, NSTC 114-2124-M-002-003, NTU-113V1904-5, NTU-CC-113L891605, NTU-
	113L900702, NTU-114L900702, and NTU-114L895005.

	
	\appendix
	\section{Proof of Theorem~\ref{Theo:trace-renyi-PO}} \label{sec:appendix_proof}
	\begin{proof}
		Let $A$ and $B$ be positive semi-definite operators.
		Recall,  
		\begin{align}
			H_{\alpha}(A\Vert B)
			&:= \Tr[A-B] + \alpha \int_1^\infty \gamma^{\alpha-2} E_{\gamma}(A\Vert B) + \gamma^{-\alpha -1} E_{\gamma}(B\Vert A) \d \gamma,
		\end{align}
		where $E_{\gamma}(A\Vert B) := \Tr\left[ (A-B)_+\right]$.
		
		We first simplify the expression for $H_{\alpha}(A\Vert B)$ as follows
		\begin{align}
			H_{\alpha}(A\Vert B) - \Tr[A-B]
			&= \alpha \int_1^\infty \Tr\left[ \gamma^{\alpha-2} (A - \gamma B) \proj{ A > \gamma B} \d \gamma \right] + \gamma^{-\alpha-1} \Tr\left[ (B - \gamma A) \proj{B> \gamma A} \right] \d \gamma.
			\notag
			\\
			\begin{split} \label{eq:order_above_1_H4}
				&= \alpha \int_1^\infty \gamma^{\alpha-2} \Tr\left[ A \proj{A > \gamma B } \right] \d \gamma  - \alpha \int_1^\infty \gamma^{\alpha-1} \Tr\left[ B \proj{A > \gamma B} \right] \d \gamma
				\\ 
				&+ \alpha \int_1^\infty \gamma^{-\alpha-1} \Tr\left[ B \proj{B>\gamma A} \right] \d \gamma - \alpha \int_1^\infty \gamma^{-\alpha} \Tr\left[ A \proj{B>\gamma A} \right] \d \gamma.
			\end{split}
		\end{align}
		
		We simplify the last two terms of \eqref{eq:order_above_1_H4} as follows
		\begin{align}
			\alpha \int_1^\infty \gamma^{-\alpha-1} \Tr\left[ B \proj{B>\gamma A} \right] \d \gamma 
			&=\alpha \int_0^1 \gamma^{\alpha-1} \Tr\left[ B \proj{ \gamma B> A} \right] \d \gamma
			&& ( \gamma \leftarrow \gamma^{-1} )
			\notag
			\\
			&= 
			\alpha \int_0^1 \gamma^{\alpha-1} \Tr[B] \d \gamma
			- \alpha \int_0^1 \gamma^{\alpha-1} \Tr\left[ B \proj{ \gamma B \leq A} \right] \d \gamma
			\notag
			\\
			&=\Tr[B]
			- \alpha \int_0^1 \gamma^{\alpha-1} \Tr\left[ B \proj{ \gamma B < A} \right] \d \gamma
			\label{eq:order_above_1_H4-(3)}
		\end{align}
		and
		\begin{align}
			\alpha \int_1^\infty \gamma^{-\alpha} \Tr\left[ A \proj{B>\gamma A} \right] \d \gamma 
			&=\alpha \int_0^1 \gamma^{\alpha-2} \Tr\left[ A \proj{ \gamma B> A} \right] \d \gamma
			&& ( \gamma \leftarrow \gamma^{-1} )
			\notag
			\\
			&= 
			\alpha \int_0^1 \gamma^{\alpha-2} \Tr[A] \d \gamma
			- \alpha \int_0^1 \gamma^{\alpha-2} \Tr\left[ A \proj{ \gamma B \leq A} \right] \d \gamma
			\notag
			\\
			&=
			\frac{\alpha}{\alpha-1} \Tr[A] - \alpha \int_0^1 \gamma^{\alpha-2} \Tr\left[ A \proj{ \gamma B < A} \right] \d \gamma,
			\label{eq:order_above_1_H4-(4)}
		\end{align}
		
		Substituting \eqref{eq:order_above_1_H4-(3)} and \eqref{eq:order_above_1_H4-(4)} into \eqref{eq:order_above_1_H4}, we obtain
		\begin{align}
			H_{\alpha}(A\Vert B)
			&= \frac{-1}{\alpha-1}\Tr[A]
			+ \alpha \int_0^\infty \gamma^{\alpha-2} \Tr\left[ A \proj{A > \gamma B } \right] \d \gamma  - \alpha \int_0^\infty \gamma^{\alpha-1} \Tr\left[ B \proj{A > \gamma B} \right] \d \gamma.
			\label{eq:order_above_1_H_formula}
		\end{align}
		We rewrite the last two terms as follows.
		Applying operator change of variables (Theorem~\ref{theo:change-of-variable}), we get
		\begin{align}
			\int_0^\infty \Tr\left[ A \proj{ A > \gamma B} \right] \gamma^{\alpha-2} \d \gamma
			&= \int_0^\infty \Tr\left[ A \frac{1}{B+t\I} A \frac{1}{B+t\I} \left( A \frac{1}{B+t\I} \right)^{\alpha-2} 
			\right] \d t
			\notag
			\\
			&= \int_0^\infty \Tr\left[\left( A \frac{1}{B+t\I} \right)^{\alpha} 
			\right] \d t,
			\label{eq:order_above_1_H4-(1)}
		\end{align}
		and
		\begin{align}
			\int_0^\infty \Tr\left[ B \proj{ A > \gamma B} \right] \gamma^{\alpha-1} \d \gamma
			&= \int_0^\infty \Tr\left[ B \frac{1}{B+t\I} A \frac{1}{B+t\I} \left( A \frac{1}{B+t\I} \right)^{\alpha-1} 
			\right] \d t
			\notag
			\\
			&= \int_0^\infty \Tr\left[ B \frac{1}{B+t\I} \left( A \frac{1}{B+t\I} \right)^{\alpha} 
			\right] \d t
			\notag
			\\
			&= \frac{\alpha-1}{\alpha} \int_0^\infty \left( A \frac{1}{B + t\I} \right)^\alpha \d t,
			\label{eq:order_above_1_H4-(2)}
		\end{align}
		where the last line follows from Lemma~\ref{lemm:order_above_1}.
		
		Combining \eqref{eq:order_above_1_H4-(1)} and \eqref{eq:order_above_1_H4-(2)}, we obtain
		\begin{align}
			H_{\alpha}(A\Vert B)
			&= \frac{-1}{\alpha-1}\Tr[A]
			+ \alpha \int_0^\infty \left( A \frac{1}{B + t\I} \right)^\alpha \d t
			- (\alpha-1) \int_0^\infty \left( A \frac{1}{B + t\I} \right)^\alpha \d t
			\notag
			\\
			&= \frac{-1}{\alpha-1}\Tr[A]
			+ \int_0^\infty \left( A \frac{1}{B + t\I} \right)^\alpha \d t.\notag
		\end{align} 
	\end{proof}


	\newpage
	{\larger
		\bibliographystyle{myIEEEtran}
		\bibliography{operator2.bib}
	}
	
\end{document}